\definecolor{darkgreen}{rgb}{0.0, 0.5, 0.0}
\renewcommand{\P}[1]{\mathbb{P}\left[#1\right]}
\newcommand{\I}[1]{\mathbb{I}\left[#1\right]}
\newcommand\indep{\protect\mathpalette{\protect\independenT}{\perp}}\def\independenT#1#2{\mathrel{\rlap{$#1#2$}\mkern2mu{#1#2}}}
\begin{document}

\title{Conformal Survival Bands for Risk Screening under Right-Censoring}
\author{\Name{Matteo Sesia}\Email{sesia@marshall.usc.edu}\\
  \addr{University of Southern California, Los Angeles, California, USA}\\
  \Name{Vladimir Svetnik}\Email{vladimir\_svetnik@merck.com}\\
  \addr{Merck \& Co., Inc., Rahway, New Jersey, USA}
}


\maketitle

\begin{abstract}
We propose a method to quantify uncertainty around individual survival distribution estimates using right-censored data, compatible with any survival model. Unlike classical confidence intervals, the survival bands produced by this method offer predictive rather than population-level inference, making them useful for personalized risk screening. For example, in a low-risk screening scenario, they can be applied to flag patients whose survival band at 12 months lies entirely above 50\%, while ensuring that at least half of flagged individuals will survive past that time on average. Our approach builds on recent advances in conformal inference and integrates ideas from inverse probability of censoring weighting and multiple testing with false discovery rate control. We provide asymptotic guarantees and show promising performance in finite samples with both simulated and real data.

\end{abstract}

\thispagestyle{empty}   

\begin{keywords}
  Censored Data, Conformal Inference, False Discovery Rate, Predictive Calibration, Survival Analysis, Uncertainty Estimation.
\end{keywords}

\section{Introduction} \label{sec:intro}

\subsection{Background and Motivation}

Survival analysis focuses on data involving time-to-event outcomes, such as the time until death or disease relapse in medicine, or mechanical failure in engineering. Its defining challenge is \emph{censoring}, which occurs when the event of interest is not observed for all individuals. In the common case of right-censoring, we only know that the event has not occurred up to a certain time, beyond which the individual is no longer followed. For example, a cancer patient may still be alive at their last clinical follow-up, five years after diagnosis, but their true time of death is unknown because no data are available after that point. In this case, the patient’s censoring time is five years and their survival time is unobserved.

In many applications, the goal is to use fitted survival models to generate personalized inferences in the form of \emph{individual survival curves}. These curves estimate, for an individual with specific features, the probability of remaining event-free beyond any future time point. For example, based on a patient’s medical history, a model might predict a 90\% chance of surviving past one year, 75\% past three years, and 40\% past five years. These predictions can be intuitively visualized as a decreasing curve over time. Such personalized survival curves are widely used to guide decisions, such as identifying high-risk patients for early intervention or recognizing low-risk individuals who may safely avoid aggressive treatment.

Traditional approaches to survival analysis rely on statistical models that support uncertainty quantification through confidence intervals and hypothesis testing. These include parametric models, semi-parametric models, and nonparametric estimators such as the Kaplan--Meier (KM) curve \citep{kaplan1958nonparametric}. Parametric models, such as the exponential or Weibull distributions, assume a specific form for how the event probability evolves over time. Semi-parametric models, most notably the Cox proportional hazards model \citep{cox1972regression}, relax this assumption by leaving the baseline distribution unspecified, while still imposing a fixed relationship between covariates and risk. The Kaplan--Meier estimator, by contrast, avoids strong modeling assumptions altogether, but estimates only population-level survival curves and does not incorporate covariate information.

Although classical survival models have seen many successful applications, they can be limiting in modern settings that involve large datasets with rich covariate information. When the relationship between patient features and survival outcomes is complex or poorly understood, strong modeling assumptions—such as proportional hazards or specific parametric distributions—may be difficult to justify. In these situations, it is often more practical to treat classical models as black-box predictors: tools that generate individual survival curves, but whose internal assumptions are not relied on for inference. This perspective also motivates the growing use of more flexible, data-driven approaches—particularly machine learning methods such as random survival forests \citep{Ishwaran2008}, deep neural networks \citep{katzman2018deepsurv}, and gradient boosting \citep{barnwal2022survival}—which can model complex relationships with high-dimensional covariates \citep{spooner2020comparison}.

While black-box models can produce accurate and personalized survival predictions, they typically lack principled methods for uncertainty quantification. Classical statistical models offer confidence intervals and hypothesis tests, but—as discussed above—their guarantees rely on strong distributional assumptions that may not hold in practice. Machine learning models, by contrast, often omit uncertainty estimates altogether; and when such measures are available, they are typically heuristic and lack formal statistical justification. This limitation is particularly concerning in high-stakes applications such as clinical triage or treatment planning, where decisions must rely on calibrated, trustworthy risk estimates. There is therefore a need for widely applicable statistical tools that can provide rigorous, distribution-free uncertainty quantification for black-box survival models.

\subsection{Preview of Our Contributions}

We develop a method to construct principled and interpretable ``uncertainty bands'' around individual survival curves estimated by black-box models, using right-censored data. These bands, which we call \textit{conformal survival bands}, are designed to support clinical screening tasks within a {\em multiple testing} framework and are rigorously calibrated in a useful predictive sense, as explained below.

Here, ``multiple testing'' refers to the fact that our method constructs uncertainty bands for each of several individuals in a test set by processing the entire test set jointly. This is analogous to how the Benjamini–Hochberg procedure processes all $p$-values for multiple hypotheses at once to control the false discovery rate (FDR) \citep{benjamini1995controlling}. The multiple testing nature of our approach will become clearer once we present its details and the theoretical guarantees it provides.

A preview of our method is provided in Figure~\ref{fig:calibration_band_example}, which displays conformal survival bands for four test patients in a simulated dataset. At first glance, these bands may resemble personalized versions of classical confidence intervals—such as those drawn around Kaplan--Meier survival curves—but their interpretation is different. Classical confidence intervals aim to estimate population-level quantities like the true survival probability at a given time, which is typically only possible under strong parametric assumptions or after binning the covariates into a small number of population subgroups. Our bands, by contrast, are fully nonparametric and support individualized inferences without requiring covariate binning.

\begin{figure}[!htb]
    \centering
    \includegraphics[width=\textwidth]{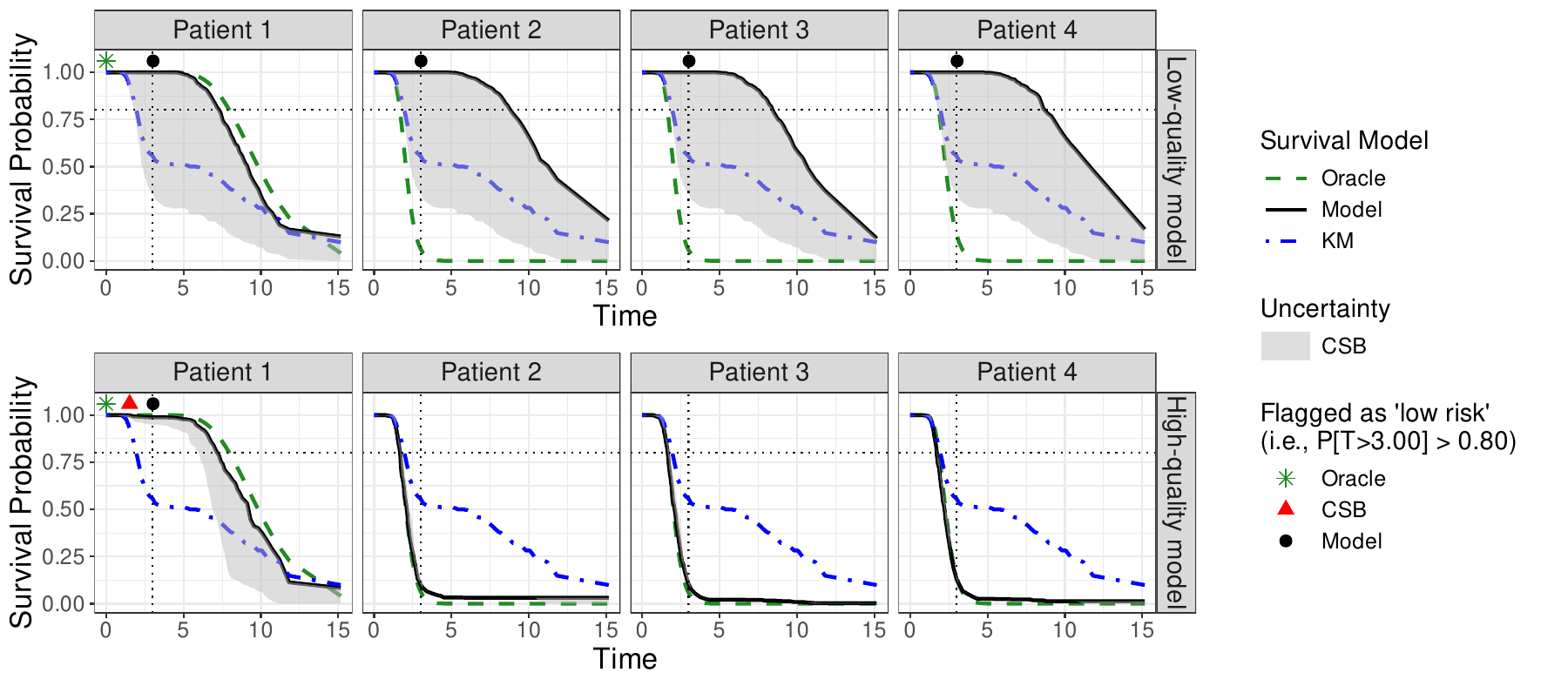}
    \caption{Illustration of the use of conformal survival bands (shaded regions) for screening test patients in simulated censored data.
Solid black curves show survival estimates from either an inaccurate (top) or accurate (bottom) survival forest model. Dashed green curves represent the true survival probabilities. The goal is to identify low-risk patients—those with more than 80\% probability of surviving beyond 3 years (vertical line). A patient is flagged by our method if their entire conformal band lies above the 80\% threshold (horizontal line) at this time, ensuring at least 80\% of flagged individuals survive longer. Flagging decisions are indicated by colored markers: red triangles (our method), green asterisks (oracle), and black circles (estimated model). Kaplan-Meier makes no selections here. Patients 2--4 are mistakenly flagged as low risk by the inaccurate model. With the accurate model, our method can identify truly low-risk patients (e.g., Patient 1), unlike the KM curve.
}
    \label{fig:calibration_band_example}
\end{figure}

Rather than attempting to directly estimate population-level parameters, our bands are calibrated for distribution-free predictive screening. They allow practitioners to identify individuals whose estimated survival probability lies below or above a clinically meaningful threshold, while guaranteeing that among all such flagged individuals, the false discovery rate is approximately controlled in a precise predictive sense. This is why our method is intrinsically a multiple testing approach that processes the entire test set jointly. 

For example, in Figure~\ref{fig:calibration_band_example}, we illustrate a scenario where one wishes to identify ``low-risk'' patients—defined as those with more than an $p=80\%$ chance of surviving beyond time $t=3$ years. A patient is flagged if their conformal band lies entirely above the 80\% line at $t = 3$. Our method \textit{asymptotically} guarantees that, among all flagged individuals, on average at least 80\% will survive beyond this time. This guarantee holds \textit{pointwise} over all fixed choices of the probability threshold $p$ and time horizon $t$, and for ``low-risk'' as well as ``high-risk'' screening, making the method flexible and broadly applicable.

This predictive notion of uncertainty within a multiple testing framework is well-aligned with how survival curves are commonly used in real-world settings: to guide actionable decisions such as prioritizing patients for further testing or early intervention. However, black-box survival models typically provide only point estimates, without any reliable measure of uncertainty or formal calibration guarantees. As in the example shown, screening based on point predictions alone, or even KM estimates, may lead to too many patients incorrectly labeled as low risk—highlighting the practical benefits of our calibrated bands, which offer rigorous control over such errors.

Our approach integrates three key concepts: conformal inference \citep{vovk2005algorithmic,fontana2023conformal}, inverse probability of censoring weighting (IPCW) \citep{robins1992recovery,robins1993information}, and FDR control \citep{benjamini1995controlling}. Concretely, we build on tests for {\em random} null hypotheses stating that $T \geq t$ or $T \leq t$, where $T$ is the (unobserved) survival time of the test individual and $t > 0$ is a user-specified time threshold.
We will compute {\em conformal p-values} to test these hypotheses and show how to use them to construct the conformal survival bands previewed in Figure~\ref{fig:calibration_band_example}.
Although each of the statistical concepts upon which we build already exists on its own, the way we integrate them is original and supported by novel theoretical results.

\subsection{Related Work}

Conformal inference for survival analysis is a very active area of research, with several recent works addressing different inferential goals. Some methods focus on prediction of survival times \citep{bostrom2019predicting,pmlr-v204-bostrom23a,sun2024conformal} or classification into survival/event categories \citep{bostrom2017conformal}, while others aim to refine the point estimates of survival probabilities produced black-box models \citep{qi2024conformalized}. The works most closely related to our own concern the construction of predictive lower bounds for survival times at a fixed confidence level~$\alpha \in (0,1)$, using censored data. These methods seek to deliver distribution-free prediction intervals while correcting for the non-exchangeability \citep{barber2023conformal} introduced by censoring.

This direction was initiated by \citet{candes2023conformalized}, whose method provides valid lower bounds but is often overly conservative in practice. \citet{gui2024conformalized} introduced refinements that improve coverage tightness, yet even their method tends to produce wide intervals at moderate coverage levels (e.g., $\alpha = 0.5$), limiting its utility in settings that require a finer characterization of predictive uncertainty. In addition, both of these methods assume type-I censoring—where censoring times are fully observed for all individuals—an assumption that is quite limiting in practice. More recent contributions by \citet{davidovconformalized}, \citet{sesia2024doubly}, and \citet{farina2025doubly} relax this constraint to accommodate general right-censoring. However, these approaches remain restricted to one-sided predictive inference and do not support threshold-based screening or two-sided uncertainty quantification.

Among the few existing methods that attempt to construct two-sided predictive intervals under censoring are those of \citet{qi2024conformalized}, \citet{qin2024conformal}, and \citet{yi2025survival}.
 The approach of \citet{qi2024conformalized} approaches the problem by `de-censoring'' the data: it estimates latent survival times in the calibration data set using the fitted survival model and then applies standard conformal techniques as if those predictions were observed.
While convenient, this method assumes that the survival model used for de-censoring is reliable. As argued by \citet{sesia2024doubly}, this assumption undermines the primary appeal of conformal inference—namely, its robustness to model misspecification. Indeed, empirical evidence suggests that the method of \citet{qi2024conformalized} may fail to provide valid coverage when the survival model is inaccurate. By contrast, \citet{qin2024conformal} take a more model-agnostic approach, based on resampling under censoring, but their work focuses on predictive intervals for survival times rather than on calibrating survival screening decisions.
Our work is also related to recent, independent research by \citet{yi2025survival}, who construct two-sided predictive intervals for survival times using a method that, like ours, utilizes conformal p-values with IPC weighting. However, their focus is on survival time prediction rather than decision-based screening or the calibration of individual survival curves.

In this paper, we address the problem of estimating uncertainty around individual survival curves to support threshold-based screening tasks. Our approach is based on computing weighted conformal $p$-values for testing hypotheses of the form $H_{\mathrm{lt}}(t; j): T_{n+j} \ge t$ and $H_{\mathrm{rt}}(t; j): T_{n+j} \le t$, where $T_{n+j}$ denotes the (unobserved) survival time of a new individual and $t > 0$ is a fixed time threshold.
While it may be possible in principle to obtain such p-values by inverting predictive intervals for survival times obtained using existing methods, this is practically infeasible for several reasons: (i) most methods only provide lower bounds, (ii) even the best available bounds remain conservative, especially at moderate levels of confidence, and (iii) the inversion procedure would be computationally expensive, as it requires solving nested optimization problems for each time threshold $t$.

Our method integrates and extends two previously distinct lines of work. First, we build on the screening-based conformal inference framework of \citet{jin2023selection}, which develops p-values for testing whether unobserved outcomes exceed user-specified thresholds. \citet{jin2023model} later generalized this approach to allow for importance weighting under covariate shift \citep{tibshirani2019conformal}, but their theory and assumptions are not suited to censoring. Second, we draw on \citet{farina2025doubly}, who introduced IPCW \citep{robins1992recovery,robins1993information} into conformal inference to produce predictive intervals under right-censoring. While their method is tailored to one-sided prediction bounds, we adapt IPCW techniques to develop conformal p-values for our survival threshold hypotheses.

\section{Methods}

\subsection{Problem Setup and Assumptions} \label{sec:method-setup}

We consider a right-censored survival setting based on a sample of $n$ individuals indexed by $[n] := \{1, \ldots, n\}$, drawn i.i.d.\ from an unknown population. For each individual $i \in [n]$, we observe a vector of covariates $X_i \in \mathcal{X} \subseteq \mathbb{R}^d$, along with right-censored survival data: the event indicator $E_i := \mathbb{I}(T_i < C_i)$ and the observed time $\tilde{T}_i := \min(T_i, C_i)$, where $T_i > 0$ is the true survival time and $C_i > 0$ is the censoring time. These $n$ observations form the calibration dataset, denoted by $\mathcal{D}_{\text{cal}} := \{(X_i, \tilde{T}_i, E_i)\}_{i=1}^n$,  which will be used to quantify uncertainty---i.e., to calibrate the predictions of a black-box survival model.

We assume access to two black-box models trained on an independent dataset, which may itself be censored and need not follow the same distribution as the calibration or test sets. The only assumption is that the training data are independent of all other samples, allowing us to treat both models as fixed throughout the analysis. The first model is the \emph{survival model}, $\hat{\mathcal{M}}_T$. This model produces an estimated individual survival function $\hat{S}_T(t \mid x)$, which is intended to approximate the true conditional survival probability $S_T(t \mid x) := \mathbb{P}(T > t \mid X = x)$. The second is an auxiliary \emph{censoring model}, $\hat{\mathcal{M}}_C$, which estimates the conditional survival function of the censoring distribution, $\hat{S}_C(t \mid x)$, an approximation to $S_C(t \mid x) := \mathbb{P}(C > t \mid X = x)$. The role of the censoring model is to reweight the calibration data to correct for the missing information due to censoring, enabling valid uncertainty quantification for the survival model.

In addition to the calibration set, we consider a disjoint test set consisting of $m$ individuals, indexed by $\{n+1, \ldots, n+m\}$, also drawn independently from the same population. For each test individual $j$, we observe only the covariates $X_j \in \mathcal{X}$. Our goal is to use the survival model $\hat{\mathcal{M}}_T$ to estimate the survival curve $\hat{S}_T(t \mid X_j)$ for each $X_j$, and to construct a well-calibrated \emph{conformal survival band} around this curve that reflects uncertainty.

Rather than aiming for our conformal survival bands to provide valid confidence intervals for the true survival function $S_T(t \mid x)$, which would not be a feasible goal without additional assumptions \citep{barber2020distribution}, we focus on a practically useful objective that is more naturally attainable within a conformal inference framework: producing principled and interpretable uncertainty estimates around $\hat{S}_T(t \mid X_j)$ that can support confident predictive screening decisions.
For example, given a survival probability threshold $q \in (0,1)$ and a clinically meaningful time point $t > 0$, our goal may be to identify \emph{high-risk} test individuals whose predicted survival probability $\hat{S}_T(t \mid X_j)$ falls \emph{significantly} below $q$, while guaranteeing that, among all such flagged individuals, the expected proportion who survive beyond $t$ remains controlled below level $q$.
Although inherently predictive in nature, this type of calibration guarantee is intuitive and aligns closely with how physicians and practitioners often interpret the output of survival models in practice: as actionable, patient-specific risk estimates that can support confident decision-making.

Because our method is built on the idea of \emph{conformal p-values}, we first review the key concepts underlying this approach.

\subsection{Preliminaries: Review of Conformal Inference without Censoring} \label{sec:preliminaries}

To build intuition for our method, we begin by reviewing how existing conformal inference techniques for regression can be applied to survival analysis in a simplified setting without censoring. In this case, imagine observing complete data $\{(X_i, T_i)\}_{i=1}^n$, where each $(X_i, T_i)$ is drawn i.i.d.\ from an unknown distribution over $\mathcal{X} \times \mathbb{R}_+$.
Given a new test point with covariates $X_{n+j}$, consider the task of testing the null hypothesis
\begin{align} \label{eq:hyp-lt}
  H_{\mathrm{lt}}(t; j) : T_{n+j} \geq t,
\end{align}
for a fixed time \( t > 0 \). This hypothesis asserts the individual will experience the event after time \( t \); rejecting it provides evidence that the individual is unlikely to survive beyond \( t \).

If complete (uncensored) data are available, this hypothesis can be tested as follows.
Let $\hat{s}_{\mathrm{lt}}(t'; x, t)$ be any {\em monotone increasing} function of $t'$, which may depend on $x \in \mathcal{X}$ and the fixed time $t \in \mathbb{R}_+$ in the definition of the null hypothesis~\eqref{eq:hyp-lt}.
We refer to $\hat{s}_{\mathrm{lt}}$ as the \emph{left-tail} nonconformity scoring function.

An intuitive choice of scoring function satisfying the required monotonicity property is
\begin{align} \label{eq:score-lt}
\hat{s}_{\mathrm{lt}}(t'; x, t) := \left( 1-\hat{S}_T(t' \mid x) \right) \I{t' \geq t},
\end{align}
where $\hat{S}_T(t' \mid x) \in (0,1)$ denotes the conditional survival probability at time $t'$ for an individual with covariates $x$, estimated by the model $\hat{\mathcal{M}}_T$. 
We have found that this scoring function works well in practice, although it is not the only possible option.

Using any suitable monotone increasing scoring function $\hat{s}_{\mathrm{lt}}$, we compute a non-conformity score $\hat{s}_{\mathrm{lt}}(T_i, X_i, t)$ for each calibration point $i \in [n]$ as well as for the test point $X_{n+j}$, in the latter case using the placeholder value $t$ instead of the unknown event time $T_{n+j}$---i.e., defining the test score as $\hat{s}_{\mathrm{lt}}(t, X_{n+j}, t)$. Then, following the approach of \citet{jin2023selection}, we define the left-tail conformal p-value as:
\begin{align} \label{eq:pval-lt-full}
\tilde{\phi}_{\mathrm{lt}}(t; X_{n+j}) := \frac{1 + \sum_{i=1}^n \mathbb{I}\left\{ \hat{s}_{\mathrm{lt}}(T_i; X_i, t) \ge \hat{s}_{\mathrm{lt}}(t; X_{n+j}, t ) \right\}}{n+1}.
\end{align}
For example, in the special case where the scoring function takes the form defined in~\eqref{eq:score-lt}, under the relatively mild assumption that $\hat{S}_T(t \mid X_{n+j}) < 1$ almost surely, then it is easy to verify that the left-tail conformal p-value in~\eqref{eq:pval-lt-full} can be written as:
\begin{align*}
\tilde{\phi}_{\mathrm{lt}}(t; X_{n+j}) := \frac{1 + \sum_{i=1}^n \I{T_i \geq t} \mathbb{I}\left\{ \hat{S}_T(T_i \mid X_i) \leq \hat{S}_T(t \mid X_{n+j}) \right\}}{n+1}.
\end{align*}

The following result, due to \citet{jin2023selection}, highlights the statistical validity of the left-tail p-value defined in~\eqref{eq:pval-lt-full} for testing the null hypothesis~\eqref{eq:hyp-lt}. A formal proof is included in Appendix~\ref{app:proofs} for completeness.
Appendix~\ref{app:proofs} also contains all other proofs.

\begin{proposition}[\citet{jin2023selection}] \label{prop:superuniform}
Assume the calibration data and test point $(X_{n+j}, T_{n+j})$ are exchangeable and, for a fixed $t>0$, the scoring function \( \hat{s}_{\mathrm{lt}}(t'; x, t) \) is monotone increasing in \( t' \) for all $x$. Then, for any \( \alpha \in (0,1) \), the conformal p-value $\tilde{\phi}_{\mathrm{lt}}(t; X_{n+j})$ defined in~\eqref{eq:pval-lt-full}, computed using the full (uncensored) data, satisfies:
\[
\P{\tilde{\phi}_{\mathrm{lt}}(t; X_{n+j}) \le \alpha,\, T_{n+j} \geq t } \le \alpha.
\]
\end{proposition}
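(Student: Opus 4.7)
The plan is to reduce the p-value in~\eqref{eq:pval-lt-full}, which involves the fixed threshold $t$, to a standard (full) conformal p-value that uses the actual (unobserved) survival time $T_{n+j}$ of the test point; the proposition then follows from the usual exchangeability argument, combined with a monotonicity trick that exploits the hypothesis $T_{n+j} \ge t$.

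First I would isolate the event on which we want to control the probability, namely $\{T_{n+j} \ge t\}$. On this event, the monotonicity of $\hat{s}_{\mathrm{lt}}(\cdot; x)$ in its first argument gives $\hat{s}_{\mathrm{lt}}(t; X_{n+j}) \le \hat{s}_{\mathrm{lt}}(T_{n+j}; X_{n+j})$. Substituting this inequality inside each indicator of~\eqref{eq:pval-lt-full} yields the pointwise lower bound
\begin{equation*}
\tilde{\phi}_{\mathrm{lt}}(t; X_{n+j}) \;\ge\; \tilde{p}^{\star}_j \;:=\; \frac{1 + \sum_{i=1}^n \mathbb{I}\bigl\{ \hat{s}_{\mathrm{lt}}(T_i; X_i) \ge \hat{s}_{\mathrm{lt}}(T_{n+j}; X_{n+j}) \bigr\}}{n+1}.
\end{equation*}
Consequently, the event $\{\tilde{\phi}_{\mathrm{lt}}(t; X_{n+j}) \le \alpha,\, T_{n+j} \ge t\}$ is contained in $\{\tilde{p}^{\star}_j \le \alpha\}$, and it suffices to show $\P{\tilde{p}^{\star}_j \le \alpha} \le \alpha$ unconditionally.

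Next I would invoke the standard exchangeability argument. Since $(X_1, T_1), \ldots, (X_n, T_n), (X_{n+j}, T_{n+j})$ are exchangeable and the scoring function $\hat{s}_{\mathrm{lt}}$ is a fixed (training-independent) deterministic function, the augmented scores $S_i := \hat{s}_{\mathrm{lt}}(T_i; X_i)$ for $i \in [n] \cup \{n+j\}$ are also exchangeable. Assuming no ties (as in the statement), the rank of $S_{n+j}$ among $S_1, \ldots, S_n, S_{n+j}$ is therefore uniformly distributed on $\{1, \ldots, n+1\}$. Observing that $(n+1)\,\tilde{p}^{\star}_j$ equals exactly $1$ plus the number of calibration scores that are at least $S_{n+j}$, i.e.\ the rank of $S_{n+j}$ in descending order, yields $\P{\tilde{p}^{\star}_j \le \alpha} = \lfloor (n+1)\alpha \rfloor / (n+1) \le \alpha$. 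Combining this with the inclusion established in the previous step gives the claimed bound.

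There is no substantive obstacle in this argument; the only subtlety worth flagging is that the monotonicity assumption is used in precisely one place (the passage from the threshold $t$ to the latent $T_{n+j}$) and cannot be dispensed with, which is why the later censored construction will need a monotone scoring function as well. The almost-sure-distinct-scores assumption is also cosmetic: with ties, the same argument gives the inequality form via the classical ceiling bound, or ties can be broken by independent continuous randomization as noted in the text.
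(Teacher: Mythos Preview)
Your proposal is correct and follows essentially the same approach as the paper's proof: both use the monotonicity of $\hat{s}_{\mathrm{lt}}(\cdot;x)$ on the event $\{T_{n+j}\ge t\}$ to lower-bound $\tilde{\phi}_{\mathrm{lt}}(t;X_{n+j})$ by the standard conformal p-value built from $\hat{s}_{\mathrm{lt}}(T_{n+j};X_{n+j})$, and then invoke exchangeability of the scores to obtain the $\alpha$ bound. Your write-up is slightly more explicit about the rank-uniformity step, but there is no substantive difference.
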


This implies $\tilde{\phi}_{\mathrm{lt}}(t; X_{n+j})$ can be interpreted as a p-value for testing the \emph{random} hypothesis~\eqref{eq:hyp-lt}, in the sense that small values provide evidence against the null. Notably, the guarantee in Proposition~\ref{prop:superuniform} does not condition on the null event $\{T_{n+j} \geq t\}$; instead, it controls the \emph{joint} probability that both the null is true and the p-value is small. This marginal formulation differs from the classical frequentist setup, where hypotheses are non-random, but it is sufficient for our purposes. As we will see later, this form of validity is precisely what enables us to obtain valid calibration guarantees for personalized risk screening.

Before turning to censored data, it is helpful to note that the above ideas also apply for testing the complementary \emph{right-tail} null hypothesis
\begin{align} \label{eq:hyp-rt}
  H_{\mathrm{rt}}(t; j) : T_{n+j} \leq t,
\end{align}
which is useful for identifying low-risk individuals. In this case, we consider a right-tail scoring function assumed to be monotone \emph{decreasing} in \( t' \); e.g.,
\begin{align} \label{eq:score-rt}
\hat{s}_{\mathrm{rt}}(t'; x, t) := \hat{S}_T(t' \mid x) \I{t' \leq t},
\end{align}
interpreted as the predicted probability of experiencing the event after time \( t' \) for an individual with covariates $x$.
The corresponding right-tail conformal p-value is
\[
\tilde{\phi}_{\mathrm{rt}}(t; X_{n+j}) := \frac{1 + \sum_{i=1}^n \mathbb{I}\left\{ \hat{s}_{\mathrm{rt}}(T_i; X_i, t) \ge \hat{s}_{\mathrm{rt}}(t; X_{n+j}, t ) \right\}}{n+1},
\]
which is super-uniform under the same assumptions as in Proposition~\ref{prop:superuniform}:
\[
\P{\tilde{\phi}_{\mathrm{rt}}(t; X_{n+j}) \le \alpha,\, T_{n+j} \leq t } \le \alpha.
\]

This completes the review of standard conformal p-values in the uncensored setting, for both left- and right-tail survival hypotheses. In the next section, we extend this framework to account for censoring, enabling application to real-world survival data.

\subsection{Conformal Inference for Survival Analysis under Right-Censoring} \label{sec:method-censoring}

We now return to the right-censored survival setting described in Section~\ref{sec:method-setup}, where the calibration data consist of i.i.d.\ samples $\{(X_i, \tilde{T}_i, E_i)\}_{i=1}^n$.
As in Section~\ref{sec:preliminaries}, our goal is to test the left-tail and right-tail hypotheses defined in~\eqref{eq:hyp-lt} and~\eqref{eq:hyp-rt} for new test points $X_{n+j}$.

The main challenge compared to the idealized setting of Section~\ref{sec:preliminaries} is that we do not observe the true event times $T_i$ for all calibration individuals. Consequently, we cannot compute the full-data conformal p-values $\tilde{\phi}_{\mathrm{lt}}(t; X_{n+j})$ or $\tilde{\phi}_{\mathrm{rt}}(t; X_{n+j})$ as defined earlier.

To address this, we adapt the conformal p-value construction using IPCW \citep{robins1992recovery,robins1993information}. This approach uses censored calibration samples but carefully reweights them to correct for censoring-induced selection bias, under the {\em non-informative censoring} assumption that $T \indep C \mid X$. Specifically, we rely on a fitted censoring model $\hat{\mathcal{M}}_C$ that estimates the conditional censoring survival function $\hat{S}_C(t \mid x) \approx \mathbb{P}(C > t \mid X = x)$.
Based on this model, we define the weight function
\begin{align} \label{eq:weights}
  \hat{w}(t,x) := \frac{1}{\hat{S}_C(t \mid x)},
\end{align}
approximating the inverse probability of being uncensored at time $t$ given covariates $x$.

Bringing this into the conformal framework, we define the IPCW version of the left-tail conformal p-value as:
\begin{align} \label{eq:pval-lt-censored}
  \hat{\phi}_{\mathrm{lt}}(t; X_{n+j}) := \frac{1 + \sum_{i=1}^{n} E_i \cdot \hat{w}(T_i, X_i) \cdot \mathbb{I}\left\{ \hat{s}_{\mathrm{lt}}(T_i; X_i, t) \geq \hat{s}_{\mathrm{lt}}(t; X_{n+j}, t) \right\}}{1 + \sum_{i=1}^{n} E_i \cdot \hat{w}(T_i, X_i)},
\end{align}
where the numerator includes only data points whose true event times are observed.
For example, in the special case where the scoring function takes the form defined in~\eqref{eq:score-lt}, under the relatively mild assumption that $\hat{S}_T(t \mid X_{n+j}) < 1$ almost surely, the left-tail conformal p-value in~\eqref{eq:pval-lt-censored} can be written as:
\begin{align*}
  \hat{\phi}_{\mathrm{lt}}(t; X_{n+j}) := \frac{1 + \sum_{i=1}^{n} E_i \cdot \hat{w}(T_i, X_i) \cdot \I{T_i \geq t} \cdot \mathbb{I}\left\{ \hat{S}_T(T_i \mid X_i) \leq \hat{S}_T(t \mid X_{n+j}) \right\}}{1 + \sum_{i=1}^{n} E_i \cdot \hat{w}(T_i, X_i)}.
\end{align*}

Analogously, we define the IPCW right-tail p-value as:
\begin{align} \label{eq:pval-rt-censored}
  \hat{\phi}_{\mathrm{rt}}(t; X_{n+j}) := \frac{1 + \sum_{i=1}^{n} E_i \cdot \hat{w}(T_i, X_i) \cdot \mathbb{I}\left\{ \hat{s}_{\mathrm{rt}}(T_i; X_i, t) \geq \hat{s}_{\mathrm{rt}}(t; X_{n+j}, t) \right\}}{1 + \sum_{i=1}^{n} E_i \cdot \hat{w}(T_i, X_i)}.
\end{align}
In the special case where the scoring function takes the form defined in~\eqref{eq:score-rt}, under the relatively mild assumption that $\hat{S}_T(t \mid X_{n+j}) > 0$ almost surely, this can be written as:
\begin{align*}
  \hat{\phi}_{\mathrm{rt}}(t; X_{n+j}) := \frac{1 + \sum_{i=1}^{n} E_i \cdot \hat{w}(T_i, X_i) \cdot \I{T_i \leq t} \cdot \mathbb{I}\left\{ \hat{S}_T(T_i \mid X_i) \geq \hat{S}_T(t \mid X_{n+j}) \right\}}{1 + \sum_{i=1}^{n} E_i \cdot \hat{w}(T_i, X_i)}.
\end{align*}

These IPCW p-values are fully computable from the observed censored data. As we now show, they yield asymptotically valid inference under the assumption that the censoring model $\hat{\mathcal{M}}_C$ is consistent, along with mild regularity conditions on the data distribution.

We define the censoring weight estimation error as:
\[
\Delta_N := \left( \mathbb{E} \left[ \left( \frac{1}{\hat{w}(T; X)} - \frac{1}{w^*(T; X)} \right)^2 \right] \right)^{1/2},
\]
where \( w^*(t,x) := 1 / S_C(t \mid x) \) denotes the true (unknown) censoring weight function, and $N$ represents the number of training data points.

We now list the assumptions under which asymptotic validity holds:
\begin{enumerate}[label=\textit{(A\arabic*)}]
\item \label{asmp:independence} The data are split into three independent parts:
  \begin{itemize}
  \item a training set of cardinality $N$ used to estimate \( \hat{w}(t; x) \), \( \hat{s}_{\mathrm{lt}}(t'; x, t) \), and \( \hat{s}_{\mathrm{rt}}(t'; x, t) \);
  \item an i.i.d.\ calibration set \( (X_i, T_i, C_i) \) for \( i = 1,\dots,n \), which is censored;
  \item and an i.i.d.\ test point \( (X_{n+j}, T_{n+j}, C_{n+j}) \), of which we only see the covariates.
  \end{itemize}
\item \label{asmp:cic} Censoring is non-informative, or conditionally independent: $T \indep C \mid X$.
\item \label{asmp:score-mono} The scoring function $\hat{s}_{\mathrm{lt}}(t'; x, t)$ is almost surely monotone increasing in $t'$ for all $x,t$, while $\hat{s}_{\mathrm{rt}}(t'; x, t)$ is monotone decreasing. Moreover, $\hat{s}_{\mathrm{lt}}(t; x, t) \in (0,1)$ and $\hat{s}_{\mathrm{rt}}(t; x, t) \in (0,1)$ for all $x,t$, almost surely.
\item \label{asmp:uncensored-prob} The probability of observing an event is bounded away from zero: \( \pi := \mathbb{P}(T \le C) > 0 \).
\item \label{asmp:weights-bounded} The estimated weights are bounded below: \( \hat{w}(T; X) \ge \omega_{\min} > 0 \) almost surely.
\item \label{asmp:weights-consistent} The weight estimation error vanishes asymptotically: \( \Delta_N \to 0 \) as \( N \to \infty \).
\end{enumerate}

\begin{theorem} \label{thm:asymptotic-validity}
Under Assumptions~\ref{asmp:independence}--\ref{asmp:weights-consistent}, for any fixed \( t > 0 \) and $\alpha \in (0,1)$, the IPCW conformal p-values defined in~\eqref{eq:pval-lt-censored} satisfy:
\begin{align*}
\limsup_{N,n \to \infty} \P{ \hat{\phi}_{\mathrm{lt}}(t; X_{n+1}) \le \alpha,\ T_{n+1} \geq t } & \le \alpha, \\
\limsup_{N,n \to \infty} \P{ \hat{\phi}_{\mathrm{rt}}(t; X_{n+1}) \le \alpha,\ T_{n+1} \leq t } & \le \alpha.
\end{align*}

\end{theorem}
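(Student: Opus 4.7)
My strategy is to couple the IPCW p-value $\hat{\phi}_{\mathrm{lt}}$ with the unobservable full-data p-value $\tilde{\phi}_{\mathrm{lt}}$ from Section~\ref{sec:preliminaries} and then invoke Proposition~\ref{prop:superuniform}. As an intermediate step, I would introduce an \emph{oracle} p-value $\phi^{*}_{\mathrm{lt}}$ defined exactly as in~\eqref{eq:pval-lt-censored} but with the true censoring weights $w^{*}(t,x) = 1/S_C(t\mid x)$ in place of $\hat{w}$. The argument then proceeds in three steps: first show that $\hat{\phi}_{\mathrm{lt}} - \phi^{*}_{\mathrm{lt}} = o_p(1)$; then show that both $\phi^{*}_{\mathrm{lt}}$ and $\tilde{\phi}_{\mathrm{lt}}$ converge to the same deterministic limit conditional on the training set and $X_{n+1}$; finally, transfer super-uniformity from $\tilde{\phi}_{\mathrm{lt}}$ to $\hat{\phi}_{\mathrm{lt}}$ via an $\varepsilon$-approximation.

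For the first step, I would write $\hat{\phi}_{\mathrm{lt}} - \phi^{*}_{\mathrm{lt}}$ as a ratio and separately control the numerator and denominator, reducing the problem to bounding $\tfrac{1}{n}\sum_i |\hat{w}(T_i, X_i) - w^{*}(T_i, X_i)|$. Using the identity $\hat{w} - w^{*} = (S_C - \hat{S}_C)/(\hat{S}_C S_C)$, the bounded-weights condition in Assumption~\ref{asmp:weights-bounded}, and the $L^2$ consistency $\Delta_N \to 0$ from Assumption~\ref{asmp:weights-consistent}, a Jensen/Markov argument delivers this bound in probability. For the second step, the key tool is the IPCW identity $\EV{E \cdot w^{*}(T,X) \cdot g(T,X)} = \EV{g(T,X)}$, which follows from the (implicit) conditionally independent censoring assumption $C \indep T \mid X$. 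Applied conditionally on the training data and $X_{n+1}$, with $s_0 := \hat{s}_{\mathrm{lt}}(t; X_{n+1})$, this identity together with the law of large numbers shows that both $\phi^{*}_{\mathrm{lt}}$ and $\tilde{\phi}_{\mathrm{lt}}$ converge in probability to the same limit $F(s_0) := \P{\hat{s}_{\mathrm{lt}}(T; X) \ge s_0 \mid \text{training}, s_0}$. Assumption~\ref{asmp:uncensored-prob} guarantees the denominators are bounded away from zero so the ratios behave well in the limit.

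Combining these two steps yields $|\hat{\phi}_{\mathrm{lt}} - \tilde{\phi}_{\mathrm{lt}}| \to 0$ in probability. The final $\varepsilon$-approximation argument is a standard union-type bound:
\[
\P{\hat{\phi}_{\mathrm{lt}} \le \alpha,\; T_{n+1} \ge t} \le \P{\tilde{\phi}_{\mathrm{lt}} \le \alpha + \varepsilon,\; T_{n+1} \ge t} + \P{|\hat{\phi}_{\mathrm{lt}} - \tilde{\phi}_{\mathrm{lt}}| > \varepsilon},
\]
where Proposition~\ref{prop:superuniform} applied at level $\alpha + \varepsilon$ bounds the first term by $\alpha + \varepsilon$ and the second term vanishes by the previous step. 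Taking $\limsup$ followed by $\varepsilon \downarrow 0$ gives the first claim; the right-tail case is symmetric once the monotonicity direction of the scoring function is flipped.

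The main obstacle, in my view, is the second step: establishing that the IPCW-weighted empirical sums concentrate around the full-data averages \emph{uniformly enough} across the randomness of the training set for the limit $F(s_0)$ to be inherited through the random event $\{T_{n+1} \ge t\}$. Handling this cleanly requires a careful conditioning argument that isolates the joint randomness of training, calibration, and test data, and makes transparent use of the conditionally independent censoring assumption that IPCW implicitly demands but which is not among the enumerated hypotheses.
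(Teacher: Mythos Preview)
Your route is valid and differs meaningfully from the paper's. The paper never passes through the full-data empirical p-value $\tilde\phi_{\mathrm{lt}}$ or Proposition~\ref{prop:superuniform}; instead it compares $\hat\phi_{\mathrm{lt}}(t;x)$ directly to the \emph{population} oracle $\phi^{\star}_{\mathrm{lt}}(t;x) := \P{\hat s_{\mathrm{lt}}(T;X) \geq \hat s_{\mathrm{lt}}(t;x)}$, establishes a uniform-in-$x$ bound $\sup_x |\hat\phi_{\mathrm{lt}}(t;x) - \phi^{\star}_{\mathrm{lt}}(t;x)| \leq \epsilon_n(\delta)$ via VC and Hoeffding arguments, and proves super-uniformity of $\phi^{\star}_{\mathrm{lt}}$ by a separate probability-integral-transform lemma. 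The $\varepsilon$-splitting you describe in your final step is then applied with $\phi^{\star}_{\mathrm{lt}}$ in the role of $\tilde\phi_{\mathrm{lt}}$. This buys an explicit finite-sample inequality (stated as a standalone theorem) from which the asymptotic statement follows by inspection; your approach is more elementary---only the law of large numbers and the exchangeability-based Proposition~\ref{prop:superuniform} are needed---but yields only the limiting conclusion with no rate, and it requires tracking an additional intermediate object (your empirical true-weight oracle).

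One caution on your Step~1: the identity $\hat w - w^* = (S_C - \hat S_C)/(\hat S_C S_C)$ together with a bound on $\hat w$ is not enough to control $\tfrac{1}{n}\sum_i E_i\,|\hat w(T_i,X_i) - w^*(T_i,X_i)|$, because nothing in \ref{asmp:independence}--\ref{asmp:weights-consistent} bounds the \emph{true} weight $w^* = 1/S_C$ from above. You need the IPCW identity already here, not only in Step~2: since $E\, w^*(T,X)$ has unit conditional mean, $\EV{E\,|\hat w - w^*|} \leq (\sup \hat w)\,\EV{E\, w^* \,|\hat S_C - S_C|} = (\sup \hat w)\,\EV{|\hat S_C - S_C|} \leq (\sup \hat w)\,\Delta_N$, after which the weak law handles the empirical sum. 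A cleaner alternative---closer in spirit to the paper---is to drop your intermediate empirical oracle entirely and show directly that both $\hat\phi_{\mathrm{lt}}$ and $\tilde\phi_{\mathrm{lt}}$ converge to the common population limit $F(s_0)$, sidestepping any pointwise control of $w^*$. Your observation that the conditionally-independent-censoring assumption $C \indep T \mid X$ is needed but not listed among \ref{asmp:independence}--\ref{asmp:weights-consistent} is correct and applies equally to the paper's argument.
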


In Section~\ref{sec:methods-bands}, we will explain how to use IPCW conformal p-values to construct \emph{conformal survival bands}.

\subsection{Conformal Survival Bands for Screening High- or Low-Risk Individuals} \label{sec:methods-bands}

Consider $m$ test patients with feature vectors $X_{n+1}, X_{n+2}, \ldots, X_{n+m}$. For each $j \in [m]:= \{1, \ldots, m\}$ and a fixed time $t > 0$, define
\begin{align} \label{eq:band-U}
\hat{U}(t; X_{n+j}) := \hat{\phi}_{\mathrm{lt}}^{\mathrm{BH}}(t; X_{n+j}),
\end{align}
where $\hat{\phi}_{\mathrm{lt}}^{\mathrm{BH}}(t; X_{n+j})$ denotes the Benjamini-Hochberg (BH) adjusted left-tail p-value.

More precisely, let $\hat{\phi}_{(1)} \le \hat{\phi}_{(2)} \le \cdots \le \hat{\phi}_{(m)}$ denote the ordered values of the unadjusted left-tail p-values $\{\hat{\phi}_{\mathrm{lt}}(t; X_{n+j})\}_{j=1}^m$, and let $\pi(j)$ be the index such that $\hat{\phi}_{(j)} = \hat{\phi}_{\mathrm{lt}}(t; X_{n+\pi(j)})$. Then the BH-adjusted p-values are defined as
\begin{align}
\label{eq:adjusted-pvals}
\hat{\phi}_{\mathrm{lt}}^{\mathrm{BH}}(t; X_{n+\pi(j)}) := \min_{k \ge j} \left\{ \frac{m}{k} \cdot \hat{\phi}_{(k)} \right\} \wedge 1, \quad \text{for } j = 1, \ldots, m.
\end{align}
Each value $\hat{\phi}_{\mathrm{lt}}^{\mathrm{BH}}(t; X_{n+j})$ represents the smallest FDR level at which the null hypothesis $H_{\mathrm{lt}}(t; j)$ can be rejected by the BH procedure \citep{benjamini1995controlling}. In practice, these adjusted p-values can be very easily computed using the \texttt{p.adjust} function in \texttt{R} with the argument \texttt{method = "BH"}.

The quantity $\hat{U}(t; X_{n+j})$ can then be interpreted as a calibrated upper bound on the survival probability at time $t$ for individual $X_{n+j}$. If $\hat{U}(t; X_{n+j}) \le \alpha$ for some threshold $\alpha \in (0,1)$, then the individual may be confidently flagged as \emph{high-risk}, with the guarantee that the expected proportion of false discoveries—individuals who survive past $t$ despite being flagged as ``high-risk''—remains below $\alpha$ in the large-sample limit.

Symmetrically, we define
\begin{align} \label{eq:band-L}
\hat{L}(t; X_{n+j}) := 1 - \hat{\phi}_{\mathrm{rt}}^{\mathrm{BH}}(t; X_{n+j}),
\end{align}
where $\hat{\phi}_{\mathrm{rt}}^{\mathrm{BH}}(t; X_{n+j})$ is the BH-adjusted p-value computed from the collection of right-tail IPCW conformal p-values $\{\hat{\phi}_{\mathrm{rt}}(t; X_{n+\ell})\}_{\ell=1}^m$. If $\hat{L}(t; X_{n+j}) \ge 1 - \alpha$, the patient can be confidently flagged as \emph{low-risk}, with the guarantee that the expected proportion of individuals who fail before $t$—despite being classified as low-risk—approximately remains below $\alpha$.

This construction defines our calibrated individual survival band
\begin{align} \label{eq:band-LU}
[\hat{L}(t; X_{n+j}),\, \hat{U}(t; X_{n+j})]
\end{align}
which could be interpreted as a calibrated range for likely survival probabilities at time $t$—offering statistically principled decision support for confident high- and low-risk screening.
The full procedure described above is summarized in Algorithm~\ref{alg:calibration-bands}.

\begin{algorithm}[!htb]
\caption{Construction of Conformal Survival Bands}
\label{alg:calibration-bands}
\begin{algorithmic}[1]
\INPUT Censored training dataset $\mathcal{D}_{\text{train}}$; censored calibration dataset $\mathcal{D}_{\text{cal}} = \{(X_i, \tilde{T}_i, E_i)\}_{i=1}^{n}$; test covariates $X_{n+1}, \ldots, X_{n+m}$; time grid $\mathcal{T} = \{t_1, \ldots, t_K\}$; trainable survival model $\hat{\mathcal{M}}_T$ and censoring model $\hat{\mathcal{M}}_C$.


\STATE Define the active calibration subset: $\mathcal{D}_{\text{cal}}' := \{(X_i, T_i) : E_i = 1,\, i \in [n]\}$.

\STATE Train $\hat{\mathcal{M}}_T$ and $\hat{\mathcal{M}}_C$ on $\mathcal{D}_{\text{train}}$.

\STATE Estimate censoring weights $\hat{w}(t, x)$ for the data in $\mathcal{D}_{\text{cal}}'$ using $\hat{\mathcal{M}}_C$ as in Eq.~\eqref{eq:weights}.

\FOR{each time $t \in \mathcal{T}$}
\FOR{each calibration point $(X_i, T_i) \in \mathcal{D}_{\text{cal}}'$}
\STATE Compute $\hat{s}_{\mathrm{lt}}(T_i; X_i, t)$ and $\hat{s}_{\mathrm{rt}}(T_i; X_i, t)$ using Eqs.~\eqref{eq:score-lt} and~\eqref{eq:score-rt}.
\ENDFOR
    \FOR{each test point $X_{n+j}$, for $j = 1, \ldots, m$}
        \STATE Compute scores $\hat{s}_{\mathrm{lt}}(t; X_{n+j}, t)$ and $\hat{s}_{\mathrm{rt}}(t; X_{n+j}, t)$ using Eqs.~\eqref{eq:score-lt} and~\eqref{eq:score-rt}.
        \STATE Compute IPCW p-values $\hat{\phi}_{\mathrm{lt}}(t; X_{n+j})$ and $\hat{\phi}_{\mathrm{rt}}(t; X_{n+j})$ using Eqs.~\eqref{eq:pval-lt-censored} and~\eqref{eq:pval-rt-censored}.
    \ENDFOR

    \STATE Apply BH procedure to p-values across test set to obtain adjusted values $\hat{\phi}_{\mathrm{lt}}^{\mathrm{BH}}(t; X_{n+j})$ and $\hat{\phi}_{\mathrm{rt}}^{\mathrm{BH}}(t; X_{n+j})$ for all $j \in [m]$ as in Eq.~\eqref{eq:adjusted-pvals}.

    \FOR{each test point $j \in [m]$}
        \STATE Compute endpoints $\hat{U}(t; X_{n+j})$ and $\hat{L}(t; X_{n+j})$ using Eqs.~\eqref{eq:band-U} and~\eqref{eq:band-L}.
    \ENDFOR
\ENDFOR

\OUTPUT Personalized survival bands $[\hat{L}(t; X_{n+j}), \hat{U}(t; X_{n+j})]$ for each $j \in [m]$ and all $t \in \mathcal{T}$.
\end{algorithmic}
\end{algorithm}

This algorithm is straightforward to implement and computationally efficient. Model training is performed once on the training dataset, and nonconformity scores for the uncensored calibration points are computed a single time, independently of the test set and the evaluation time grid. The only components that depend on the chosen time points are the nonconformity scores for the test samples, the IPCW conformal p-values, and their BH-adjusted counterparts. While the BH adjustment requires sorting across test samples, it is very fast in practice. All other steps can be parallelized over both time points and test samples. As a result, the overall computational cost of conformal calibration is modest, and typically negligible compared to the cost of training the survival or censoring models.

To implement Algorithm~\ref{alg:calibration-bands} in an even more data-efficient way, one may choose to re-allocate all censored observations from the calibration set $\mathcal{D}_{\text{cal}}$ into the training set. This yields a larger training dataset
$\mathcal{D}_{\text{train}}^\prime := \mathcal{D}_{\text{train}} \cup \{(X_i, \tilde{T}_i, E_i) : E_i = 0,\, i \in [n]\}$.
This does not affect the statistical validity of our method, since the conformal $p$-values $\hat{\phi}_{\mathrm{lt}}(t; X_{n+j})$ and $\hat{\phi}_{\mathrm{rt}}(t; X_{n+j})$ are computed exclusively using uncensored calibration points. While this approach deviates from the standard sample-splitting paradigm commonly used in conformal inference, it remains valid within our framework due to the use of inverse probability of censoring weighting (IPCW), which automatically corrects for the potential selection bias introduced by reallocation during the calibration phase.
However, it is not guaranteed that re-allocating the censored observations to the training set will improve model estimation, as doing so introduces its own form of sampling bias in the training data. Exploring the trade-offs involved in this reallocation strategy—particularly in combination with the possible use of importance weighting during training—remains an open direction for future work.

\subsection{Theoretical Guarantees for Calibrated Survival Bands} \label{sec:guarantees}

We will now formally state the asymptotic statistical guarantee provided by our method, which requires the following additional assumption:

\begin{enumerate}[label=\textit{(A\arabic*)}]
\setcounter{enumi}{5}
\item \label{asmp:score-regularity} For all \( t' > 0 \), the score functions \( \hat{s}_{\mathrm{lt}}(t'; X, t) \) admit a Lebesgue density \( f_{t'} \) satisfying
\[
0 < f_{\min} \le f_{t'}(u) \le f_{\max} < \infty \quad \text{for all } u > 0,
\]
for some constants \( f_{\min}, f_{\max} \in (0, \infty) \).
\end{enumerate}

\begin{theorem}
\label{thm:asymptotic-fdr}
Under Assumptions~\ref{asmp:independence}--\ref{asmp:score-regularity}, fix a target level \( \alpha \in (0,1) \) and survival cutoff time \( t > 0 \). Let \( p_j := \hat{\phi}_{\mathrm{lt}}(t; X_{n+j}) \) denote the IPCW p-value used to test the null hypothesis \( H_{\mathrm{lt}}(t; j) : T_{n+j} \ge t \), for \( j = 1,\dots,m \).
Define the false discovery rate (FDR) as
\[
\mathrm{FDR}_{m,n} := \mathbb{E} \left[ \frac{|\mathcal{R} \cap \mathcal{H}_0|}{|\mathcal{R}| \vee 1} \right],
\]
where \( \mathcal{R} := \{j : p_j \le \widehat{\tau} \} \) is the rejection set obtained by applying the BH procedure at level \( \alpha\) to \( (p_1, \dots, p_m) \) and \( \mathcal{H}_0 := \{j : T_{n+j} \ge t\} \) is the (random) set of true null hypotheses.
Then,
\[
\limsup_{N,n \to \infty} \mathrm{FDR}_{m,n} \le \alpha \quad \text{for any fixed } m.
\]
If in addition \( m = m_n \to \infty \) and \( m_n^2 \epsilon_n \to 0 \), then
\[
\limsup_{N,m,n \to \infty} \mathrm{FDR}_{m,n} \le \alpha.
\]
An analogous result holds for the right-tail p-values \( \hat{\phi}_{\mathrm{rt}}(t; X_{n+j}) \), used to test \( H_{\mathrm{rt}}(t; j)\).
\end{theorem}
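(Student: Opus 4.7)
The plan is to condition on the training-plus-calibration data $\mathcal{D}$ and exploit conditional independence of the test samples. By Assumption~\ref{asmp:independence}, given $\mathcal{D}$ the test samples $(X_{n+j},T_{n+j})$ are i.i.d.\ across $j$, and each p-value $p_j:=\hat{\phi}_{\mathrm{lt}}(t;X_{n+j})$ is a deterministic decreasing function of $\hat{s}_{\mathrm{lt}}(t;X_{n+j})$ whose ``calibration'' inputs (weights, scoring rules, calibration scores) are fixed by $\mathcal{D}$. Consequently, conditional on $\mathcal{D}$, the pairs $(p_j,c_j)$ with $c_j:=\I{T_{n+j}\ge t}$ are i.i.d.\ across $j\in[m]$, which reduces the analysis to BH applied to conditionally independent p-values.

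First I would write $\mathrm{FDR}_{m,n}=\sum_{j}\EV{\I{p_j\le\widehat\tau,\,c_j=1}/(|\mathcal{R}|\vee 1)}$ and, for each $j$, invoke the classical leave-one-out identity $\{|\mathcal{R}|=k,\,p_j\le\alpha|\mathcal{R}|/m\}=\{R^{-j}(\alpha k/m)=k,\,p_j\le\alpha k/m\}$, where $R^{-j}(u)$ denotes the BH rejection count obtained when $p_j$ is replaced by $u$. Setting $G_{\mathcal{D}}(u):=\P{p_1\le u,\,c_1=1\mid\mathcal{D}}$ and $q^{(j)}_{\mathcal{D}}(k):=\P{R^{-j}(\alpha k/m)=k\mid\mathcal{D}}$, the conditional independence from the previous paragraph yields
\begin{equation*}
\mathrm{FDR}^{(j)}_{m,n}=\sum_{k=1}^m\frac{1}{k}\,\EV{G_{\mathcal{D}}(\alpha k/m)\,q^{(j)}_{\mathcal{D}}(k)},
\end{equation*}
together with the Storey-style identity $\sum_k q^{(j)}_{\mathcal{D}}(k)\le 1$ almost surely, inherited by conditioning.

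The main technical hurdle is upgrading the pointwise bound $\EV{G_{\mathcal{D}}(u)}\le u+o(1)$ delivered by Theorem~\ref{thm:asymptotic-validity} to the uniform-in-$u$ statement $\mathcal{E}_n:=\EV{\sup_{u\in[0,1]}[G_{\mathcal{D}}(u)-u]_+}\to 0$. To obtain it I would (i) couple the IPCW-weighted calibration CDF built from $\hat{w}$ to its oracle version based on $w^\ast$, absorbing the $L^2$ error $\Delta_N$ from Assumption~\ref{asmp:weights-consistent}; (ii) exploit the bounded-density Assumption~\ref{asmp:score-regularity} to rule out mass concentration of $\hat{s}_{\mathrm{lt}}(t;X)$ and convert the oracle IPCW tail into a Lipschitz function of $u$; and (iii) run a Glivenko-Cantelli-type argument over the one-dimensional family of half-line indicators to obtain uniform control. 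The delicate point is accommodating simultaneously the weight-estimation error, the discreteness and mutual dependence of the IPCW p-values (which all use the shared calibration sample), and the fact that the target is a conditional rather than marginal deviation.

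Given such $\mathcal{E}_n$, the decomposition from paragraph two combined with $G_{\mathcal{D}}(u)\le u+[G_{\mathcal{D}}(u)-u]_+$ and $q^{(j)}_{\mathcal{D}}(k)\le 1$ yields $\mathrm{FDR}^{(j)}_{m,n}\le \alpha/m+(1+\log m)\,\mathcal{E}_n$, so that summing over $j\in[m]$ gives $\mathrm{FDR}_{m,n}\le\alpha+m(1+\log m)\,\mathcal{E}_n$. For fixed $m$ this immediately implies $\limsup_{N,n}\mathrm{FDR}_{m,n}\le\alpha$, while under the stated growth condition $m_n^2\epsilon_n\to 0$ (identifying $\epsilon_n$ with $\mathcal{E}_n$) the $m(1+\log m)\,\mathcal{E}_n$ error still vanishes. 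The right-tail statement follows by an entirely symmetric argument with $\hat{s}_{\mathrm{rt}}$ replacing $\hat{s}_{\mathrm{lt}}$ and $\{T_{n+j}\le t\}$ replacing $\{T_{n+j}\ge t\}$.
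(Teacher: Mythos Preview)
Your approach is valid and takes a genuinely different route from the paper's. The paper does not work with $G_{\mathcal D}$ at all; instead it introduces \emph{oracle} p-values $p_j^*:=\phi^*_{\mathrm{lt}}(t;X_{n+j})=\P{\hat s_{\mathrm{lt}}(T;X)\ge \hat s_{\mathrm{lt}}(t;X_{n+j})}$, shows $\max_j|p_j-p_j^*|\le\epsilon_n$ with high probability via a uniform IPCW concentration bound, proves a separate \emph{BH-stability} lemma (if two p-value vectors are $\epsilon$-close and the reference p-values have bounded density, the two BH rejection sets coincide with probability at least $1-\delta-2\epsilon m^2 f_{\max}$), and finally transfers exact FDR control from the oracle procedure---where the $p_j^*$ are mutually independent and jointly super-uniform---to the empirical one via $|\mathrm{FDR}-\mathrm{FDR}^*|\le\P{\mathcal R\ne\mathcal R^*}$. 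Assumption~\ref{asmp:score-regularity} enters only through the stability lemma, to guarantee that the oracle p-values have bounded density.

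Your argument bypasses the stability lemma: you run the leave-one-out BH decomposition directly on the empirical $p_j$ after conditioning on $\mathcal D$, and reduce everything to the uniform conditional super-uniformity error $\mathcal E_n$. This is more direct and gives a better $m$-dependence---$m(1+\log m)\,\mathcal E_n$, or even $m\,\mathcal E_n$ if in the error term you use $\sum_k q^{(j)}_{\mathcal D}(k)\le 1$ rather than the cruder $q^{(j)}_{\mathcal D}(k)\le 1$---versus the paper's $m^2\epsilon_n$. Moreover, your $\mathcal E_n\to 0$ actually follows immediately from the paper's uniform bound $\sup_x|\hat\phi_{\mathrm{lt}}(t;x)-\phi^*_{\mathrm{lt}}(t;x)|\le\epsilon_n$ combined with the oracle super-uniformity $\P{\phi^*_{\mathrm{lt}}(t;X)\le u,\,T\ge t}\le u$ (applied conditionally on training), neither of which requires Assumption~\ref{asmp:score-regularity}; your step~(ii) invoking bounded score densities is therefore unnecessary, and your route in fact proves the theorem without~\ref{asmp:score-regularity}. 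What the paper's approach buys is modularity (the stability lemma is a clean standalone tool) and an explicit constant; what your approach buys is a sharper rate in $m$ and a weaker assumption set. Your ``identification of $\epsilon_n$ with $\mathcal E_n$'' is slightly informal but is justified by the inequality $\mathcal E_n\le\epsilon_n(\delta)+\delta$ that the argument above yields.
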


A corollary of Theorem~\ref{thm:asymptotic-fdr} is that high- and low-risk screening rules based on the survival bands produced by Algorithm~\ref{alg:calibration-bands} described above are asymptotically well-calibrated. For any $t>0$ and $\alpha \in (0,1)$, define the set of test individuals flagged as high-risk at time \( t \) as
\[
\mathcal{F}_{\mathrm{hi}}(t; \alpha) := \left\{ j \in [m] : \hat{U}(t; X_{n+j}) \le \alpha \right\}
= \left\{j \in [m] : \hat{\phi}_{\mathrm{lt}}^{\mathrm{BH}}(t; X_{n+j}) \leq \alpha \right\},
\]
and similarly, define the set of individuals flagged as low-risk as
\[
\mathcal{F}_{\mathrm{lo}}(t; \alpha) := \left\{ j \in [m] : \hat{L}(t; X_{n+j}) \ge 1 - \alpha \right\}
= \left\{j \in [m] : \hat{\phi}_{\mathrm{rt}}^{\mathrm{BH}}(t; X_{n+j}) \leq \alpha \right\},
\]
Then, Theorem~\ref{thm:asymptotic-fdr} implies that, in the large-sample limit, the expected proportion of individuals in \( \mathcal{F}_{\mathrm{hi}}(t; \alpha) \) who actually survive past \( t \), and the expected proportion of individuals in \( \mathcal{F}_{\mathrm{lo}}(t; \alpha) \) who fail before \( t \), are both asymptotically controlled below level \( \alpha \).

\subsection{Extension: Making the Survival Bands Doubly Robust} \label{sec:extension-dr}

The validity guarantee provided by Theorem~\ref{thm:asymptotic-fdr} relies on the assumption that the censoring model used to estimate the inverse probability of censoring (IPC) weights is asymptotically consistent. Under this condition, our method enables approximately well-calibrated screening rules without requiring consistency of the black-box survival model.

We now describe a simple extension that makes our method {\em more conservative}, aiming to improve robustness to misspecification of the censoring model. Specifically, this variant targets \emph{double robustness}: it is designed to deliver approximately calibrated screening rules as long as \emph{either} the censoring model \emph{or} the survival model is consistent---though not necessarily both. 
We do not prove a formal double robustness guarantee, as the paper is already quite long. Nonetheless, we believe that a formal double robustness result for this extension could be established using techniques similar to those in \cite{sesia2024doubly}. 

For each test point $j \in [m]$, we define the adjusted conformal survival band:
\begin{align} \label{eq:band-LU-DR}
[\hat{L}^{\mathrm{DR}}(t; X_{n+j}),\, \hat{U}^{\mathrm{DR}}(t; X_{n+j})],
\end{align}
where
\begin{align} \label{eq:band-U-DR}
\hat{U}^{\mathrm{DR}}(t; X_{n+j}) := \min \left\{ \hat{\phi}_{\mathrm{lt}}^{\mathrm{BH}}(t; X_{n+j}), \hat{S}_T(t \mid X_{n+j}) \right\},
\end{align}
and
\begin{align} \label{eq:band-L-DR}
\hat{L}^{\mathrm{DR}}(t; X_{n+j}) := \max \left\{ 1 - \hat{\phi}_{\mathrm{rt}}^{\mathrm{BH}}(t; X_{n+j}), \hat{S}_T(t \mid X_{n+j}) \right\}.
\end{align}
Above, $\hat{S}_T(t \mid X_{n+j})$ is the point estimate of the survival probability produced by $\hat{\mathcal{M}}_T$, and $\hat{\phi}_{\mathrm{lt}}^{\mathrm{BH}}(t; X_{n+j})$ and $\hat{\phi}_{\mathrm{rt}}^{\mathrm{BH}}(t; X_{n+j})$ are the upper and lower bounds produced by Algorithm~\ref{alg:calibration-bands}.
This adjusted construction was used in the illustrative example shown in Figure~\ref{fig:calibration_band_example}.

A useful side benefit of this adjusted band is enhanced interpretability: Equation~\eqref{eq:band-LU-DR} guarantees that the predicted survival probability $\hat{S}_T(t \mid X_{n+j})$ always lies within the band itself. This aligns with common practitioner expectations, simplifying communication.

\section{Numerical Experiments with Synthetic Data} \label{sec:experiments-synthetic}

\subsection{Setup}

\paragraph{Synthetic data.}
We consider four data-generating distributions, summarized in Table~\ref{tab:distributions-synthetic} (Appendix~\ref{app:experiments-synthetic}), which span a range of interesting settings, inspired by previous works. In each setting, $p=100$ covariates $X = (X_1, \ldots, X_p)$ are generated independently, while $T$ and $C$ are sampled independently conditional on $X$, from either a log-normal distribution—$\log T \mid X \sim \mathcal{N}(\mu(X), \sigma(X))$—or an exponential distribution—$C \mid X \sim \text{Exp}(\lambda(X))$.

The first three settings are borrowed from \cite{sesia2024doubly} and are ordered by decreasing modeling difficulty. The first two simulate challenging scenarios where the true survival distribution is highly nonlinear or complex, making accurate model fitting difficult and highlighting the need for robust uncertainty estimation via conformal inference. The third setting, originally appearing in \cite{candes2023conformalized}, corresponds to a simpler survival distribution that is easier to learn with standard survival models, reducing the gap between model-based and conformal prediction. Finally, the fourth setting, which is new, introduces a covariate shift between the training and calibration/test distributions, leading to potential model misspecification even if the survival model fits the training distribution well, and further emphasizing the value of conformal inference.

\paragraph{Design, Methods, and Performance Metrics.}
We generate independent training and calibration datasets with varying sample sizes, along with a test dataset containing 1000 samples. Right-censoring is introduced by replacing the true event time $T$ and censoring time $C$ with the observed time $\tilde{T} = \min(T, C)$ and event indicator $E = \mathbb{I}(T \leq C)$. The censored training data are used to fit survival and censoring models, as described below.

Our goal is to evaluate the proposed conformal survival band (CSB) method and compare it to natural baseline approaches for screening test patients according to rules of the form $P[T > t] > p$ (for \textit{low-risk} selection) or $P[T > t] < p$ (for \textit{high-risk} selection), evaluated at various time thresholds $t$ and probability levels $p$. We consider three methods for screening patients based on estimated conditional survival probabilities: (i) \textit{model}, using the point estimates of conditional survival probabilities computed by a black-box survival model $\hat{\mathcal{M}}^{\text{surv}}$ fitted on all available data (training and calibration); (ii) \textit{KM}, using a Kaplan–Meier estimator fitted on the calibration data; and (iii) \textit{CSB}, using the conformal survival band produced by Algorithm~\ref{alg:calibration-bands}, implemented with the doubly robust extension described in Section~\ref{sec:extension-dr}. 

We focus on the doubly robust implementation of CSB in our experiments because it is more interpretable, ensuring that the predicted survival probability always lies within the band, as a practitioner would likely expect. Although formally studying the double robustness property of this extension is beyond the scope of this paper, this approach is already supported by the statistical guarantees established in Section~\ref{sec:guarantees}, since it is always more conservative than the baseline method described in Section~\ref{sec:methods-bands}.

In addition, we compare against an ideal \textit{oracle} that uses the true conditional survival probabilities from the data-generating distribution; although not practical, the oracle provides valuable insight into the achievable performance limits.

Performance is evaluated on the independent test set using several metrics. The \textit{screened proportion} measures the fraction of test patients selected by the screening rule; this quantity should ideally be large, reflecting high power, provided that selections are accurate. The \textit{survival rate} measures the proportion of selected patients who survive beyond time $t$, and should be larger than $p$ for low-risk screening and smaller than $p$ for high-risk screening. Finally, \textit{precision} and \textit{recall} are computed relative to the oracle screening decisions, treating screening as a binary classification task, and both should ideally be high.

All experiments are independently repeated 100 times, and results are reported as means across repetitions, with error bars representing two standard errors (SE).

\paragraph{Models.}
We consider four model families for fitting the censoring and survival models, denoted respectively by $\hat{\mathcal{M}}^{\text{cens}}$ and $\hat{\mathcal{M}}^{\text{surv}}$, to ensure consistent comparisons across different calibration methods. The model families are:
(1) \textit{grf}, a generalized random forest (R package \texttt{grf});
(2) \textit{survreg}, an accelerated failure time (AFT) model with a log-normal distribution (R package \texttt{survival});
(3) \textit{rf}, a random survival forest (R package \texttt{randomForestSRC});
and (4) \textit{cox}, the Cox proportional hazards model (R package \texttt{survival}).
The survival model $\hat{\mathcal{M}}^{\text{surv}}$ is used both by the \textit{model} screening method and as input to the CSB method. In contrast, the censoring model $\hat{\mathcal{M}}^{\text{cens}}$ is only used within the CSB method, to compute the weights for constructing conformal survival bands.

\paragraph{Leveraging Prior Knowledge on $P_{C \mid X}$.}
 To examine the effect of incorporating prior knowledge about the censoring distribution, we fit $\hat{\mathcal{M}}^{\text{cens}}$ using only the first $p_1 \leq p$ covariates, knowing that $C$ is independent of $(X_{p_1+1}, \ldots, X_p)$ given $(X_1, \ldots, X_{p_1})$ within the data-generating distributions used in these experiments (Table~\ref{tab:distributions-synthetic}). Therefore, for $10 \leq p_1 \leq p = 100$, this prior knowledge helps improve the censoring model by excluding irrelevant predictors and mitigating overfitting. We start with $p_1 = 10$ and later evaluate the impact of larger $p_1$, representing weaker prior knowledge. The case $p_1 = p$ corresponds to no prior knowledge, where all covariates are used to fit the censoring model.

\subsection{Results}

\paragraph{Effect of the Training Sample Size.}
Figure~\ref{fig:exp-setup1-setting1} compares the performance of the four methods in Setting 1 (the harder case), using the \textit{grf} models. The training sample size varies between 100 and 10,000, while the calibration sample size is fixed at 500. We evaluate two screening rules: selecting low-risk patients with $P(T > 6.00) > 0.80$ and selecting high-risk patients with $P(T > 12.00) < 0.80$.

\textit{Low-risk screening results (top panel).}
The \textit{model} selects too many patients, resulting in an excess of false positives; the average survival rate among its selected patients falls below 60\%, despite the target threshold of 80\%. The \textit{KM} method performs even worse, selecting nearly all patients, with a survival rate close to 50\%. In contrast, \textit{CSB} achieves the desired survival rate among selected patients, provided that the training sample size is not too small, and its performance improves steadily as the sample size increases, approaching that of the oracle. Even when the training size is small (e.g., 100), and the fitted survival and censoring models are relatively inaccurate, CSB is more robust than the \textit{model}.
As expected, the oracle achieves a survival rate above 80\%—in fact closer to 100\%—while selecting approximately 50\% of the test patients. It is important to note that the survival rate of selected patients does not need to match the threshold $p$ exactly, even for the oracle, because many patients have true survival probabilities much higher (or lower) than $p$.

\textit{High-risk screening results (bottom panel).}
In this setting, all methods lead to subsets of selected patients whose survival rates are below 80\%, as desired. However, the \textit{KM} method again selects too many patients, including many whose true survival probabilities are actually higher than 80\%, resulting in lower precision. This behavior stems from the non-personalized nature of KM estimates: KM cannot flexibly select subsets of patients based on individual characteristics and must either select nearly all or none. In contrast, the \textit{model} and \textit{CSB} methods achieve higher precision and recall, and both approach the oracle performance as the training sample size increases.

\begin{figure}[!htb]
    \centering
    \includegraphics[width=\textwidth]{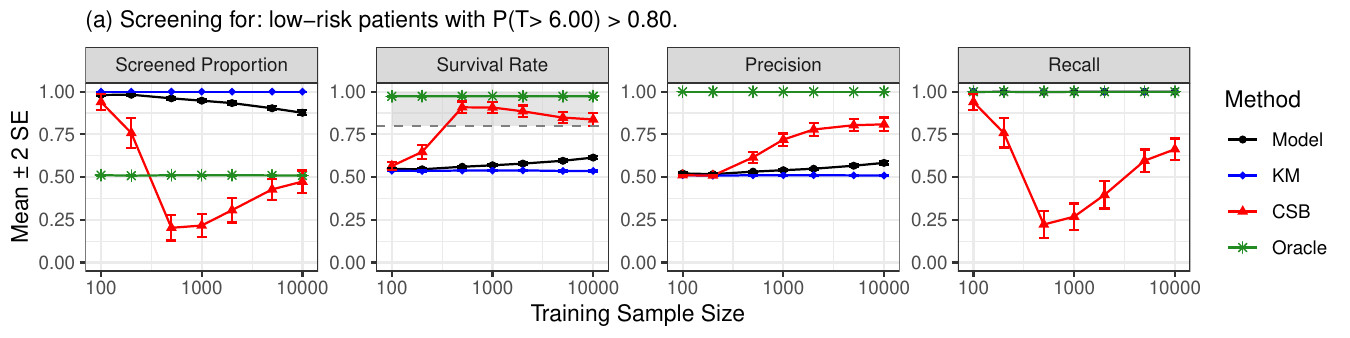}
    \includegraphics[width=\textwidth]{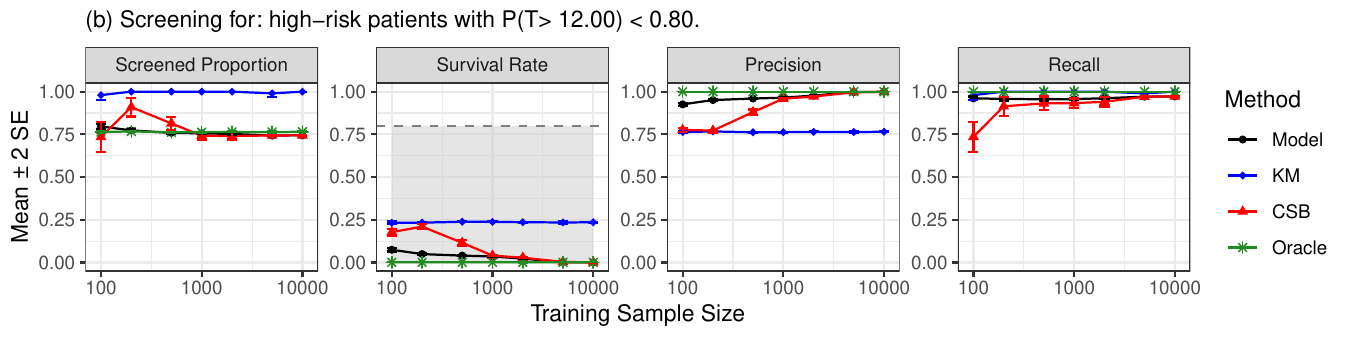}
\caption{
Effect of training sample size on patient screening performance of different methods in a challenging synthetic data scenario with complex survival and censoring distributions, using \textit{grf} models.
Top: results for low-risk screening at time $t=6$; bottom: high-risk screening at time $t=12$.
The calibration sample size is fixed at 500.
The conformal survival band (CSB) method successfully achieves survival rates above the target $p = 0.8$ for low-risk screening and below $p = 0.8$ for high-risk screening, while making personalized selections that approach the oracle performance as the training sample size increases.
}
    \label{fig:exp-setup1-setting1}
\end{figure}

Figures~\ref{fig:exp-setup1-setting2} and~\ref{fig:exp-setup1-setting3} in Appendix~\ref{app:experiments-synthetic} present similar results for Settings 2 and 3, respectively.
Across these experiments, the CSB method consistently achieves survival rates on the correct side of the target threshold and maintains relatively high precision and recall compared to the other methods, more closely approximating the performance of the ideal oracle as the training sample size increases.

\paragraph{Effect of the Calibration Sample Size.}
Figure~\ref{fig:exp-setup3-setting1} examines the impact of the calibration sample size on the performance of CSB in the same challenging synthetic data setting as before, with the training sample size fixed at 5000.
In this regime, where the survival and censoring models are already accurate thanks to a relatively large training set, increasing the calibration sample size improves screening performance: the survival rates among selected patients consistently falls on the correct side of the target threshold, and precision and recall tend to approach the behavior of the ideal oracle.
In general, however, if the training sample size is small and the fitted models are inaccurate, it may be more beneficial to prioritize training over calibration. Overall, a moderate calibration sample size (on the order of a few hundred) is typically sufficient to obtain reliable conformal inferences.
Figures~\ref{fig:exp-setup3-setting2} and~\ref{fig:exp-setup3-setting3} in Appendix~\ref{app:experiments-synthetic} present qualitatively similar results from analogous experiments conducted under synthetic data under Settings 2 and 3.

\begin{figure}[!htb]
    \centering
    \includegraphics[width=\textwidth]{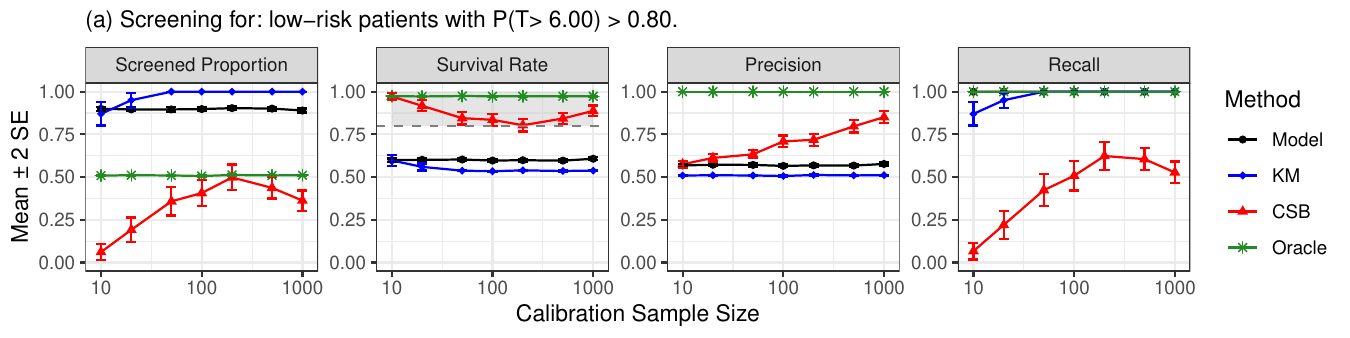}
    \includegraphics[width=\textwidth]{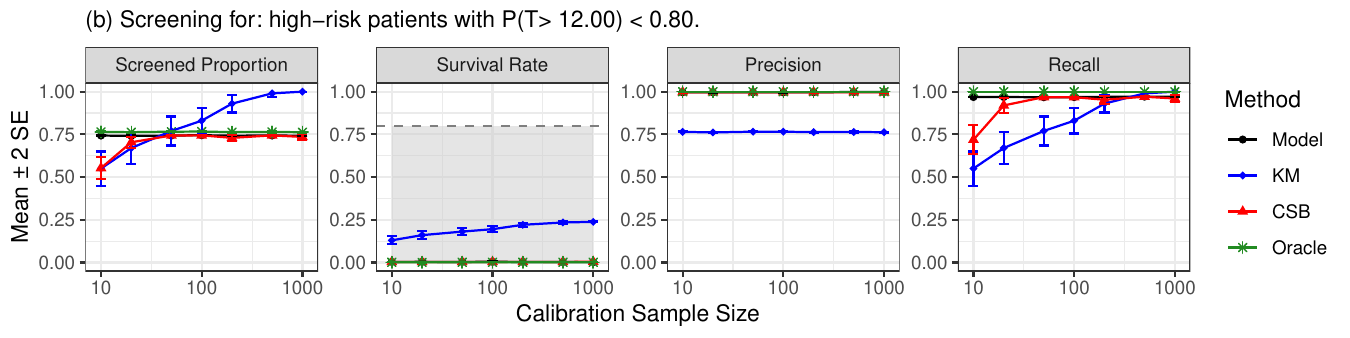}
\caption{
Effect of calibration sample size on patient screening performance with conformal survival bands (CSB) in a challenging synthetic data scenario with complex survival and censoring distributions, as in Figure~\ref{fig:exp-setup1-setting1}.
The training sample size is fixed at 5000.
The CSB method successfully achieves survival rates above the target $p = 0.8$ for low-risk screening and below $p = 0.8$ for high-risk screening, while making selections that more closely resemble those of an ideal oracle as the calibration sample size increases.
}
    \label{fig:exp-setup3-setting1}
\end{figure}

\paragraph{Effect of the Training Sample Size for the Censoring Model.}
Figure~\ref{fig:exp-setup5-setting1} presents results from experiments similar to those in Figure~\ref{fig:exp-setup1-setting1}, but here the censoring model is fit using only a subset from a total of 5000 training samples.
The goal is to study how the quality of the censoring model affects the performance of the CSB method.
In the top panel (low-risk screening), we observe that when the censoring model is trained on too few samples, CSB may fail to provide valid screening selections.
As the training size increases and the censoring model improves, the survival rate among selected patients eventually falls on the correct side of the target threshold, consistent with our asymptotic theoretical results.
In the bottom panel (high-risk screening), CSB maintains validity even with a small censoring training set, but its power (i.e., ability to identify appropriate patients) improves as the censoring model quality increases.
Figures~\ref{fig:exp-setup5-setting2} and~\ref{fig:exp-setup5-setting3} in Appendix~\ref{app:experiments-synthetic} present qualitatively similar results for the relatively easier Settings 2 and 3.

\begin{figure}[!htb]
    \centering
    \includegraphics[width=\textwidth]{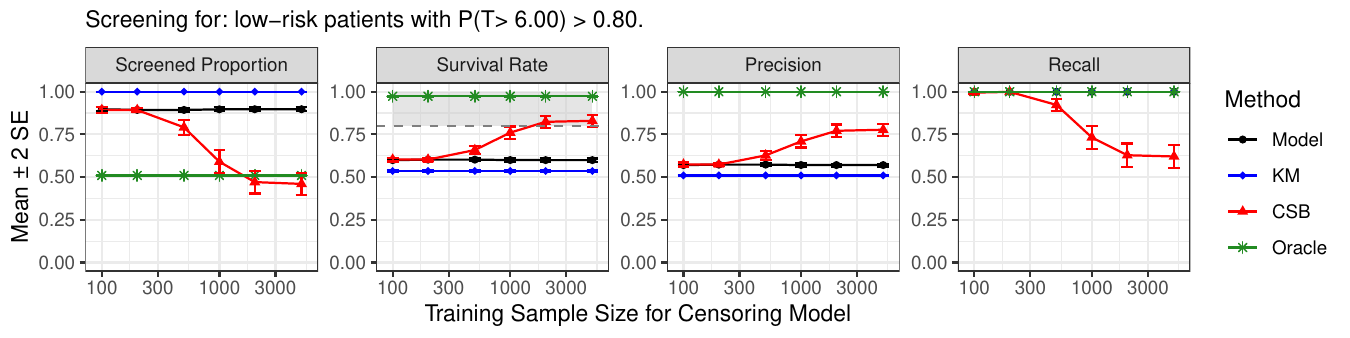}
    \includegraphics[width=\textwidth]{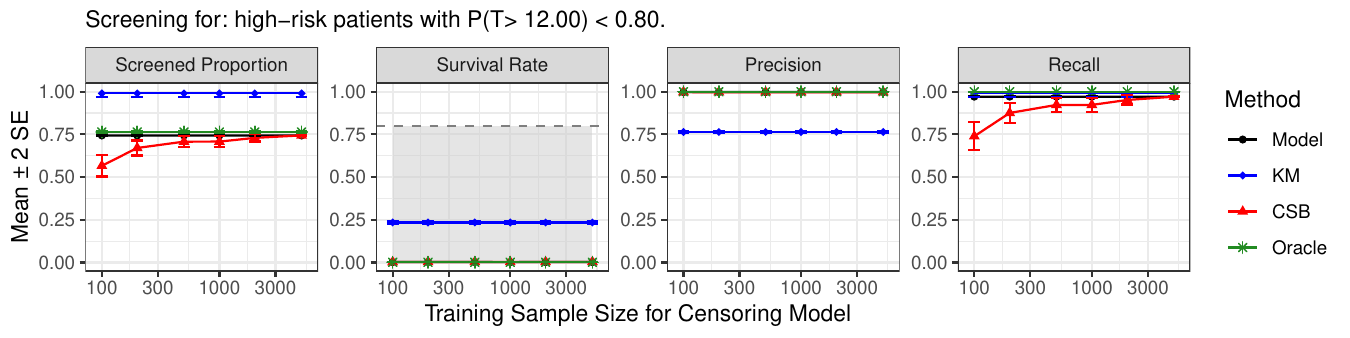}
\caption{
Effect of the training sample size used for fitting the censoring model on patient screening performance with conformal survival bands (CSB) in a challenging synthetic data scenario with complex survival and censoring distributions, as in Figure~\ref{fig:exp-setup1-setting1}.
The overall training sample size is fixed at 5000, but only a subset is used to fit the censoring model.
CSB tends to produce higher-quality selections as the censoring model improves with more training data, achieving survival rates on the correct side of the target threshold (top), and improving power toward oracle performance (bottom).
}
    \label{fig:exp-setup5-setting1}
\end{figure}

\paragraph{Additional Experiments.}
Additional experiments studying the effect of the number of features used to fit the censoring model are presented in Figures~\ref{fig:exp-setup6-setting1} and~\ref{fig:exp-setup6-setting3} in Appendix~\ref{app:experiments-synthetic}.
These results show that when too many features are included—especially with limited sample sizes—performance may degrade due to challenges in accurately estimating the censoring model, consistent with the theory.
Tables~\ref{tab:setup4-setting1-lr}--\ref{tab:setup4-setting3-hr} in Appendix~\ref{app:experiments-synthetic} report on additional experiments using different survival and censoring models, leading to qualitatively similar results.

\paragraph{Effect of Distribution Shift.}
Finally, we evaluate robustness under a distribution shift between the training and calibration/test distributions, corresponding to Setting~4 described in Table~\ref{tab:distributions-synthetic}.
Specifically, we now allow the survival distribution to change between training and calibration/test phases, which may lead to model misspecification even if the survival model fits the training data well.
In these experiments, the training set consists of 5000 samples and the calibration set consists of 500 samples.

Table~\ref{tab:setup2-lr} summarizes the performance of different methods for low-risk screening under this setting.
The ``low-quality model'' refers to a survival model trained on the shifted training distribution, resulting in inaccurate predictions on the calibration and test distributions.
In this case, the \textit{model} baseline selects all patients as ``low risk'' at time $t=3$, but the observed survival rate is only about 54\%, far below the target threshold of $p = 0.80$.
By contrast, the \textit{KM} and \textit{CSB} methods conservatively select no patients, thereby maintaining validity at the cost of reduced power.
With a ``high-quality model'' trained directly on the calibration/test distribution, the \textit{CSB} method successfully screens approximately half of the patients while maintaining valid survival rates, closely approaching oracle performance.
Meanwhile, the \textit{KM} method remains overly conservative due to its lack of adaptivity, continuing to select no patients even when safe screening is feasible.
Figure~\ref{fig:calibration_band_example}, introduced in Section~\ref{sec:intro}, illustrates survival curves, conformal survival bands, and screening decisions corresponding to these experiments for four example patients in the low-risk setting.

\begin{table}[!htb]
\centering
\caption{
Performance of different methods for screening low-risk patients with $P(T > 3) > 0.80$ under distribution shift (Setting~4 from Table~\ref{tab:distributions-synthetic}).
Results are shown separately for a low-quality survival model trained on the shifted training distribution and a high-quality model trained directly on the calibration/test distribution.
The survival rate highlighted in red illustrates how a misspecified \textit{grf} model can lead to invalid screening rules, resulting in substantially lower survival rates than expected among patients labeled as ``low-risk.''
}
\label{tab:setup2-lr}

\begin{tabular}{lcccc}
\toprule
Method & Screened & Survival & Precision & Recall\\
\midrule
\addlinespace[0.3em]
\multicolumn{5}{l}{\textbf{Low-Quality Model}}\\
\hspace{1em}Model & 1.000 ± 0.000 & \textcolor{red}{0.538 ± 0.003} & 0.499 ± 0.003 & 1.000 ± 0.000\\

\hspace{1em}KM & 0.000 ± 0.000 & NA & NA & 0.000 ± \vphantom{1} 0.000\\

\hspace{1em}CSB & 0.000 ± 0.000 & NA & NA & 0.000 ± 0.000\\

\hspace{1em}Oracle & 0.499 ± 0.003 & \textcolor{black}{1.000 ± 0.000} & 1.000 ± 0.000 & 1.000 ± 0.000\\

\addlinespace[0.3em]
\multicolumn{5}{l}{\textbf{High-Quality Model}}\\
\hspace{1em}Model & 0.494 ± 0.004 & \textcolor{black}{1.000 ± 0.000} & 1.000 ± 0.000 & 0.994 ± 0.001\\

\hspace{1em}KM & 0.000 ± 0.000 & NA & NA & 0.000 ± 0.000\\
\hspace{1em}CSB & 0.492 ± 0.004 & \textcolor{black}{1.000 ± 0.000} & 1.000 ± 0.000 & 0.989 ± 0.001\\

\hspace{1em}Oracle & 0.497 ± 0.004 & \textcolor{black}{1.000 ± 0.000} & 1.000 ± 0.000 & 1.000 ± 0.000\\
\bottomrule
\end{tabular}

\end{table}

Additional results for high-risk screening under distribution shift are presented in Table~\ref{tab:setup2-hr} in Appendix~\ref{app:experiments-synthetic}.
In this case, model misspecification primarily reduces the power of all methods, leading to fewer selections.
A similar figure illustrating survival curves, conformal bands, and screening decisions for the high-risk case is provided in Appendix~\ref{app:experiments-synthetic} as Figure~\ref{fig:calibration_band_example-hr}.

\section{Application to Real Data} \label{sec:application}

\subsection{Setup}

We apply our method to the same seven publicly available datasets analyzed by \citet{sesia2024doubly}: COLON, GBSG, HEART, METABRIC, PBC, RETINOPATHY, and VALCT. Details regarding the number of observations, covariates, and data sources are provided in Table~\ref{tab:datasets_summary} in Appendix~\ref{app:data}.
Additional information about the standard preprocessing procedures applied to these datasets---designed to ensure compatibility with various survival and censoring models---is also available in Appendix~\ref{app:data}.

We follow the procedure from Section~\ref{sec:experiments-synthetic}, comparing \textit{CSB} to the \textit{model} and \textit{KM} benchmarks. Since the true data-generating distribution is unknown, the oracle method is not applicable.
All methods are applied separately using each of the same four model types from Section~\ref{sec:experiments-synthetic} to estimate the survival distribution—generalized random forests (\textit{grf}), parametric survival regression (\textit{survreg}), random survival forests (\textit{rf}), and Cox proportional hazards models (\textit{cox})—with the censoring distribution always estimated using \textit{grf}. Each dataset is split into 60\% training, 20\% calibration, and 20\% testing sets, and all experiments are repeated 100 times with independent random splits.

As in the synthetic experiments, we screen test patients using rules of the form $P[T > t] > p$ (low-risk) or $P[T > t] < p$ (high-risk), evaluated at fixed time thresholds $t$ and probability levels $p$. We consider four tasks: low- and high-risk selection with $p = 0.8$ at time $t_1$, and with $p = 0.25$ at $t_2$. To facilitate comparisons while accounting for heterogeneity in the distribution of survival times across data sets, the time thresholds $t_1$ and $t_2$ are set to the 0.1 and 0.9 quantiles of all observed times, respectively.

Performance is evaluated on the test set using two metrics. The \textit{screened proportion} measures the fraction of test patients selected for each screening task, and should ideally be large under accurate selection. The \textit{survival rate} is the proportion of selected patients who survive beyond time $t$. Since this cannot be computed exactly due to censoring, we report deterministic bounds: the lower bound treats censored patients as failures at their censoring times, and the upper bound assumes they survive indefinitely. Together, these provide a conservative interval for the true survival rate among selected patients.

\subsection{Results}

Table~\ref{tab:data-summary} summarizes the results of this data analysis, aggregating the performance of each screening method across tasks, datasets, and repetitions, separately for each survival model. For each combination of survival model and screening method, we report the average proportion of test patients selected (\textit{screened proportion}) and the empirical distribution of verification outcomes: \textit{valid}, \textit{dubious}, and \textit{invalid}. Verification outcomes are determined for each task by assessing whether the survival rate bounds, averaged over 100 random repetitions, fall entirely on the correct side of the threshold $p$ (valid), straddle it (dubious), or lie entirely on the wrong side (invalid), accounting for uncertainty via two standard errors.

\begin{table}[!htb]
\centering
\caption{Summary of screening performance for high- and low-risk patient selection methods based on different survival models, aggregated across datasets, screening tasks, and repetitions. The \textit{screened proportion} denotes the average fraction of test patients selected. Although false positives cannot be directly verified due to censoring, approximate verification is possible and classified as \textit{valid}, \textit{dubious}, or \textit{invalid}. Note that the aggregated screened proportion is always 0.5 for the \textit{Model} and \textit{KM} benchmarks, which do not account for uncertainty and always assign each patient to either the high- or low-risk group for each pair $(p, t)$ of screening parameters. In contrast, CSB may leave some patients unclassified—neither high- nor low-risk—when the $(p,t)$ point falls within the ``gray area'' uncertainty band (e.g., see Figure~\ref{fig:calibration_band_example}), resulting in lower and more variable aggregated screening rates.}
\label{tab:data-summary}

\begin{tabular}{lccccc}
\toprule
\multicolumn{3}{c}{ } & \multicolumn{3}{c}{Verification Outcome} \\
\cmidrule(l{3pt}r{3pt}){4-6}
Method & Survival Model & Screened Proportion & Valid & Dubious & Invalid\\
\midrule
 & grf & 0.500 & 0.643 & 0.357 & 0.000\\

 & survreg & 0.500 & 0.714 & 0.250 & 0.036\\

 & rf & 0.500 & 0.536 & 0.464 & 0.000\\

\multirow[t]{-4}{*}{\raggedright\arraybackslash Model} & Cox & 0.500 & 0.571 & 0.393 & 0.036\\

 & grf & 0.500 & 0.786 & 0.214 & 0.000\\

 & survreg & 0.500 & 0.786 & 0.214 & 0.000\\

 & rf & 0.500 & 0.786 & 0.214 & 0.000\\

\multirow[t]{-4}{*}{\raggedright\arraybackslash KM} & Cox & 0.500 & 0.786 & 0.214 & 0.000\\

 & grf & 0.293 & 0.929 & 0.071 & 0.000\\

 & survreg & 0.245 & 0.964 & 0.036 & 0.000\\

 & rf & 0.301 & 0.857 & 0.143 & 0.000\\

\multirow[t]{-4}{*}{\raggedright\arraybackslash CSB} & Cox & 0.299 & 0.857 & 0.143 & 0.000\\
\bottomrule
\end{tabular}

\end{table}

CSB yields the highest proportion of valid selections across all survival models, with few or no dubious or invalid outcomes, despite screening fewer patients. In contrast, the \textit{Model} and \textit{KM} benchmarks screen more patients but exhibit lower verification quality.
These results illustrate a fundamental trade-off between power and reliability, unsurprisingly.

Compared to the synthetic data experiments from Section~\ref{sec:experiments-synthetic}, the advantage of CSB is somewhat less obvious here. This may be due in part to imperfect verification, stemming from censoring in the test data and the absence of oracle knowledge. Moreover, survival models may estimate the conditional survival distributions more accurately here than in the synthetic settings, reducing the relative gains from conformal inference. As noted by \citet{sesia2024doubly}, these datasets do not appear to be especially difficult to model. Nevertheless, in general, it remains difficult to assess the reliability of screening results produced by black-box methods. Our approach can provides more confidence in the calibration of such screening results, which may be particularly valuable in high-stakes applications where the cost of false positives is substantial, even if it entails some reduction in power.

Additional breakdowns of these results for specific screening tasks are provided in Appendix~\ref{app:data}, with low-risk and high-risk summaries shown in Tables~\ref{tab:data-detailed-lr} and~\ref{tab:data-detailed-hr}, respectively.

\FloatBarrier
\section{Discussion}

This paper introduced a conformal inference method for constructing uncertainty bands around individual survival curves under right-censoring, enabling statistically principled personalized risk screening under minimal assumptions on the data-generating process.

Experiments with synthetic data showed that screening based on uncalibrated black-box models can be unreliable, particularly in hard-to-model settings, whereas our method provides greater robustness, albeit with some conservativeness.
In our real data applications, standard survival models seemed to produce reasonable screening performance, likely because the fitted models were able to approximate the true survival distribution quite well. Nonetheless, our method remains appealing in more complex or high-stakes scenarios, where model misspecification is a concern and formal uncertainty quantification is essential.

Two limitations of the current approach are its reliance on asymptotic FDR control and its focus on pointwise inference, which assumes that the screening thresholds $t$ and $p$ are fixed in advance. While it may be possible to obtain finite-sample and simultaneous inference guarantees, doing so would likely require a significantly more conservative method. Understanding this trade-off between statistical rigor and screening power in greater depth presents an intriguing direction for future research.

An additional limitation is that our method can be conservative in some settings, resulting in fewer selections than the ideal oracle or even fewer than the black-box survival model when that model appears well calibrated. A possible way to mitigate this over-conservativeness in future work would be to incorporate an adaptive estimator of the proportion of null hypotheses \citep{10.1093/biomet/asae051}, similar to how Storey's method extends the BH procedure \citep{storey2004strong}.

Another promising direction for improvement concerns the treatment of censored calibration points. In Section~\ref{sec:methods-bands}, we proposed reallocating these samples to the training set but did not do this in practice due to concerns about sampling bias potentially degrading model quality. Recent work by \citet{farina2025doubly} introduces a more complex strategy for incorporating censored observations directly into the calibration phase, which may be possible to adapt to our setting. A potentially simpler alternative is to reallocate censored calibration samples to the training set and apply importance weighting during training to mitigate bias, while retaining the calibration procedure described in this paper, which uses only uncensored observations. Both strategies deserve further investigation.

Additional extensions include de-randomizing our inferences via model aggregation across different data splits \citep{carlsson2014aggregated,linusson2017calibration}, potentially using e-values \citep{vovk2023confidence, wang2022false, NEURIPS2023_cec8ad77}, and adapting the method to handle noisy or corrupted data \citep{sesia2023adaptive,pmlr-v230-clarkson24a}.

\section*{Software Availability}
Open-source software implementing our methods and experiments is available at \url{https://github.com/msesia/conformal_survival_screening}.

\section*{Acknowledgements} 
The authors thank two anonymous referees for detailed feedback and valuable suggestions on an earlier version of this manuscript.
M.~S.~is partly supported by NSF grant DMS 2210637, a Capital One CREDIF Research Award, and a Google Research Scholar Award.



\bibliography{bibliography}

\clearpage

\appendix
\section{Mathematical Proofs} \label{app:proofs}

\subsection{Proof of Proposition~\ref{prop:superuniform}}

\begin{proof}[of Proposition~\ref{prop:superuniform}]

Under the null hypothesis that $T_{n+j} \geq t$, because $\hat{s}_{\mathrm{lt}}(t'; x, t)$ is monotone increasing in $t'$ for all $x$, we have that, almost surely
\begin{align*}
  \hat{s}_{\mathrm{lt}}(t; X_{n+j}, t) \leq \hat{s}_{\mathrm{lt}}(T_{n+j}; X_{n+j}, t),
\end{align*}
and thus
\begin{align*}
  \tilde{\phi}_{\mathrm{lt}}(t; X_{n+j}) 
  & := \frac{1 + \sum_{i=1}^n \mathbb{I}\left\{ \hat{s}_{\mathrm{lt}}(T_i; X_i, t) \ge \hat{s}_{\mathrm{lt}}(t; X_{n+j}, t) \right\}}{n+1} \\
  & \geq \frac{1 + \sum_{i=1}^n \mathbb{I}\left\{ \hat{s}_{\mathrm{lt}}(T_i; X_i, t) \ge \hat{s}_{\mathrm{lt}}(T_{n+j}; X_{n+j}, t) \right\}}{n+1}.
\end{align*}
Hence
\begin{align*}
  \P{\tilde{\phi}_{\mathrm{lt}}(t; X_{n+j}) \le \alpha,\, T_{n+j} \le t}
  & \leq \P{\frac{1 + \sum_{i=1}^n \mathbb{I}\left\{ \hat{s}_{\mathrm{lt}}(T_i; X_i, t) \ge \hat{s}_{\mathrm{lt}}(T_{n+j}; X_{n+j}, t) \right\}}{n+1} \leq \alpha} \\
  & \leq \alpha,
\end{align*}
where the last inequality follows immediately through standard conformal inference arguments because $( \hat{s}_{\mathrm{lt}}(T_1; X_1, t), \ldots, \hat{s}_{\mathrm{lt}}(T_n; X_n, t), \hat{s}_{\mathrm{lt}}(T_{n+j}; X_{n+j}, t))$ are exchangeable.
\end{proof}

\subsection{Proofs of Theorems~\ref{thm:asymptotic-validity} and~\ref{thm:asymptotic-fdr}}

\begin{proof}[of Theorem~\ref{thm:asymptotic-validity}: Asymptotic validity of IPCW p-values]
We apply the finite-sample bound from Theorem~\ref{thm:pval-valid}, which states that for any \( t > 0 \) and any \( n \geq 1 \),
\begin{align*}
\mathbb{P} \left[ \hat{\phi}_{\mathrm{lt}}(t; X_{n+1}) \leq \alpha ,  T_{n+1} \geq t \right]
&\leq \alpha 
+ \frac{1}{\omega_{\min} \pi} \left[ \frac{2(2\omega_{\min} + 1)}{n} + 8 \sqrt{ \frac{ 2 \log(2n) }{n} } \right]
+ 2 \Delta_N.
\end{align*}

By assumption:
\begin{itemize}
\item \( \hat{w}(T; X) \geq \omega_{\min} > 0 \) almost surely,
\item \( \Delta_N := \left( \mathbb{E}\left[ \left( \frac{1}{\hat{w}(T; X)} - \frac{1}{w^*(T; X)} \right)^2 \right] \right)^{1/2} \to 0 \) as \( N \to \infty \).
\end{itemize}
Since the remaining terms in the bound all vanish as \( N,n \to \infty \), it follows that
\[
\limsup_{N,n \to \infty} \mathbb{P} \left[ \hat{\phi}_{\mathrm{lt}}(t; X_{n+1}) \leq \alpha ,  T_{n+1} \geq t \right] \leq \alpha.
\]

An identical argument applies to the limit involving $\hat{\phi}_{\mathrm{rt}}(t; X_{n+1})$, completing the proof.
\end{proof}

\begin{proof}[of Theorem~\ref{thm:asymptotic-fdr}: Asymptotic FDR control]
We prove the result for the left-tail p-values \( \hat{\phi}_{\mathrm{lt}}(t; X_{n+j}) \); the argument for right-tail p-values is analogous.

The proof proceeds in three steps:
\begin{enumerate}
    \item[(1)] We compare each empirical p-value to an oracle counterpart and establish uniform closeness.
    \item[(2)] We show that the BH rejection sets based on these two sets of p-values are close with high probability.
    \item[(3)] We conclude FDR control for the empirical procedure via stability and the oracle FDR guarantee.
\end{enumerate}

\paragraph{Step 1: Empirical p-values are close to oracle p-values.}
For each test point \( j \in \{1, \dots, m\} \), define the empirical and oracle p-values as:
\[
p_j := \hat{\phi}_{\mathrm{lt}}(t; X_{n+j}), \qquad p_j^* := \phi^*_{\mathrm{lt}}(t; X_{n+j}) := \mathbb{P} \left( \hat{s}_{\mathrm{lt}}(T; X, t) \ge \hat{s}_{\mathrm{lt}}(t; X_{n+j}, t) \right),
\]
where the probability in \( p_j^* \) is taken over an independent copy \( (T, X) \sim P \).

By Corollary~\ref{cor:survival-score-bound-both}, for any \( \delta > 0 \), with probability at least \( 1 - \delta \) over the calibration data,
\[
\max_{1 \le j \le m} |p_j - p_j^*| \le \epsilon_n,
\]
where
\[
\epsilon_n = \frac{1}{\omega_{\min} \pi} \left[ \frac{2(\omega_{\min} + 1)}{n} + 8 \sqrt{ \frac{ \log(2n/\delta) }{n} } \right] + 2 \Delta_N.
\]

\paragraph{Step 2: Stability of BH rejection sets.}
Let \( \mathcal{R} := \{j : p_j \le \widehat{\tau} \} \) and \( \mathcal{R}^* := \{j : p_j^* \le \widehat{\tau}^* \} \) be the rejection sets obtained by applying the BH procedure at level \( \alpha \) to the empirical and oracle p-values, respectively.

Before we can apply Lemma~\ref{lemma:BH-stability} (a general stability bound for the BH procedure), we need to verify that the oracle p-values \( p_j^* \) have a density bounded above. Since
\[
p_j^* = \phi^*_{\mathrm{lt}}(t; X_{n+j}) = 1 - F(\hat{s}_{\mathrm{lt}}(t; X_{n+j}, t)),
\]
where \( F \) is the CDF of the score variable \( \hat{s}_{\mathrm{lt}}(T; X, t) \), with corresponding density $f(u)=F'(u)$ uniformly bounded above and below across all \( u > 0 \). Then, the change-of-variables formula gives that the density $f^*$ of $p^*_j$ is:
\[
f^*(u) = \frac{f_{t}(F^{-1}(1 - u))}{f(F^{-1}(1 - u))},
\]
which, thanks to Assumptions~\ref{asmp:score-mono} and~\ref{asmp:score-regularity}, we know to be bounded from both above and below by:
\[
\frac{f_{\min}}{f_{\max}} \leq f^*(u) \leq \frac{f_{\max}}{f_{\min}} \quad \text{for all } u \in [0,1].
\]
Now we can apply Lemma~\ref{lemma:BH-stability} with \( p = (p_1, \dots, p_m) \) and \( q = (p_1^*, \dots, p_m^*) \), obtaining:
\[
\mathbb{P}(\mathcal{R} \ne \mathcal{R}^*) \le \delta + 2 \epsilon_n m^2 \cdot \frac{f_{\max}}{f_{\min}}.
\]

\paragraph{Step 3: Transferring FDR control from the oracle to empirical procedure.}
By Lemma~\ref{lemma:super-uniformity-oracle}, the oracle p-values \( p_j^* \) satisfy the super-uniformity property:
\[
\mathbb{P}(p_j^* \le \alpha, T_{n+j} \ge t) \le \alpha, \qquad \text{ for all } \alpha \in (0,1),
\]
and the p-values \( p_j^* \) are mutually independent (conditional on the calibration data). Therefore, by Theorem~\ref{thm:bh-fdr-joint-superuniform}, BH applied to \( (p_1^*, \ldots, p_m^*) \) controls the FDR below \( \alpha \):
\[
\mathrm{FDR}^* := \mathbb{E} \left[ \frac{|\mathcal{R}^* \cap \mathcal{H}_0|}{|\mathcal{R}^*| \vee 1} \right] \le \alpha,
\]
where \( \mathcal{H}_0 := \{j : T_{n+j} \ge t\} \) is the (random) set of true nulls.

\paragraph{Conclusion.}
By definition of the FDR and triangle inequality,
\[
|\mathrm{FDR} - \mathrm{FDR}^*| \le \mathbb{P}(\mathcal{R} \ne \mathcal{R}^*).
\]
Combining this with the oracle guarantee \( \mathrm{FDR}^* \le \alpha \) and the stability bound from Lemma~\ref{lemma:BH-stability}, we obtain the finite-sample bound:
\[
\mathrm{FDR} \le \alpha + \delta + 2 \epsilon_n m^2 \cdot \frac{f_{\max}}{f_{\min}}.
\]
Letting \( \delta \to 0 \) and noting that \( \epsilon_n \to 0 \) as \( n \to \infty \), we conclude:
\[
\limsup_{N,n \to \infty} \mathrm{FDR}_n \le \alpha \quad \text{for fixed } m.
\]
Moreover, if \( m = m_n \to \infty \) and \( m_n^2 \epsilon_n \to 0 \), then the finite-sample error term vanishes:
\[
\delta + 2 \epsilon_n m^2 \cdot \frac{f_{\max}}{f_{\min}} \to 0,
\]
and we conclude the joint limit:
\[
\limsup_{N,m,n \to \infty} \mathrm{FDR}_{m,n} \le \alpha.
\]
\end{proof}

\subsection{Auxiliary Results}

\subsubsection{Validity of Oracle p-Values}

\begin{lemma}
\label{lemma:super-uniformity-oracle}
Let $(X, T)$ and $(X', T')$ be independent copies of a pair of random variables taking values in $\mathcal{X} \times \mathbb{R}$, and let $s: \mathbb{R} \times \mathcal{X} \to \mathbb{R}$ be a fixed function that is monotone increasing in its first argument.
Define the oracle p-value function
\[
\phi^*(t, x) := \mathbb{P}\left( s(T, X) \ge s(t, x) \right),
\]
where the probability is taken over $(X, T)$.
Then, for any $\alpha \in [0,1]$,
\[
\mathbb{P}\left( \phi^*(t, X') \le \alpha,\ T' \ge t \right) \le \alpha.
\]

\end{lemma}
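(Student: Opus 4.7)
The strategy is to reduce the inequality to a standard super-uniformity fact for the oracle rank function, via the monotonicity of $s$ in its first argument.

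First I would exploit monotonicity to remove the fixed threshold $t$ in favor of the random variable $T'$. On the event $\{T' \ge t\}$, monotonicity of $s(\cdot, x)$ gives $s(t, X') \le s(T', X')$, so for any realization of the inner pair $(T,X)$,
\[
\mathbb{I}\{ s(T,X) \ge s(t, X') \} \;\ge\; \mathbb{I}\{ s(T,X) \ge s(T', X') \}.
\]
Taking expectation over the independent copy $(T,X)$ conditional on $(T', X')$ gives $\phi^*(t, X') \ge \phi^*(T', X')$ almost surely on $\{T' \ge t\}$. Hence
\[
\bigl\{ \phi^*(t, X') \le \alpha,\ T' \ge t \bigr\} \;\subseteq\; \bigl\{ \phi^*(T', X') \le \alpha \bigr\},
\]
and it suffices to bound the probability of the right-hand event by $\alpha$.

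Second, I would show that $\phi^*(T', X')$ is super-uniform by recognizing it as a one-sample oracle rank. Writing $U := s(T', X')$ and $V := s(T, X)$, these are i.i.d.\ real-valued random variables, and by construction $\phi^*(T', X') = \mathbb{P}(V \ge U \mid U) =: g(U)$, where $g(u) := \mathbb{P}(V \ge u)$ is non-increasing and left-continuous. I would then invoke the standard fact that $\mathbb{P}(g(U) \le \alpha) \le \alpha$: define $q_\alpha := \inf\{u : g(u) \le \alpha\}$, note $\{g(U) \le \alpha\} \subseteq \{U \ge q_\alpha\}$ by monotonicity of $g$, and observe that left-continuity of $g$ gives $g(q_\alpha) \le \alpha$, whence $\mathbb{P}(U \ge q_\alpha) = g(q_\alpha) \le \alpha$. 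Chaining these two steps yields the claim.

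The routine parts are the monotonicity reduction and the definition chasing. The only mildly delicate step is the super-uniformity of $g(U)$ when the distribution of $s(T,X)$ has atoms, so that $g(U) = 1 - F_{s(T,X)}(U^-)$ rather than $1 - F_{s(T,X)}(U)$; I would handle this via the $q_\alpha$ argument above, which bypasses any need for continuity of the score distribution and keeps the lemma fully general as stated (matching the generality required later in the paper, where ties in scores are broken by exogenous noise rather than assumed away structurally).
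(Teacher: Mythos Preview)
Your two-step strategy---use monotonicity of $s$ to replace $t$ by $T'$, then invoke super-uniformity of the oracle rank---is exactly the paper's approach, and your treatment of ties is in fact more careful than the paper's, which simply asserts $F_S(S')$ is uniform via the probability integral transform (true only when the score distribution is continuous).

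There is, however, a small slip in your second step. You write that ``left-continuity of $g$ gives $g(q_\alpha) \le \alpha$,'' but left-continuity is the wrong side: approaching $q_\alpha$ from below, where $g > \alpha$, only yields $g(q_\alpha) \ge \alpha$. What you need is the limit from the right. The easy patch is to split cases: if $g(q_\alpha) \le \alpha$ your argument goes through as written; if $g(q_\alpha) > \alpha$, then $\{g(U) \le \alpha\} \subseteq \{U > q_\alpha\}$ (strict inequality), and
\[
\mathbb{P}(U > q_\alpha) \;=\; \lim_{\epsilon \downarrow 0} \mathbb{P}(U \ge q_\alpha + \epsilon) \;=\; \lim_{\epsilon \downarrow 0} g(q_\alpha + \epsilon) \;\le\; \alpha,
\]
since $g(u) \le \alpha$ for all $u > q_\alpha$. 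Either way $\mathbb{P}(g(U) \le \alpha) \le \alpha$, and the rest of your argument stands.
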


\begin{proof}[of Lemma~\ref{lemma:super-uniformity-oracle}]
Let $S := -s(T, X)$ and $S' := -s(T', X')$, and define the cumulative distribution function $F_S(u) := \mathbb{P}(S \le u)$. By definition of $\phi^*$, we can write:
\[
\phi^*(t, X') = \mathbb{P}(S \le -s(t, X')) = F_S(-s(t, X')).
\]
Now observe that, under the condition $T' \ge t$ and monotonicity of $s$ in $t$, we have:
\[
s(t, X') \le s(T', X') \quad \Rightarrow \quad -s(t, X') \ge -s(T', X'),
\]
and since $F_S$ is non-decreasing we have the almost-surely inequality:
\[
\phi^*(t, X') = F_S(-s(t, X')) \geq F_S(-s(T', X')) = F_S(S').
\]
Hence:
\[
\mathbb{P}\left( \phi^*(t, X') \le \alpha,\ T' \ge t \right)
\le \mathbb{P}\left( F_S(S') \le \alpha,\ T' \ge t \right)
\le \mathbb{P}(F_S(S') \le \alpha) = \alpha,
\]
where the final equality uses the probability integral transform: since $S'$ has the same distribution as $S$ and $F_S$ is its CDF, the random variable $F_S(S')$ is uniform on $[0,1]$.
\end{proof}

\subsubsection{Finite-sample version of Theorem~\ref{thm:asymptotic-validity}}

\begin{theorem} \label{thm:pval-valid}
Under the assumptions of Theorem~\ref{thm:asymptotic-validity},
\begin{align*}
  \mathbb{P} \left[ \hat{\phi}_{\mathrm{lt}}(t; X_{n+1}) \leq \alpha, T_{n+1} \geq t \right] 
  & \leq \alpha + \frac{1}{\omega_{\min} \pi} \left[ \frac{2(2\omega_{\min} + 1)}{n} + 8 \sqrt{ \frac{ 2\log(2n \sqrt{n}) }{n} } \right] + 2 \Delta_N,
\end{align*}
for any $t>0$, and
\begin{align*}
  \mathbb{P} \left[ \hat{\phi}_{\mathrm{rt}}(t; X_{n+1}) \leq \alpha, T_{n+1} \leq t \right] 
  & \leq \alpha + \frac{1}{\omega_{\min} \pi} \left[ \frac{2(2\omega_{\min} + 1)}{n} + 8 \sqrt{ \frac{ 2\log(2n \sqrt{n}) }{n} } \right] + 2 \Delta_N.
\end{align*}

\end{theorem}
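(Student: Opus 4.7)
The plan is to bound the target probability by chaining through two intermediate p-values up to the super-oracle p-value $\phi^*_{\mathrm{lt}}(t; X_{n+1}) := \mathbb{P}(\hat{s}_{\mathrm{lt}}(T; X) \ge \hat{s}_{\mathrm{lt}}(t; X_{n+1}) \mid X_{n+1})$, to which Lemma~\ref{lemma:super-uniformity-oracle} directly applies: the scoring function $\hat{s}_{\mathrm{lt}}$ is frozen by the independent training split of Assumption~\ref{asmp:independence}, so for an independent test draw it is the ``fixed, monotone-increasing'' map required by that lemma, yielding $\mathbb{P}(\phi^*_{\mathrm{lt}}(t;X_{n+1}) \le \alpha,\, T_{n+1} \ge t) \le \alpha$. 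The two intermediate steps are: (i) replace $\hat{w}$ by the true weight $w^*(t,x) = 1/S_C(t\mid x)$ in both numerator and denominator of~\eqref{eq:pval-lt-censored} to obtain an oracle-weighted p-value $\hat{\phi}^{\star}_{\mathrm{lt}}$; and (ii) pass from $\hat{\phi}^{\star}_{\mathrm{lt}}$ to $\phi^*_{\mathrm{lt}}$ by uniform concentration.

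For step (i), I would write $\hat{\phi}_{\mathrm{lt}} - \hat{\phi}^{\star}_{\mathrm{lt}}$ over a common denominator; after expanding, both numerators are linear in $\hat{w}^{-1} - w^{*-1}$. Applying Cauchy--Schwarz pointwise, lower-bounding the random denominator via Assumption~\ref{asmp:weights-bounded} using $\sum_i E_i \hat{w}_i \ge \omega_{\min}\sum_i E_i$ (whose expectation is $\omega_{\min}\pi n$), and invoking the definition of $\Delta_N$, the expected absolute difference is bounded by $2\Delta_N$ up to the $1/(\omega_{\min}\pi)$ prefactor, which matches the corresponding term in the statement.

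For step (ii), I would condition on $X_{n+1}$ so that the threshold $s^\star := \hat{s}_{\mathrm{lt}}(t; X_{n+1})$ is deterministic, and exploit the IPCW identity $\mathbb{E}[E\, w^*(T;X) \mid T, X] = S_C(T\mid X)/S_C(T\mid X) = 1$ (valid under $C \perp T \mid X$) to deduce that the numerator and denominator of $\hat{\phi}^{\star}_{\mathrm{lt}}$ have means $G(s^\star) := \mathbb{P}(\hat{s}_{\mathrm{lt}}(T;X) \ge s^\star)$ and $1$, respectively. Because $s^\star$ is random, a \emph{uniform-in-}$s$ bound is needed on the weighted empirical process $s \mapsto n^{-1}\sum_i E_i w^*_i\, \mathbb{I}\{\hat{s}_{\mathrm{lt}}(T_i;X_i) \ge s\}$; this is supplied by a DKW-type inequality (the indicator class has VC dimension one and the weights are bounded), giving uniform deviation of order $\sqrt{\log(n)/n}$. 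Combining this with a high-probability lower bound on the denominator of order $\omega_{\min}\pi$, and bounding the ratio via $|A/B - G| \le |A - G|/B + G\,|1 - 1/B|$, produces $|\hat{\phi}^{\star}_{\mathrm{lt}} - \phi^*_{\mathrm{lt}}| = O((\omega_{\min}\pi)^{-1}\sqrt{\log(n\sqrt{n})/n})$; the ``$+1$'' constants in~\eqref{eq:pval-lt-censored} contribute the $2(2\omega_{\min}+1)/n$ additive term.

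The main obstacle is carrying out the uniform concentration cleanly in the presence of both the random threshold $s^\star$ and the random denominator, while book-keeping the constants $\omega_{\min}$ and $\pi$ so as to recover exactly the claimed $1/(\omega_{\min}\pi)$ prefactor and the $\sqrt{2\log(2n\sqrt{n})/n}$ rate. The right-tail claim follows by an entirely symmetric argument: replace $\hat{s}_{\mathrm{lt}}$ by the monotone-decreasing $\hat{s}_{\mathrm{rt}}$ and invoke the obvious analogue of Lemma~\ref{lemma:super-uniformity-oracle} (whose proof goes through with the inequality directions reversed), keeping the null event $\{T_{n+1} \le t\}$ instead of $\{T_{n+1} \ge t\}$.
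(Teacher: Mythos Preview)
Your overall strategy---chain $\hat\phi_{\mathrm{lt}}$ to the oracle $\phi^*_{\mathrm{lt}}$ and then invoke Lemma~\ref{lemma:super-uniformity-oracle}---is exactly the paper's, and your step~(ii) is essentially their stochastic step (VC/DKW on the weighted indicator process plus Hoeffding on the denominator). The difference is the intermediate waypoint: the paper passes through the \emph{population} ratio with \emph{estimated} weights (their $\tilde R$ in Theorem~\ref{thm:uniform-bound-reweighted}), i.e.\ $\hat\phi_{\mathrm{lt}}\to\tilde R\to\phi^*_{\mathrm{lt}}$, whereas you pass through the \emph{empirical} ratio with \emph{true} weights, $\hat\phi_{\mathrm{lt}}\to\hat\phi^{\star}_{\mathrm{lt}}\to\phi^*_{\mathrm{lt}}$. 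The paper's ordering is cleaner because its weight-replacement step compares two deterministic population quantities and therefore yields $\le 2\Delta_N$ \emph{surely}; this is why $2\Delta_N$ appears in the statement \emph{outside} the $1/(\omega_{\min}\pi)$ bracket. Your step~(i) compares two random ratios, so you only obtain a bound in expectation; converting that to a tail statement (via Markov or an extra concentration layer on the random denominator) will not recover the clean additive $+2\Delta_N$ without a prefactor, contrary to your claim that it ``matches.''

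There is also a slip in step~(i): after putting $\hat\phi_{\mathrm{lt}}-\hat\phi^{\star}_{\mathrm{lt}}$ over a common denominator, the resulting terms are linear in $\hat w - w^*$, not in $\hat w^{-1}-w^{*-1}$. Since $\Delta_N := \|1/\hat w - 1/w^*\|_2$, relating your difference to $\Delta_N$ requires the identity $|\hat w - w^*| = \hat w\,w^*\,|1/\hat w - 1/w^*|$, and the factor $\hat w\,w^*$ is not bounded under Assumptions~\ref{asmp:independence}--\ref{asmp:weights-consistent} (only a \emph{lower} bound on $\hat w$ is assumed). The paper's ordering sidesteps this entirely: because its bias step is at the population level, Cauchy--Schwarz lands directly on $\|1/E - 1/e(Z)\|_2=\Delta_N$ with no need for weight upper bounds.
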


\begin{proof}[of Theorem~\ref{thm:pval-valid}]
We prove the result for \( \hat{\phi}_{\mathrm{lt}} \); the argument for \( \hat{\phi}_{\mathrm{rt}} \) is identical and omitted.

For $\delta \in (0,1)$, define the event 
\begin{align*}
  \mathcal{E}_{t,\delta} & := \left\{ \sup_{x \in \mathbb{R}^d } \left| \hat{\phi}_{\mathrm{lt}}(t; x) - \phi^*_{\mathrm{lt}}(t; x) \right| \leq \epsilon_n(\delta) \right\}, \\
  \epsilon_n(\delta) & := \frac{1}{\omega_{\min} \pi} \left[ \frac{2(\omega_{\min} + 1)}{n} + 8 \sqrt{ \frac{ \log(2n/\delta) }{n} } \right] + 2 \Delta_N.
\end{align*}
Corollary~\ref{cor:survival-score-bound-both} implies:
\begin{align*}
  & \P{ \hat{\phi}_{\mathrm{lt}}(t; x) \leq \alpha, T_{n+1} \geq t \mid X_{n+1}=x } \\
  & \qquad = \P{ \hat{\phi}_{\mathrm{lt}}(t; x) \leq \alpha, \mathcal{E}_{t,\delta}, T_{n+1} \geq t \mid X_{n+1}=x }
   + \P{ \hat{\phi}_{\mathrm{lt}}(t; x) \leq \alpha, \mathcal{E}^c_{t,\delta}, T_{n+1} \geq t \mid X_{n+1}=x } \\
  & \qquad \leq \P{ \phi^*_{\mathrm{lt}}(t; x) \leq \alpha + \epsilon_n(\delta), T_{n+1} \geq t \mid X_{n+1}=x } + \delta.
\end{align*}
Therefore,
\begin{align*}
  \P{ \hat{\phi}_{\mathrm{lt}}(t; x) \leq \alpha, T_{n+1} \geq t }
  & \leq \P{ \phi^*_{\mathrm{lt}}(t; x) \leq \alpha + \epsilon_n(\delta), T_{n+1} \geq t } + \delta \\
  & \leq \alpha + \epsilon_n(\delta) + \delta,
\end{align*}
where the second inequality above follows directly from Lemma~\ref{lemma:super-uniformity-oracle}.
Finally, setting $\delta = 1/n$:
\begin{align*}
  \mathbb{P} \left[ \hat{\phi}_{\mathrm{lt}}(t; X_{n+1}) \leq \alpha, T_{n+1} \geq t \right] 
  & \leq \alpha + \frac{1}{\omega_{\min} \pi} \left[ \frac{2(2\omega_{\min} + 1)}{n} + 8 \sqrt{ \frac{ 2 \log(2n) }{n} } \right] + 2 \Delta_N.
\end{align*}

\end{proof}

\begin{corollary}[Specialized Instance of Theorem~\ref{thm:uniform-bound-reweighted}]
\label{cor:survival-score-bound-both}
Under the assumptions of Theorem~\ref{thm:asymptotic-validity}, for any $x \in \mathbb{R}^d$ and $t > 0$, define:
\begin{align*}
  \phi^*_{\mathrm{lt}}(t; x)
  & := \mathbb{P}\left( \hat{s}_{\mathrm{lt}}(T; X, t) \geq \hat{s}_{\mathrm{lt}}(t; x, t) \right), \\
  \phi^*_{\mathrm{rt}}(t; x)
  & := \mathbb{P}\left( \hat{s}_{\mathrm{rt}}(T; X, t) \geq \hat{s}_{\mathrm{rt}}(t; x, t) \right),
\end{align*}
where the probabilities are taken with respect to the randomness in $(X,T)$.

Then, for any \( t > 0 \) and \( \delta \in (0,1) \), with probability at least \( 1 - \delta \) over the randomness in the calibration data $\mathcal{D} = \{(X_i,T_i, C_i)\}_{i=1}^{n}$:
\begin{align*}
\sup_{x \in \mathbb{R}^d } \left| \hat{\phi}_{\mathrm{lt}}(t; x) - \phi^*_{\mathrm{lt}}(t; x) \right|
&\leq \frac{1}{\omega_{\min} \pi} \left[ \frac{2(\omega_{\min} + 1)}{n} + 8 \sqrt{ \frac{ \log(2n/\delta) }{n} } \right] + 2 \Delta_N.
\end{align*}
Similarly, for any \( t > 0 \) and \( \delta \in (0,1) \), with probability at least \( 1 - \delta \):
\begin{align*}
\sup_{x \in \mathbb{R}^d } \left| \hat{\phi}_{\mathrm{rt}}(t; x) - \phi^*_{\mathrm{rt}}(t; x) \right|
&\leq \frac{1}{\omega_{\min} \pi} \left[ \frac{2(\omega_{\min} + 1)}{n} + 8 \sqrt{ \frac{ \log(2n/\delta) }{n} } \right] + 2 \Delta_N.
\end{align*}
\end{corollary}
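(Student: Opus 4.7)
The plan is to obtain this corollary as a direct specialization of the general uniform concentration bound in Theorem~\ref{thm:uniform-bound-reweighted}. The key structural observation is that, for a fixed time $t$, the test covariate $x$ enters the definition of $\hat{\phi}_{\mathrm{lt}}(t; x)$ only through the scalar threshold $c := \hat{s}_{\mathrm{lt}}(t; x) \in \mathbb{R}$, and similarly for $\hat{\phi}_{\mathrm{rt}}(t; x)$. Consequently, the supremum over $x \in \mathbb{R}^d$ reduces to a supremum over scalar thresholds, and the relevant function class $\{\mathbb{I}\{u \ge c\} : c \in \mathbb{R}\}$ is a one-dimensional threshold family (VC dimension one). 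This reduction is what delivers the $\sqrt{\log(2n/\delta)/n}$ rate in the stated inequality.

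The next step is to write $\hat{\phi}_{\mathrm{lt}}(t; x) = (1 + n\hat{A}(c)) / (1 + n\hat{B})$, where $\hat{A}(c) := n^{-1} \sum_{i} E_i\, \hat{w}(T_i, X_i)\, \mathbb{I}\{\hat{s}_{\mathrm{lt}}(T_i; X_i) \ge c\}$ and $\hat{B} := n^{-1} \sum_i E_i\, \hat{w}(T_i, X_i)$. Under the standard conditional independence $T \indep C \mid X$, the oracle weight $w^*(t, x) = 1/S_C(t \mid x)$ satisfies the IPCW identity $\mathbb{E}[E \cdot w^*(T, X) \mid T, X] = 1$. Substituting $w^*$ for $\hat{w}$ in $\hat{A}(c)$ and $\hat{B}$ therefore gives population means equal to $\mathbb{P}(\hat{s}_{\mathrm{lt}}(T; X) \ge c) = \phi^*_{\mathrm{lt}}(t; x)$ and $1$, respectively. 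Hence the oracle population ratio coincides exactly with the target $\phi^*_{\mathrm{lt}}(t; x)$.

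To control the deviation, I would decompose the error into three pieces: (i) a weight-approximation error incurred by replacing $\hat{w}$ with $w^*$ inside both $\hat{A}(c)$ and $\hat{B}$, controlled in $L^2$ by $\Delta_N$ via Cauchy–Schwarz, which eventually produces the $2\Delta_N$ term; (ii) a uniform empirical-process deviation of $\hat{A}(c)$ (with oracle weights) from its mean over all thresholds $c \in \mathbb{R}$, handled by a VC- or Dvoretzky–Kiefer–Wolfowitz-type inequality that exploits the one-dimensional threshold structure together with the boundedness of the weights; and (iii) a scalar Hoeffding-type concentration for $\hat{B}$ around its mean. Combining these with the lower bound $\pi := \mathbb{P}(T \le C) > 0$ on $\mathbb{E}[\hat{B}]$ and the weight lower bound $\omega_{\min}$ produces the $1/(\omega_{\min}\pi)$ prefactor, while the $O(1/n)$ term absorbs the finite-sample corrections from the additive $+1$'s in the p-value. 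The right-tail statement follows by the same argument, since $\hat{s}_{\mathrm{rt}}(t; x)$ also enters only as a scalar threshold.

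The main obstacle I anticipate is careful bookkeeping rather than any novel idea: verifying the precise hypotheses of Theorem~\ref{thm:uniform-bound-reweighted}, correctly tracking the interplay between the ratio structure (which requires expanding $\hat{A}/\hat{B} - A^*/B^*$ rather than bounding numerator and denominator separately), and ensuring that the constants $(\omega_{\min}+1)$, $8$, and $2$ match the stated bound. A secondary subtlety is combining the $L^2$ weight-error bound with the almost-sure lower bound on the weights when passing from the denominator's mean to $\pi$, which must be done uniformly in $c$ to preserve the supremum.
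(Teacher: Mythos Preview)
Your proposal is correct and follows the same approach as the paper: the key structural observation that $x$ enters only through the scalar threshold $c = \hat{s}_{\mathrm{lt}}(t;x)$, reducing $\sup_x$ to $\sup_u$, is exactly what the paper uses. The paper's proof is even shorter than your sketch, however, because it treats Theorem~\ref{thm:uniform-bound-reweighted} as a black box: after setting $Z_i = (X_i, T_i)$, $D_i = \mathbf{1}(T_i \le C_i)$, $\psi(Z_i) = \hat{s}_{\mathrm{lt}}(T_i; X_i)$, and identifying the weights, it simply notes $\hat{\phi}_{\mathrm{lt}}(t;x) = R_n(\hat{s}_{\mathrm{lt}}(t;x))$ and $\phi^*_{\mathrm{lt}}(t;x) = R(\hat{s}_{\mathrm{lt}}(t;x))$, and invokes the theorem directly---your three-part decomposition (weight-approximation via $\Delta_N$, uniform VC deviation for the numerator, Hoeffding for the denominator) and the ratio expansion you describe are precisely the internals of the proof of Theorem~\ref{thm:uniform-bound-reweighted}, not of the corollary itself, so you are doing more bookkeeping than necessary.
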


\begin{proof}[of Corollary~\ref{cor:survival-score-bound-both}]
We prove the bound for \( \hat{\phi}_{\mathrm{lt}} \); the argument for \( \hat{\phi}_{\mathrm{rt}} \) is identical and omitted.
Define the following variables:
\begin{align*}
  Z_i &= (X_i, T_i), \\
  D_i &= \mathbf{1}(T_i \leq C_i), \\
  E_i &= \hat{w}(T_i; X_i), \\
  e(Z_i) &= w^*(T_i; X_i), \\
  \psi(Z_i) &= \hat{s}_{\mathrm{lt}}(T_i; X_i, t).
\end{align*}
Recall the definitions of \( R_n(u) \) and \( R(u) \) from Theorem~\ref{thm:uniform-bound-reweighted}:
\begin{align*}
R_n(u) &:= \frac{1 + \sum_{i=1}^n (D_i / E_i) \cdot \mathbf{1}(\psi(Z_i) \geq u)}{1 + \sum_{i=1}^n (D_i / E_i)}, \\
R(u) &:= \mathbb{E} \left[ \frac{D}{e(Z)} \cdot \mathbf{1}(\psi(Z) \geq u) \right].
\end{align*}
For any \( x \in \mathbb{R}^d \) and \( t > 0 \), let $\hat{\phi}_{\mathrm{lt}}(t; x) := R_n\left( \hat{s}_{\mathrm{lt}}(t; x, t) \right)$ and note that, by Assumption~\ref{asmp:cic},
\[
\phi^*_{\mathrm{lt}}(t; x) = R\left( \hat{s}_{\mathrm{lt}}(t; x, t) \right).
\]
Therefore,
\begin{align*}
\sup_{x \in \mathbb{R}^d} \left| \hat{\phi}_{\mathrm{lt}}(t; x) - \phi^*_{\mathrm{lt}}(t; x) \right|
&= \sup_{x \in \mathbb{R}^d} \left| R_n(\hat{s}_{\mathrm{lt}}(t; x, t)) - R(\hat{s}_{\mathrm{lt}}(t; x, t)) \right| \\
&\leq \sup_{u \in \mathbb{R}} \left| R_n(u) - R(u) \right|,
\end{align*}
and the result follows directly from Theorem~\ref{thm:uniform-bound-reweighted}, noting that \( E_i \geq \omega_{\min} \) and \( \mathbb{E}[D] = \pi \).
\end{proof}

\subsubsection{Uniform Concentration for General Reweighted Survival Estimator}
\begin{theorem}\label{thm:uniform-bound-reweighted}
Let $(Z_i, D_i, E_i)_{i=1}^n$ be i.i.d.\ samples from some unknown distribution $\mathcal{P}$, where $Z_i \in \mathbb{R}^d$, $D_i \in \{0,1\}$, and $E_i \in [\underline{e}, 1]$ almost surely for some constant $\underline{e} > 0$.
For any $z \in \mathbb{R}^d$, define $e(z) := \mathbb{P}(D_i = 1 \mid Z_i = z)$, which may be interpreted as a propensity score.
 For any fixed function $\psi : \mathbb{R}^{d+1} \to \mathbb{R}$ and any $u \in \mathbb{R}$, define
\begin{align*}
& R_n(u) := \frac{1 + \sum_{i=1}^n (D_i / E_i) \cdot \mathbf{1}(\psi(Z_i) \geq u)}{1 + \sum_{i=1}^n (D_i / E_i)},
& R(u) := \mathbb{E} \left[ \frac{D}{e(X)} \cdot \mathbf{1}(\psi(Z) \geq u) \right],
\end{align*}
where $(Z, D, E)$ is a generic random sample from $\mathcal{P}$.
Define also
\begin{align*}
& \Delta_N := \left( \mathbb{E} \left[ \left( \frac{1}{E} - \frac{1}{e(X)} \right)^2 \right] \right)^{1/2},
& \pi := \mathbb{E}[D].
\end{align*}
Then for any $\delta \in (0,1)$, with probability at least $1 - \delta$,
\begin{align}
\sup_{u \in \mathbb{R}} |R_n(u) - R(u)|
&\leq \frac{1}{\underline{e}\pi} \left[ \frac{2(\underline{e} + 1)}{n} + 8 \sqrt{ \frac{ \log(2n/\delta) }{n} } \right] + 2 \Delta_N.
\end{align}
\end{theorem}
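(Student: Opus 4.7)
The plan is to express $R_n(u) = \tilde A_n(u)/\tilde B_n$, where $\tilde A_n(u) := (1/n) + (1/n)\sum_{i=1}^n (D_i/E_i)\mathbf{1}(\psi(Z_i)\geq u)$ and $\tilde B_n := (1/n) + (1/n)\sum_{i=1}^n D_i/E_i$, and to compare with $R(u) = A(u)/B$ for $A(u) := \mathbb{E}[(D/e(Z))\mathbf{1}(\psi(Z)\geq u)]$ and $B := \mathbb{E}[D/e(Z)] = 1$ (the last identity follows by conditioning on $Z$ and using $\mathbb{E}[D\mid Z] = e(Z)$). The elementary ratio identity
\begin{align*}
R_n(u) - R(u) = \frac{\tilde A_n(u) - A(u)}{\tilde B_n} + \frac{A(u)\,(1 - \tilde B_n)}{\tilde B_n},
\end{align*}
combined with $A(u) = R(u) \leq 1$, reduces the task to three ingredients: (i) a uniform bound on $\sup_u |\tilde A_n(u) - A(u)|$, (ii) a bound on $|\tilde B_n - 1|$, and (iii) a high-probability lower bound on $\tilde B_n$.

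For (i), I would add and subtract $\mathbb{E}[(D/E)\mathbf{1}(\psi(Z)\geq u)]$ to obtain the decomposition
\begin{align*}
\tilde A_n(u) - A(u) = \tfrac{1}{n} + \Bigl\{\tfrac{1}{n}\sum_{i=1}^n \tfrac{D_i}{E_i}\mathbf{1}(\psi(Z_i)\geq u) - \mathbb{E}\bigl[\tfrac{D}{E}\mathbf{1}(\psi(Z)\geq u)\bigr]\Bigr\} + \mathbb{E}\bigl[D\bigl(\tfrac{1}{E}-\tfrac{1}{e(Z)}\bigr)\mathbf{1}(\psi(Z)\geq u)\bigr].
\end{align*}
The deterministic bias in the last bracket is uniformly bounded by $\Delta_N$ via Cauchy--Schwarz (using $|D|\leq 1$ and $|\mathbf{1}(\cdot)|\leq 1$). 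The stochastic term in the middle is a weighted empirical process indexed by the monotone threshold class $\{\mathbf{1}(\psi(\cdot)\geq u) : u \in \mathbb{R}\}$, with summands bounded in $[0, 1/\underline{e}]$. Because the indicator vector $(\mathbf{1}(\psi(Z_i)\geq u))_{i=1}^n$ takes at most $n+1$ distinct values as $u$ sweeps through $\mathbb{R}$ (changing only at the order statistics of $\psi(Z_1),\dots,\psi(Z_n)$), I would reduce the supremum to a maximum over $n+1$ effective thresholds, apply Hoeffding at each, and union-bound, obtaining a uniform rate of order $(1/\underline{e})\sqrt{\log(n/\delta)/n}$.

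For (ii) and (iii), an ordinary Hoeffding bound on $(1/n)\sum_i D_i/E_i$ (envelope $1/\underline{e}$) together with the same Cauchy--Schwarz bias bound gives $|\tilde B_n - 1| \leq 1/n + O((1/\underline{e})\sqrt{\log(1/\delta)/n}) + \Delta_N$; and since $E_i \leq 1$ implies $D_i/E_i \geq D_i$, a one-sided Hoeffding bound on $\hat\pi_n := (1/n)\sum_i D_i$ yields $\tilde B_n \geq \hat\pi_n \geq \pi - O(\sqrt{\log(1/\delta)/n})$, hence $1/\tilde B_n \leq O(1/\pi)$ with high probability. Plugging (i)--(iii) into the ratio identity, applying a union bound over the two concentration events, and tracking constants yields the stated inequality, with the $1/\underline{e}$ factor arising from the empirical-process envelope and the $1/\pi$ factor from the denominator lower bound. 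The main technical step is the uniform concentration in (i); the key observation that keeps it elementary is that the monotonicity of $u \mapsto \mathbf{1}(\psi(\cdot)\geq u)$ allows $\sup_u$ to be realized by a finite discretization of size $O(n)$, so VC/chaining machinery can be bypassed in favor of a direct Hoeffding-plus-union-bound argument.
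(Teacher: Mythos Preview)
Your high-level strategy---ratio identity, then control the numerator via uniform concentration plus a Cauchy--Schwarz bias term, and the denominator separately---parallels the paper's, but two technical choices create gaps relative to the stated bound.

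\textbf{Denominator handling.} You keep the empirical denominator $\tilde B_n$ in the ratio identity and lower-bound it via $\tilde B_n \ge \hat\pi_n \gtrsim \pi$ on an extra high-probability event. The paper instead pivots through the intermediate target $\tilde R(u) := \mathbb{E}[(D/E)\mathbf 1(\psi(Z)\ge u)] / \mathbb{E}[D/E]$: it bounds $|R_n - \tilde R|$ using the \emph{deterministic} denominator $\tilde A := \mathbb{E}[D/E] \ge \pi$ (from $E\le 1$, no extra event needed), and separately bounds $|\tilde R - R| \le 2\Delta_N$. This is not cosmetic: in your decomposition both $\Delta_N$ contributions sit in the numerator of the ratio and get divided by $\tilde B_n$, producing roughly $2\Delta_N/\pi$ rather than the theorem's free-standing $2\Delta_N$. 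Since $\pi \le 1$, ``tracking constants'' in your scheme does \emph{not} recover the stated inequality. The fix is precisely the paper's pivot: isolate the bias as $|\tilde R - R|$ (where applying the ratio inequality with target denominator $1$ and using $\tilde N(u)/\tilde A \le 1$ makes the denominator drop out), and only then handle the purely stochastic piece $|R_n - \tilde R|$ with the clean factor $1/\tilde A \le 1/\pi$.

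\textbf{Uniform concentration.} The step ``reduce the supremum to a maximum over the $n{+}1$ effective thresholds, apply Hoeffding at each, and union-bound'' fails as written: those thresholds are the order statistics of $\psi(Z_1),\dots,\psi(Z_n)$, hence data-dependent, and Hoeffding requires a fixed $u$. The paper sidesteps this by invoking a VC-type uniform bound (Wainwright, Thm.~4.10 and Lem.~4.14). Your instinct to stay elementary is salvageable, but the correct route uses a \emph{deterministic} grid: choose $M$ points at the quantiles of the monotone population function $G(u):=\mathbb{E}[(D/E)\mathbf 1(\psi(Z)\ge u)]$, apply Hoeffding with envelope $1/\underline e$ at each grid point, union-bound over the $M$ points, and then interpolate using monotonicity of both $G$ and its empirical analogue at cost $\le G(-\infty)/M \le 1/(\underline e\,M)$. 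Taking $M=n$ recovers the $\sqrt{\log(n/\delta)/n}$ rate and is indeed a clean alternative to the VC citation.
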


\begin{proof}[of Theorem~\ref{thm:uniform-bound-reweighted}]
The proof splits the error into two parts: a stochastic deviation term (comparing the estimator $R_n(u)$ to a population analogue $\tilde{R}(u)$ based on the approximate weights $E_i$), and a bias term due to discrepancy between $E_i$ and the propensity score $e(X_i)$.

Define
\begin{align*}
\tilde{R}(u) := \frac{ \mathbb{E} \left[ (D/E) \cdot \mathbf{1}(\psi(Z) \geq u) \right] }{ \mathbb{E} \left[ D/E \right] },
\end{align*}
where the expectations are over the joint distribution of $(X, D, Z, E)$. Then decompose the error:
\begin{align*}
|R_n(u) - R(u)| \leq |R_n(u) - \tilde{R}(u)| + |\tilde{R}(u) - R(u)|.
\end{align*}

\paragraph{Step 1 (Bounding the stochastic deviation).} We begin by computing a uniform bound for $|R_n(u) - \tilde{R}(u)|$. Let $W_i := D_i / E_i \in [0, 1/\underline{e}]$ and $W := D / E \in [0, 1/\underline{e}]$. Define
\begin{align*}
N_n(u) &:= \frac{1 + \sum_{i=1}^n W_i \cdot \mathbf{1}(\psi(Z_i) \geq u)}{1 + n} , \\
A_n &:= \frac{1 + \sum_{i=1}^n W_i}{1 + n}, \\
\tilde{N}(u) &:= \mathbb{E}[W \cdot \mathbf{1}(\psi(Z) \geq u)], \\
\tilde{A} &:= \mathbb{E}[W],
\end{align*}
so that
\begin{align*}
& R_n(u) = \frac{N_n(u)}{A_n},
& \tilde{R}(u) = \frac{\tilde{N}(u)}{\tilde{A}}.
\end{align*}

Using the inequality
\begin{align*}
\left| \frac{a}{b} - \frac{a'}{b'} \right| \leq \frac{|a - a'|}{b'} + |a| \left| \frac{1}{b} - \frac{1}{b'} \right|,
\end{align*}
with \( a = N_n(u), a' = \tilde{N}(u), b = A_n, b' = \tilde{A} \), and the fact that \( N_n(u) \leq A_n \), we obtain
\begin{align} \label{eq:decomp-1}
\begin{split}
|R_n(u) - \tilde{R}(u)| 
  &\leq \frac{|N_n(u) - \tilde{N}(u)|}{\tilde{A}} + A_n \left| \frac{1}{A_n} - \frac{1}{\tilde{A}} \right| \\
  &= \frac{|N_n(u) - \tilde{N}(u)| + |A_n - \tilde{A}|}{\tilde{A}} \\
  &\leq \frac{|N_n(u) - \tilde{N}(u)| + |A_n - \tilde{A}|}{\pi},
\end{split}
\end{align}
because $E \leq 1$ and thus
$\tilde{A} = \mathbb{E}[W] = \mathbb{E}[D/E] \geq \mathbb{E}[ D ] = \pi$.\\

\textit{Uniform concentration bound for $|N_n(u) - \tilde{N}(u)|$.}
To bound the deviation
\begin{align*}
\sup_{u \in \mathbb{R}} \left| \frac{1}{n} \sum_{i=1}^n W_i \mathbf{1}(\psi(Z_i) \geq u) - \mathbb{E}[W \cdot \mathbf{1}(\psi(Z) \geq u)] \right|,
\end{align*}
note that the class \( \{ u \mapsto W_i \mathbf{1}(\psi(Z_i) \geq u) \} \) has envelope bounded by \( 1/\underline{e} \) and VC dimension 1. Applying Theorem 4.10 and Lemma 4.14 in Wainwright (2019), we obtain: with probability at least \( 1 - \delta \),
\begin{align*}
\sup_{u \in \mathbb{R}} \left| \frac{1}{n} \sum_{i=1}^n W_i \mathbf{1}(\psi(Z_i) \geq u) - \tilde{N}(u) \right|
\leq \frac{4}{\underline{e}} \sqrt{ \frac{ \log(n+1) }{n} } + \sqrt{ \frac{2 \log(1/\delta)}{n \underline{e}^2} }.
\end{align*}

Now recall that
\begin{align*}
N_n(u) = \frac{1}{1+n} + \frac{1}{1+n} \sum_{i=1}^n W_i \mathbf{1}(\psi(Z_i) \geq u),
\end{align*}
so we may write
\begin{align*}
|N_n(u) - \tilde{N}(u)| 
&\leq \frac{1 + 1/\underline{e}}{1+n} + \left| \frac{1}{n} \sum_{i=1}^n W_i \mathbf{1}(\psi(Z_i) \geq u) - \tilde{N}(u) \right|.
\end{align*}

Therefore, with probability at least \( 1 - \delta \),
\begin{align} \label{eq:uniform-bound-1}
\sup_{u \in \mathbb{R}} |N_n(u) - \tilde{N}(u)| 
\leq \frac{1 + 1/\underline{e}}{1+n} 
+ \frac{1}{\underline{e}} \left( 4 \sqrt{ \frac{ \log(n+1) }{n} } + \sqrt{ \frac{2 \log(1/\delta)}{n} } \right).
\end{align}

\textit{Hoeffding bound for \(|A_n - \tilde{A}|\).}
Recall that
\begin{align*}
A_n := \frac{1 + \sum_{i=1}^n W_i}{1 + n}, \qquad \tilde{A} := \mathbb{E}[W].
\end{align*}
We decompose the deviation:
\begin{align*}
|A_n - \tilde{A}| 
&\leq \frac{1+1/\underline{e}}{1+n} + \left| \frac{1}{n} \sum_{i=1}^n W_i - \mathbb{E}[W] \right|.
\end{align*}
Applying Hoeffding’s inequality for \( W_i \in [0, 1/\underline{e}] \), we obtain that, with probability at least \( 1 - \delta \),
\begin{align}  \label{eq:hoeffding-bound-1}
|A_n - \tilde{A}| 
\leq \frac{1+1/\underline{e}}{1+n} + \frac{1}{\underline{e}} \sqrt{ \frac{ \log(2/\delta) }{2n} }.
\end{align}

Combining~\eqref{eq:decomp-1} with~\eqref{eq:uniform-bound-1} and~\eqref{eq:hoeffding-bound-1}, we obtain that, with probability at least \( 1 - \delta \),
\begin{align} \label{eq:stochastic-bound}
\begin{split}
  \sup_{u \in \mathbb{R}} |R_n(u) - \tilde{R}(u)| 
  & \leq \frac{1}{\underline{e}\pi} \left[ \frac{2(\underline{e} + 1)}{1+n} 
    + 4 \sqrt{ \frac{ \log(n+1) }{n} } + \sqrt{ \frac{2 \log(1/\delta)}{n} } + \sqrt{ \frac{ \log(2/\delta) }{2n} } \right] \\
  & \leq \frac{1}{\underline{e}\pi} \left[ \frac{2(\underline{e} + 1)}{n} 
    + 8 \sqrt{ \frac{ \log(2n/\delta) }{n} }  \right].
\end{split}
\end{align}

\paragraph{Step 2 (Bounding the bias due to approximation).}
We now bound the difference between $\tilde{R}(u)$ and the ideal target $R(u)$, which uses the true propensity $e(X)$. Using the same ratio inequality as before we obtain:
\begin{align} \label{eq:decomp-2}
\begin{split}
|\tilde{R}(u) - R(u)| 
&= \left| \tilde{N}(u) - R(u) \right| + \left| 1 - \tilde{A}\right|.
\end{split}
\end{align}

To bound the first term, note that, using the Cauchy-Schwarz inequality,
\begin{align*}
\left| \tilde{N}(u) - R(u) \right| 
&= \left| \mathbb{E} \left[ \left( \frac{1}{E} - \frac{1}{e(X)} \right) D \cdot \mathbf{1}(\psi(Z) \geq u) \right] \right| \\
&\leq \mathbb{E} \left[ \left| \frac{1}{E} - \frac{1}{e(X)} \right| \right] 
\leq \sqrt{ \mathbb{E} \left[ \left( \frac{1}{E} - \frac{1}{e(X)} \right)^2 \right] } =: \Delta_N.
\end{align*}
For the second term, using the same argument;
\begin{align*}
|1 - \tilde{A}| 
= \left| 1 - \mathbb{E}\left[ \frac{D}{E} \right] \right|
= \left| \mathbb{E} \left[ \left( \frac{1}{e(X)} - \frac{1}{E} \right) D \right] \right| 
\leq \Delta_N.
\end{align*}
Therefore, we conclude:
\begin{align} \label{eq:bias-bound}
\sup_{u \in \mathbb{R}} |\tilde{R}(u) - R(u)| \leq 2 \Delta_N.
\end{align}

\paragraph{Step 3 (Conclusion).} Combining the stochastic deviation bound~\eqref{eq:stochastic-bound} with the approximation bias bound~\eqref{eq:bias-bound}, we conclude that with probability at least $1 - \delta$,
\begin{align*}
\sup_{u \in \mathbb{R}} |R_n(u) - R(u)|
&\leq \sup_u |R_n(u) - \tilde{R}(u)| + \sup_u |\tilde{R}(u) - R(u)| \\
&\leq \frac{1}{\underline{e}\pi} \left[ \frac{2(\underline{e} + 1)}{n} + 8 \sqrt{ \frac{ \log(2n/\delta) }{n} } \right] + 2 \Delta_N.
\end{align*}
\end{proof}

\subsubsection{Stability and validity of the BH procedure}

\begin{lemma}[Stability of BH under small perturbations]
\label{lemma:BH-stability}
Let $p = (p_1, \dots, p_m)$ and $q = (q_1, \dots, q_m)$ be two vectors of p-values in $[0,1]^m$. Suppose:
\begin{itemize}
  \item (Closeness) For some $\epsilon > 0$ and $\delta \in (0,1)$, we have
  \[
  \mathbb{P}\left( \max_{1 \le j \le m} |p_j - q_j| \le \epsilon \right) \ge 1 - \delta.
  \]
  \item (Smoothness) Each $q_j$ has a density $f_j$ supported on $[0,1]$ satisfying $f_j(t) \le f_{\max}$ for all $t \in [0,1]$.
\end{itemize}
Let $\mathcal{R}_p$ and $\mathcal{R}_q$ be the rejection sets from applying the BH procedure at level $\alpha$ to $p$ and $q$, respectively. Then:
\[
\mathbb{P}(\mathcal{R}_p \ne \mathcal{R}_q) \le \delta + 2\epsilon m^2 f_{\max}.
\]
\end{lemma}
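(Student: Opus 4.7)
The plan is to reduce the question of rejection-set stability to a finite combinatorial condition, then use the density bound on the $q_j$'s to control the probability that this condition fails. The starting observation is that the BH procedure at level $\alpha$ depends on the p-value vector only through the ``counts above threshold'' on the canonical grid $t_k := k\alpha/m$, $k = 1, \dots, m$: the BH threshold $\widehat{\tau}$ is the largest $t_k$ for which $\#\{j : p_j \le t_k\} \ge k$, and the rejection set is $\{j : p_j \le \widehat{\tau}\}$. Consequently, the rejection set is determined entirely by the $m \times m$ indicator matrix $\bigl(\mathbf{1}(p_j \le t_k)\bigr)_{j,k}$.

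The key step is a deterministic implication: if $\max_j |p_j - q_j| \le \epsilon$ and no $q_j$ lies in $E := \bigcup_{k=1}^m [t_k - \epsilon,\, t_k + \epsilon]$, then $\mathbf{1}(p_j \le t_k) = \mathbf{1}(q_j \le t_k)$ for all $j, k$, and hence $\mathcal{R}_p = \mathcal{R}_q$. Indeed, a disagreement in any single indicator would force $q_j$ to be within $\epsilon$ of some $t_k$, contradicting $q_j \notin E$. Let $\mathcal{E}_1$ denote the closeness event $\{\max_j |p_j - q_j| \le \epsilon\}$ and $\mathcal{E}_2$ the event $\{q_j \notin E \text{ for all } j\}$; the above shows $\mathcal{E}_1 \cap \mathcal{E}_2 \subseteq \{\mathcal{R}_p = \mathcal{R}_q\}$.

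It remains to bound $\mathbb{P}(\mathcal{E}_2^c)$. Since each $q_j$ has a density bounded by $f_{\max}$ on $[0,1]$, for any fixed $k$ we have $\mathbb{P}(q_j \in [t_k - \epsilon, t_k + \epsilon]) \le 2\epsilon f_{\max}$. A union bound over the $m$ indices $j$ and the $m$ grid points $t_k$ yields $\mathbb{P}(\mathcal{E}_2^c) \le 2\epsilon m^2 f_{\max}$. Combining with $\mathbb{P}(\mathcal{E}_1^c) \le \delta$ by hypothesis gives
\[
\mathbb{P}(\mathcal{R}_p \ne \mathcal{R}_q) \le \mathbb{P}(\mathcal{E}_1^c) + \mathbb{P}(\mathcal{E}_2^c) \le \delta + 2\epsilon m^2 f_{\max},
\]
which is the desired bound. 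There is no serious obstacle: the only subtle point is recognizing that the BH outcome is a function of the indicators on the discrete grid $\{t_k\}_{k=1}^m$, which collapses an a priori infinite comparison into a finite one that can be handled by a straightforward union bound.
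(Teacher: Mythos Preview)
Your proposal is correct and follows essentially the same approach as the paper: define the closeness event, define the instability region $\bigcup_k [t_k - \epsilon, t_k + \epsilon]$ around the BH thresholds $t_k = k\alpha/m$, argue that on the intersection of closeness and ``no $q_j$ near any threshold'' all indicators $\mathbf{1}(p_j \le t_k)$ coincide with $\mathbf{1}(q_j \le t_k)$ and hence the BH rejection sets agree, and conclude by a union bound. The only cosmetic difference is that the paper first bounds the Lebesgue measure of the instability region by $2\epsilon m$ and then unions over $j$, whereas you union over $(j,k)$ directly; both give the same $2\epsilon m^2 f_{\max}$ term.
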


\begin{proof}[of Lemma~\ref{lemma:BH-stability}]
Define the event
\[
\mathcal{E} := \left\{ \max_{1 \le j \le m} |p_j - q_j| \le \epsilon \right\},
\]
which holds with probability at least $1 - \delta$ by assumption.

For each $k \in \{1, \dots, m\}$, the BH procedure compares p-values to candidate thresholds $\tau_k := \frac{\alpha k}{m}$. Define the ``instability region'':
\[
\mathcal{B}_\epsilon := \bigcup_{k=1}^m \left( \tau_k - \epsilon,\ \tau_k + \epsilon \right),
\]
which is the union of $m$ intervals of width $2\epsilon$, and hence has total Lebesgue measure at most $2\epsilon m$.

For any fixed index $j$, the probability that $q_j \in \mathcal{B}_\epsilon$ is easily bounded using the smoothness assumption:
$\mathbb{P}(q_j \in \mathcal{B}_\epsilon) \le 2\epsilon m f_{\max}$.
Applying a union bound over all $j = 1, \dots, m$, we get:
\[
\mathbb{P}(\exists j: q_j \in \mathcal{B}_\epsilon) \le 2\epsilon m^2 f_{\max}.
\]

Let $A$ be the ``bad event'' where either closeness fails or some $q_j$ lies near a threshold:
\[
A := \mathcal{E}^c \cup \{ \exists j: q_j \in \mathcal{B}_\epsilon \}.
\]
Then:
\[
\mathbb{P}(A) \le \delta + 2\epsilon m^2 f_{\max}.
\]

On the complement $A^c$ (i.e., when $p$ and $q$ are close and all $q_j$ are away from thresholds), the BH procedure makes the same rejection decisions on $p$ and $q$ because: each possible threshold $\tau_k$ is fixed; each $p_j$ and $q_j$ differ by at most $\epsilon$; no $q_j$ lies within $\epsilon$ of any threshold.
Hence, for every $j$, the relation \( p_j \le \tau_k \) matches \( q_j \le \tau_k \) for all $k$.
Therefore, the sorted comparisons and BH cutoffs lead to identical rejection sets:
$\mathcal{R}_p = \mathcal{R}_q$ on the event $A^c$.
Thus,
\[
\mathbb{P}(\mathcal{R}_p \ne \mathcal{R}_q) \le \mathbb{P}(A) \le \delta + 2\epsilon m^2 f_{\max}.
\]
\end{proof}

\begin{theorem}{\cite[Thm.~3]{jin2023selection}}
\label{thm:bh-fdr-joint-superuniform}
Let $(H_1, \dots, H_m) \in \{0,1\}^m$ be random variables indicating the status of $m$ hypotheses, where $H_j = 1$ means that the $j$th hypothesis is a true null.
Let $(p_1, \dots, p_m) \in [0,1]^m$ be random variables, interpreted as p-values for the corresponding hypotheses.
Assume:
\begin{itemize}
  \item The p-values $p_1, \dots, p_m$ are mutually independent.
  \item The hypothesis indicators $H_1, \dots, H_m$ are mutually independent.
  \item Each p-value $p_j$ is independent of $H_i$ for all $i \ne j$.
  \item For each $j \in \{1, \dots, m\}$, the p-value $p_j$ satisfies the joint super-uniformity condition:
  \[
  \mathbb{P}(p_j \le \alpha,\ H_j = 1) \le \alpha \qquad \text{for all } \alpha \in [0,1].
  \]
\end{itemize}

Then the Benjamini-Hochberg (BH) procedure at level $\alpha \in (0,1)$, applied to $(p_1, \dots, p_m)$, controls the false discovery rate:
\[
\mathrm{FDR} := \mathbb{E} \left[ \frac{|\mathcal{R} \cap \mathcal{H}_0|}{|\mathcal{R}| \vee 1} \right] \le \alpha,
\]
where:
\begin{itemize}
  \item $\mathcal{H}_0 := \{j : H_j = 1\}$ is the (random) set of true null hypotheses,
  \item $\mathcal{R} := \{j : p_j \le \widehat{\tau} \}$ is the BH rejection set,
  \item $\widehat{\tau} := \max \left\{ \frac{k \alpha}{m} : p_{(k)} \le \frac{k \alpha}{m} \right\}$ is the BH threshold, with $p_{(1)} \le \cdots \le p_{(m)}$ the order statistics of $(p_1, \dots, p_m)$.
\end{itemize}
Note: This result is a simplified version of the argument in~\cite{jin2023selection}, specialized to the case of independent p-values (rather than PRDS).
\end{theorem}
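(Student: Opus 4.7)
\textbf{Proof proposal for Theorem~\ref{thm:bh-fdr-joint-superuniform}.}

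The plan is to reduce FDR control to a single-coordinate calculation via a leave-one-out (``placeholder'') modification of the BH procedure, and then close with the joint super-uniformity assumption. Concretely, I first rewrite
\[
\mathrm{FDR} = \sum_{j=1}^m \mathbb{E}\!\left[ \frac{\mathbb{I}(p_j \le \widehat{\tau},\, H_j = 1)}{|\mathcal{R}| \vee 1} \right],
\]
so it suffices to bound each summand by $\alpha/m$. For fixed $j$, introduce the auxiliary procedure that applies BH at level $\alpha$ to the vector obtained from $p$ by replacing $p_j$ with $0$, and let $\widehat{\tau}_j$ and $|\mathcal{R}_j|$ denote its threshold and rejection count. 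By construction $\widehat{\tau}_j$ and $|\mathcal{R}_j|$ are measurable with respect to $\mathcal{F}_{-j} := \sigma\bigl(\{(p_i, H_i) : i \ne j\}\bigr)$, and since the placeholder $0$ is always below $\alpha/m$, we have $|\mathcal{R}_j| \ge 1$ and hence $\widehat{\tau}_j = \alpha |\mathcal{R}_j|/m$ identically.

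The technical crux is the following leave-one-out identity:
\[
\frac{\mathbb{I}(p_j \le \widehat{\tau})}{|\mathcal{R}| \vee 1} \;=\; \frac{\mathbb{I}(p_j \le \widehat{\tau}_j)}{|\mathcal{R}_j| \vee 1}.
\]
To prove it I will argue both inclusions between the events $\{p_j \le \widehat{\tau}\}$ and $\{p_j \le \widehat{\tau}_j\}$ and show equality of the denominators on that event. Writing $R(\tau) := \#\{i : p_i \le \tau\}$ for the original data and $R_j(\tau) := R(\tau) + \mathbb{I}(p_j > \tau)$ for the placeholder version, the BH cutoff is the largest $\tau \in \{k\alpha/m\}_{k=1}^m$ with $R(\tau) \ge m\tau/\alpha$, and similarly for $R_j$. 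If $p_j \le \widehat{\tau}$, then $R_j(\widehat{\tau}) = R(\widehat{\tau}) \ge m\widehat{\tau}/\alpha$, so $\widehat{\tau}$ is admissible for the placeholder procedure and $\widehat{\tau}_j \ge \widehat{\tau}$; conversely if $p_j \le \widehat{\tau}_j$, then $R(\widehat{\tau}_j) = R_j(\widehat{\tau}_j) \ge m\widehat{\tau}_j/\alpha$, so $\widehat{\tau} \ge \widehat{\tau}_j$. Together these yield $\widehat{\tau} = \widehat{\tau}_j$ (and hence $|\mathcal{R}| = |\mathcal{R}_j|$) on either event, establishing the identity.

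With the identity in hand, the conclusion is a short conditioning argument. Under the independence assumptions, the pair $(p_j, H_j)$ is independent of $\mathcal{F}_{-j}$, so conditioning on $\mathcal{F}_{-j}$ and using that $\widehat{\tau}_j$ and $|\mathcal{R}_j|$ are $\mathcal{F}_{-j}$-measurable,
\[
\mathbb{E}\!\left[ \frac{\mathbb{I}(p_j \le \widehat{\tau}_j,\, H_j = 1)}{|\mathcal{R}_j| \vee 1} \,\Big|\, \mathcal{F}_{-j} \right] = \frac{\mathbb{P}(p_j \le t,\, H_j = 1)\big|_{t = \widehat{\tau}_j}}{|\mathcal{R}_j| \vee 1} \le \frac{\widehat{\tau}_j}{|\mathcal{R}_j| \vee 1} = \frac{\alpha}{m},
\]
where the inequality uses the joint super-uniformity hypothesis and the final equality uses $\widehat{\tau}_j = \alpha |\mathcal{R}_j|/m$ together with $|\mathcal{R}_j| \ge 1$. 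Taking expectations, applying the leave-one-out identity, and summing over $j \in [m]$ gives $\mathrm{FDR} \le \alpha$, as required.

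The main obstacle is verifying the leave-one-out identity cleanly; once that is in place, everything else is algebra. A secondary subtlety worth flagging is that the three independence bullets in the statement must be interpreted as providing the joint independence of $(p_j, H_j)$ from $\{(p_i, H_i) : i \ne j\}$ (not merely pairwise independence among the various coordinates), which is the minimal condition needed to pass the $\mathcal{F}_{-j}$-conditioning step. This is the standard reading and matches the i.i.d.\ pairs model in which the theorem is applied in Theorem~\ref{thm:asymptotic-fdr}.
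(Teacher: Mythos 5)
Your proposal is correct and follows essentially the same route as the paper's proof: both hinge on the leave-one-out trick of replacing $p_j$ by $0$, the resulting measurability of the modified threshold/rejection count with respect to the other coordinates, and the joint super-uniformity bound applied after conditioning (the paper merely phrases the key identity as a decomposition over the values $k$ of $|\mathcal{R}|$, whereas you prove equality of the full ratio directly). Your remark about reading the three independence bullets as joint independence of $(p_j,H_j)$ from the remaining pairs applies equally to the paper's own conditioning step, so it is not a gap relative to the paper.
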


\begin{proof}[of Theorem~\ref{thm:bh-fdr-joint-superuniform}]
Let $\mathcal{R} := \{j : p_j \le \widehat{\tau} \}$ be the BH rejection set, and let $R_j := \mathbf{1}\{j \in \mathcal{R}\}$. Then the false discovery rate can be written as:
\[
\mathrm{FDR} = \mathbb{E} \left[ \frac{|\mathcal{R} \cap \mathcal{H}_0|}{|\mathcal{R}| \vee 1} \right] = \mathbb{E} \left[ \sum_{j=1}^m \frac{H_j R_j}{|\mathcal{R}| \vee 1} \right].
\]
Using the identity $R_j = \mathbf{1}\{p_j \le |\mathcal{R}|\alpha / m\}$ and decomposing over possible values of $|\mathcal{R}| = k$, we write:
\[
\mathrm{FDR} = \sum_{j=1}^m \sum_{k=1}^m \frac{1}{k} \, \mathbb{E} \left[ \mathbf{1}\{|\mathcal{R}| = k\} \cdot H_j \cdot \mathbf{1}\left\{p_j \le \tfrac{k \alpha}{m} \right\} \right].
\]
Now fix any index $j \in \{1, \dots, m\}$, and define the modified rejection set $\mathcal{R}(p_j \rightarrow 0)$ to be the BH rejection set obtained by replacing $p_j$ with 0 while keeping all other p-values fixed.
Because the BH procedure is monotone in each coordinate, and because reducing $p_j$ does not reduce the size of the rejection set, we have:
\[
\mathbf{1}\{|\mathcal{R}| = k\} \cdot \mathbf{1}\left\{p_j \le \tfrac{k \alpha}{m} \right\}
=
\mathbf{1}\{|\mathcal{R}(p_j \rightarrow 0)| = k\} \cdot \mathbf{1}\left\{p_j \le \tfrac{k \alpha}{m} \right\}.
\]
Let $\mathcal{F}_j := \sigma(p_1, \dots, p_{j-1}, 0, p_{j+1}, \dots, p_m)$ denote the sigma-algebra generated by the other p-values excluding $p_j$.
Then, using independence of $H_j$ and $p_j$ from $\mathcal{F}_j$, and applying the tower property:
\begin{align*}
\mathrm{FDR}
&= \sum_{j=1}^m \sum_{k=1}^m \frac{1}{k} \, \mathbb{E} \left[ \mathbf{1}\{|\mathcal{R}(p_j \rightarrow 0)| = k\} \cdot H_j \cdot \mathbf{1}\left\{p_j \le \tfrac{k \alpha}{m} \right\} \right] \\
&= \sum_{j=1}^m \sum_{k=1}^m \frac{1}{k} \, \mathbb{E} \left[ \mathbf{1}\{|\mathcal{R}(p_j \rightarrow 0)| = k\} \cdot \mathbb{E} \left[ H_j \cdot \mathbf{1}\left\{p_j \le \tfrac{k \alpha}{m} \right\} \mid \mathcal{F}_j \right] \right] \\
&= \sum_{j=1}^m \sum_{k=1}^m \frac{1}{k} \, \mathbb{E} \left[ \mathbf{1}\{|\mathcal{R}(p_j \rightarrow 0)| = k\} \cdot \mathbb{P}\left(p_j \le \tfrac{k \alpha}{m},\, H_j = 1 \right) \right] \\
&\le \sum_{j=1}^m \sum_{k=1}^m \frac{1}{k} \cdot \mathbb{E} \left[ \mathbf{1}\{|\mathcal{R}(p_j \rightarrow 0)| = k\} \cdot \tfrac{k \alpha}{m} \right] \\
&= \frac{\alpha}{m} \sum_{j=1}^m \sum_{k=1}^m \mathbb{P}\left( |\mathcal{R}(p_j \rightarrow 0)| = k \right) \le \alpha.
\end{align*}
\end{proof}

\FloatBarrier
\section{Additional Results from Experiments with Synthetic Data} \label{app:experiments-synthetic}

\subsection{Additional Details on Data Generating Distributions} \label{app:experiments-synthetic-main}

\begin{table}[H]
\centering
\caption{Summary of the four synthetic data generation settings considered for numerical experiments. Settings 1, 2, and 3 are adapted from \cite{sesia2024doubly}, with Setting 3 originally introduced in \cite{candes2023conformalized}. Setting 4 is new. In this table, $X$ denotes a $p$-dimensional vector of covariates for a random individual, and $X_j$ indicates its $j$-th element for any $j \in [p]$. Not all covariates are used to generate $Y$; most serve as noisy variables, making it more challenging to fit accurate survival and censoring models. In some experiments, we simplify the training process by excluding a subset of these noisy variables.}
\label{tab:distributions-synthetic}
\begin{tabular}{c c p{0.75\textwidth}}
\toprule
Setting & $p$ & Covariate, Survival, and Censoring Distributions \\ 
\midrule
\multirow{5}{*}{1} & \multirow{5}{*}{100} & $X$\quad\;\;: $\text{Unif}([0,1]^p)$ \\
                   &                      & $T \mid X$: LogNormal, $\mu_{\text{s}}(X) = (X_2 > \frac{1}{2}) + (X_3 < \frac{1}{2}) + (1-X_1)^{0.25}$,\newline{\color{white}$T \mid X$: LogNormal, }$\sigma_{\text{s}}(X) = \frac{1-X_1}{10}$ \\
                   &                      & $C \mid X$: LogNormal, $\mu_{\text{c}}(X) = (X_2 > \frac{1}{2}) + (X_3 < \frac{1}{2}) + (1-X_1)^4 + \frac{4}{10}$, \newline{\color{white}$C \mid X$: LogNormal, }$\sigma_{\text{c}}(X) = \frac{X_2}{10}$ \\
\midrule
\multirow{3}{*}{2} & \multirow{3}{*}{100} & $X$\quad\;\;: $\text{Unif}([0,1]^p)$ \\
                   &                      & $T \mid X$: LogNormal, $\mu_{\text{s}}(X) = X_1^{0.25}$, $\sigma_{\text{s}}(X) = 0.1$ \\
                   &                      & $C \mid X$: LogNormal, $\mu_{\text{c}}(X) = X_1^4 + \frac{4}{10}$, $\sigma_{\text{c}}(X) = 0.1$ \\
\midrule
\multirow{4}{*}{3} & \multirow{4}{*}{100} & $X$\quad\;\;: $\text{Unif}([-1,1]^p)$ \\
                   &                      & $T \mid X$: LogNormal, $\mu_{\text{s}}(X) = \log(2) + 1 + 0.55(X_1^2 - X_3X_5)$,\newline{\color{white}$T \mid X$: LogNormal, }$\sigma_{\text{s}}(X) = |X_{10}| + 1$ \\
                   &                      & $C \mid X$: Exponential, $\lambda_{\text{c}}(X) = 0.4$ \\
\midrule
\multirow{7}{*}{4} & \multirow{7}{*}{100} & $X$\quad\;\;: $\text{Unif}([-1,1]^p)$ \\
                   &                      & $T \mid X$: LogNormal, \newline{\color{white}$T \mid X$:}$\mu_{\text{s}}(X) = 0.2(1+X_1)X_2 + \log(2) \mathbf{1}(X_3 > 0) + \log(10) \mathbf{1}(X_3 \leq 0)$,\newline{\color{white}$T \mid X$:}$\sigma_{\text{s}}(X) = 0.25$ \\
                   &                      & $C \mid X$: LogNormal, $\mu_{\text{c}}(X) = 2 + 0.5 X_1$, $\sigma_{\text{c}}(X) = 0.1$ \\
                   &                      & Shifted $T \mid X$: LogNormal, \newline{\color{white}$T \mid X$:}$\mu_{\text{s}}(X) = 0.2(1+X_1)X_2 + \log(10)$, $\sigma_{\text{s}}(X) = 0.25$\\
\bottomrule
\end{tabular}
\end{table}

\FloatBarrier
\subsection{Effect of the Training Sample Size}

\begin{figure}[H]
    \centering
    \includegraphics[width=\textwidth]{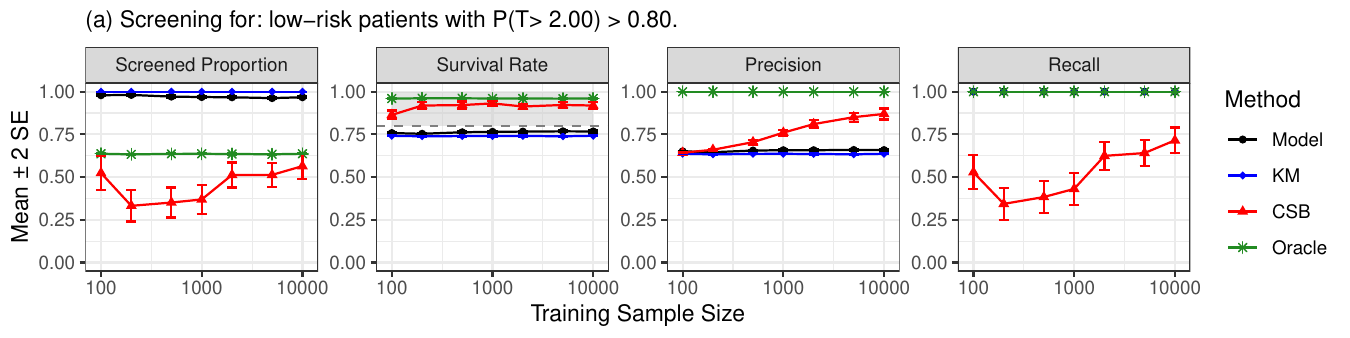}
    \includegraphics[width=\textwidth]{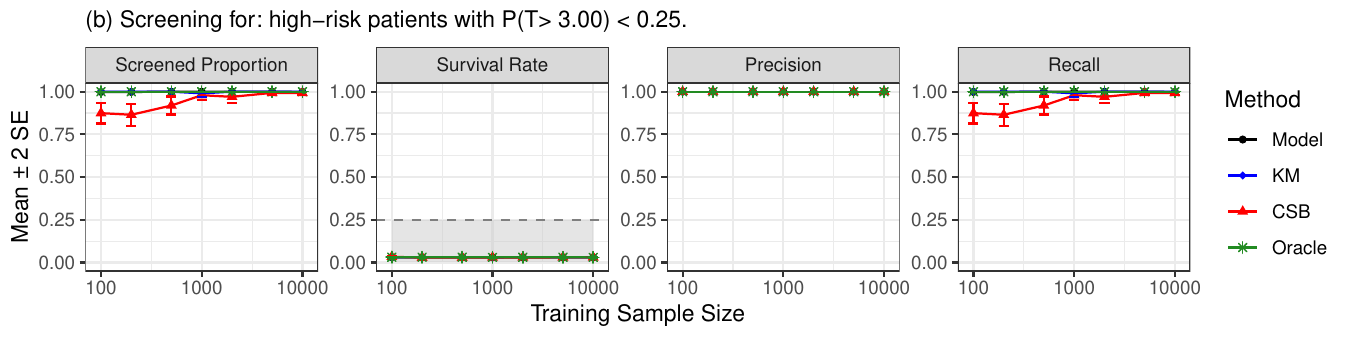}
    \caption{
Patient screening performance as a function of training sample size in a moderately challenging synthetic data scenario (Setting 2 from Table~\ref{tab:distributions-synthetic}).
Other details are as in Figure~\ref{fig:exp-setup1-setting1}.
}
    \label{fig:exp-setup1-setting2}
\end{figure}

\begin{figure}[H]
    \centering
    \includegraphics[width=\textwidth]{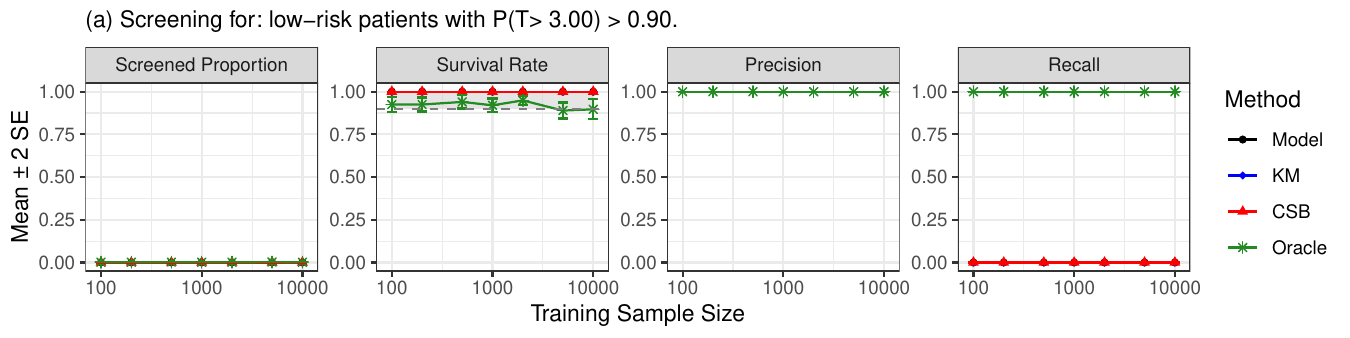}
    \includegraphics[width=\textwidth]{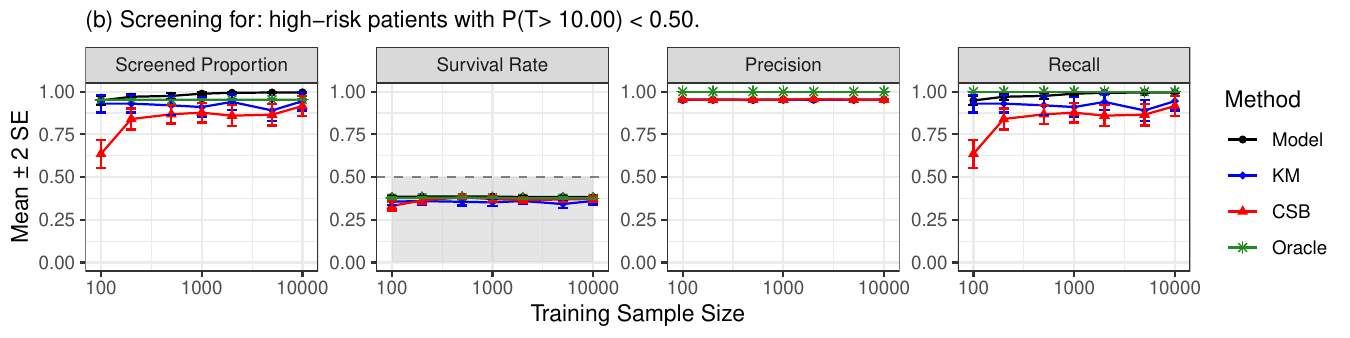}
    \caption{
Patient screening performance as a function of training sample size in a relatively easy synthetic data scenario (Setting 3 from Table~\ref{tab:distributions-synthetic}).
Other details are as in Figure~\ref{fig:exp-setup1-setting1}.
}
    \label{fig:exp-setup1-setting3}
\end{figure}

\FloatBarrier
\subsection{Effect of the Calibration Sample Size}

\begin{figure}[H]
    \centering
    \includegraphics[width=\textwidth]{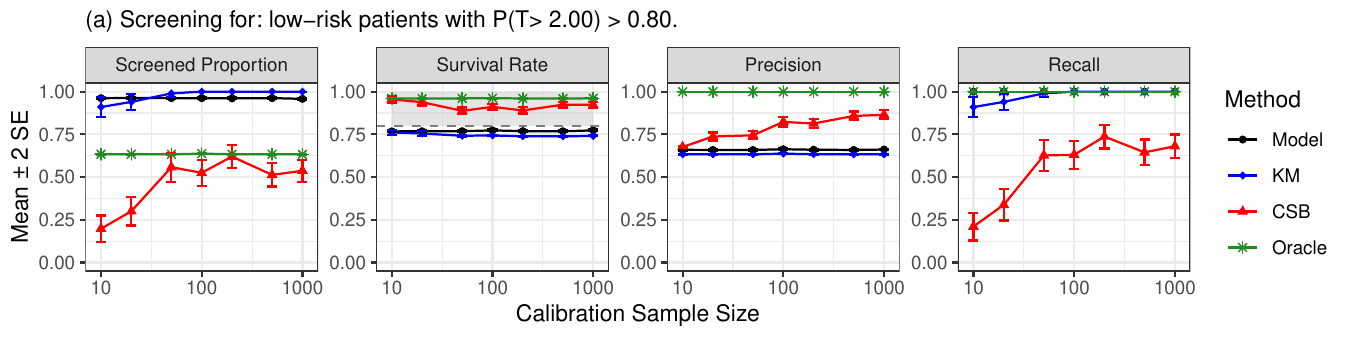}
    \includegraphics[width=\textwidth]{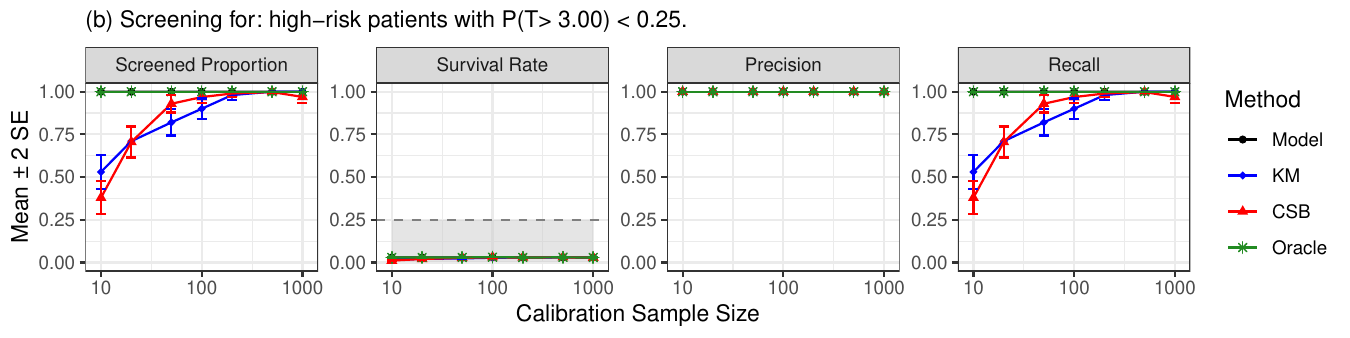}
    \caption{
Effect of calibration sample size on patient screening performance with conformal survival bands (CSB) in a moderately challenging synthetic data scenario (Setting 2 from Table~\ref{tab:distributions-synthetic}).
Other details are as in Figure~\ref{fig:exp-setup3-setting1}.
}
    \label{fig:exp-setup3-setting2}
\end{figure}

\begin{figure}[H]
    \centering
    \includegraphics[width=\textwidth]{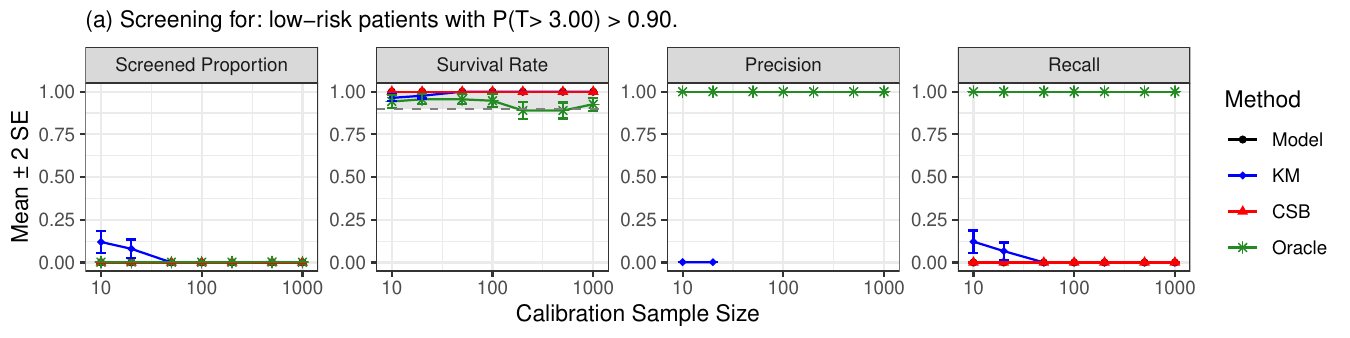}
    \includegraphics[width=\textwidth]{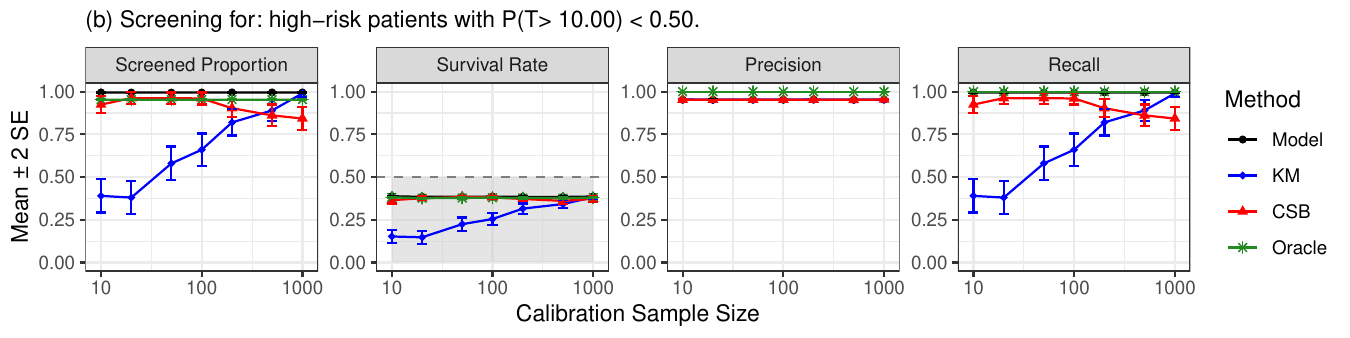}
    \caption{
Effect of calibration sample size on patient screening performance with conformal survival bands (CSB) in a relatively easy synthetic data scenario (Setting 3 from Table~\ref{tab:distributions-synthetic}).
Other details are as in Figure~\ref{fig:exp-setup3-setting1}.
}
    \label{fig:exp-setup3-setting3}
\end{figure}

\FloatBarrier
\subsection{Effect of the Training Sample Size for the Censoring Model}

\begin{figure}[H]
    \centering
    \includegraphics[width=\textwidth]{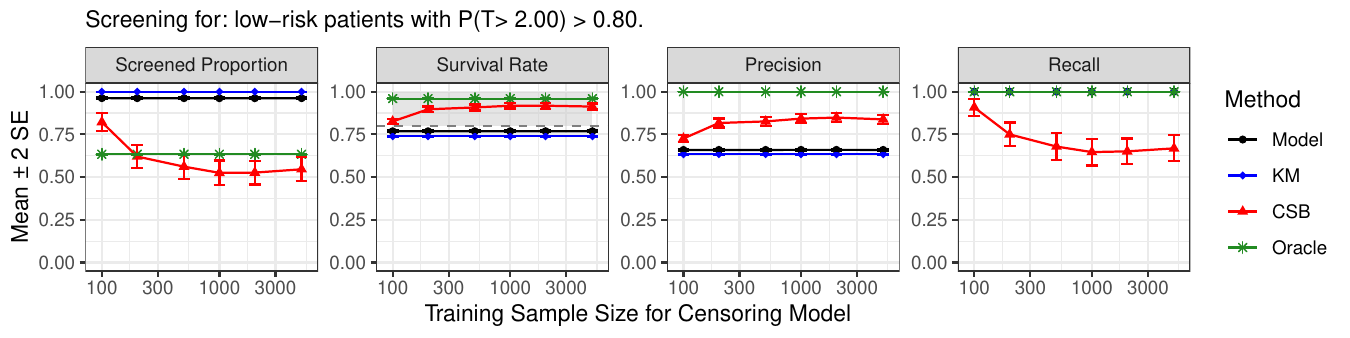}
    \includegraphics[width=\textwidth]{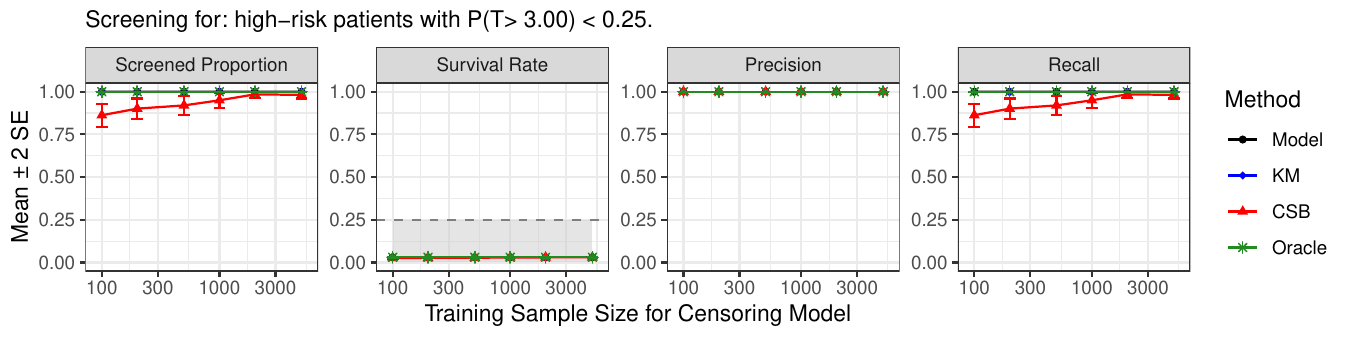}
\caption{
Effect of the training sample size used for fitting the censoring model on patient screening performance with conformal survival bands (CSB) in a moderately challenging synthetic data scenario (Setting 2 from Table~\ref{tab:distributions-synthetic}).
Other details are as in Figure~\ref{fig:exp-setup5-setting1}.
}
    \label{fig:exp-setup5-setting2}
\end{figure}

\begin{figure}[H]
    \centering
    \includegraphics[width=\textwidth]{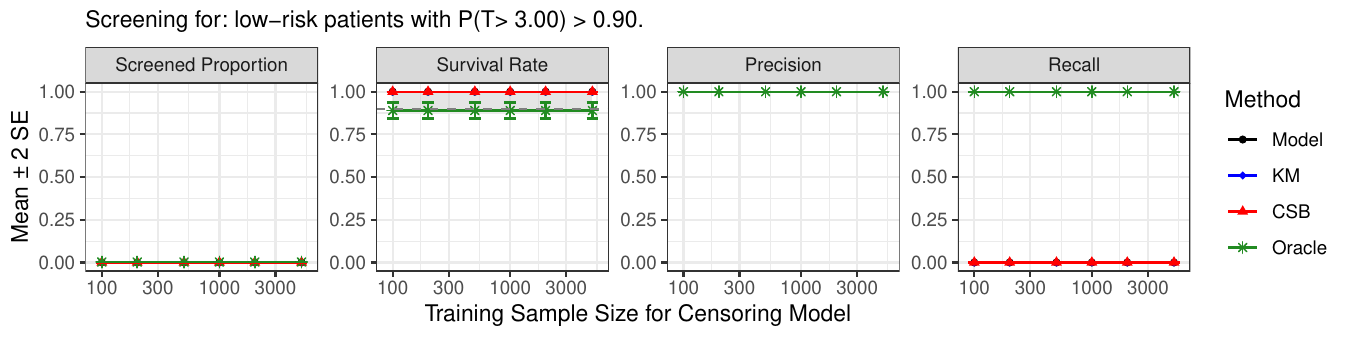}
    \includegraphics[width=\textwidth]{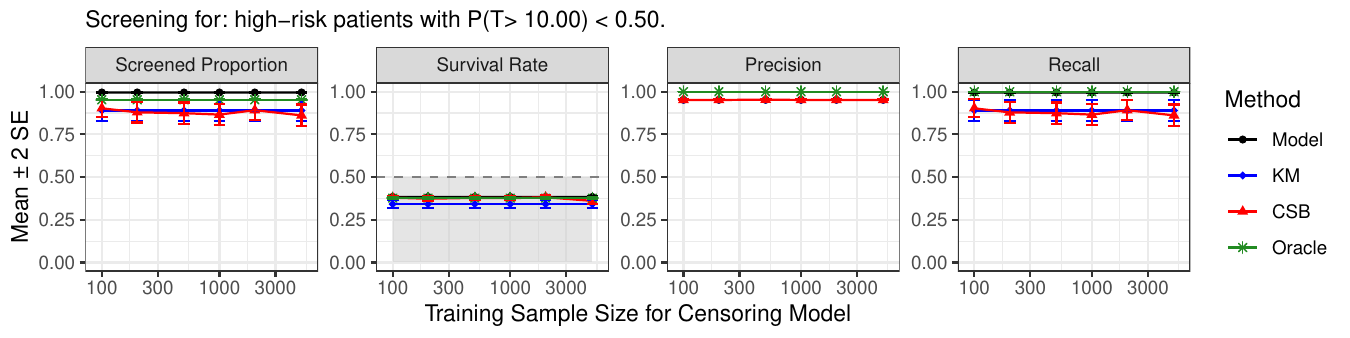}
\caption{
Effect of the training sample size used for fitting the censoring model on patient screening performance with conformal survival bands (CSB) in a relatively easy synthetic data scenario (Setting 3 from Table~\ref{tab:distributions-synthetic}).
Other details are as in Figure~\ref{fig:exp-setup5-setting1}.
}
    \label{fig:exp-setup5-setting3}
\end{figure}

\FloatBarrier
\subsection{Effect of the Number of Features for the Censoring Model}

\begin{figure}[H]
    \centering
    \includegraphics[width=\textwidth]{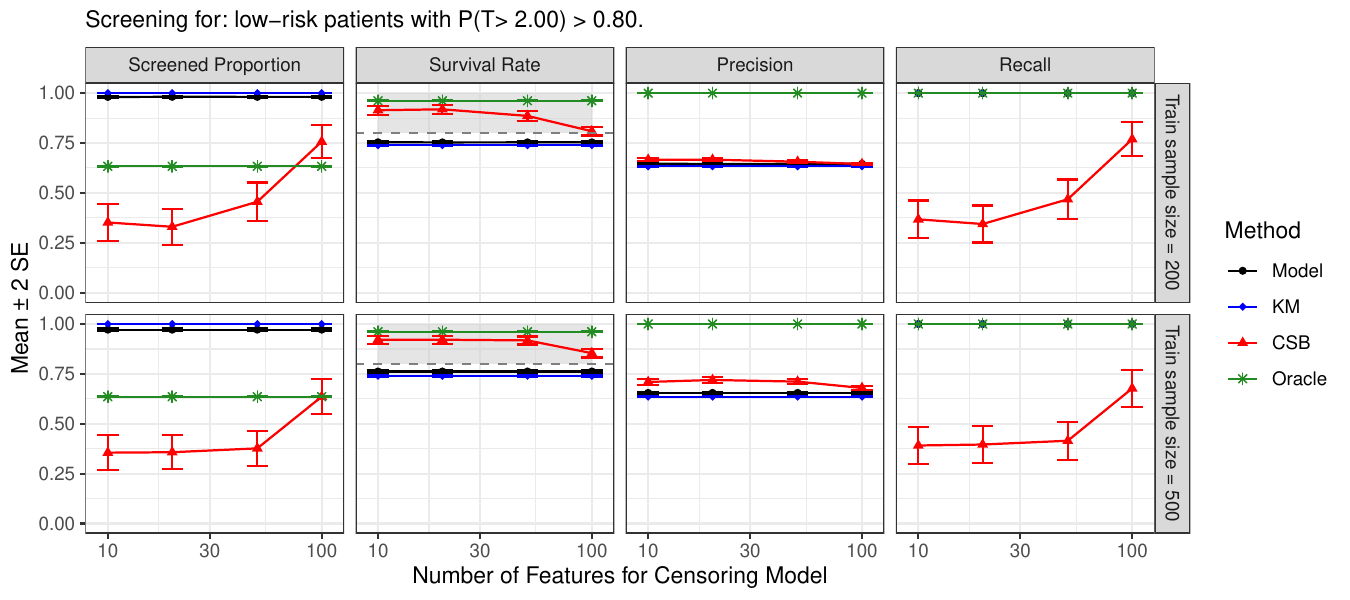}
    \includegraphics[width=\textwidth]{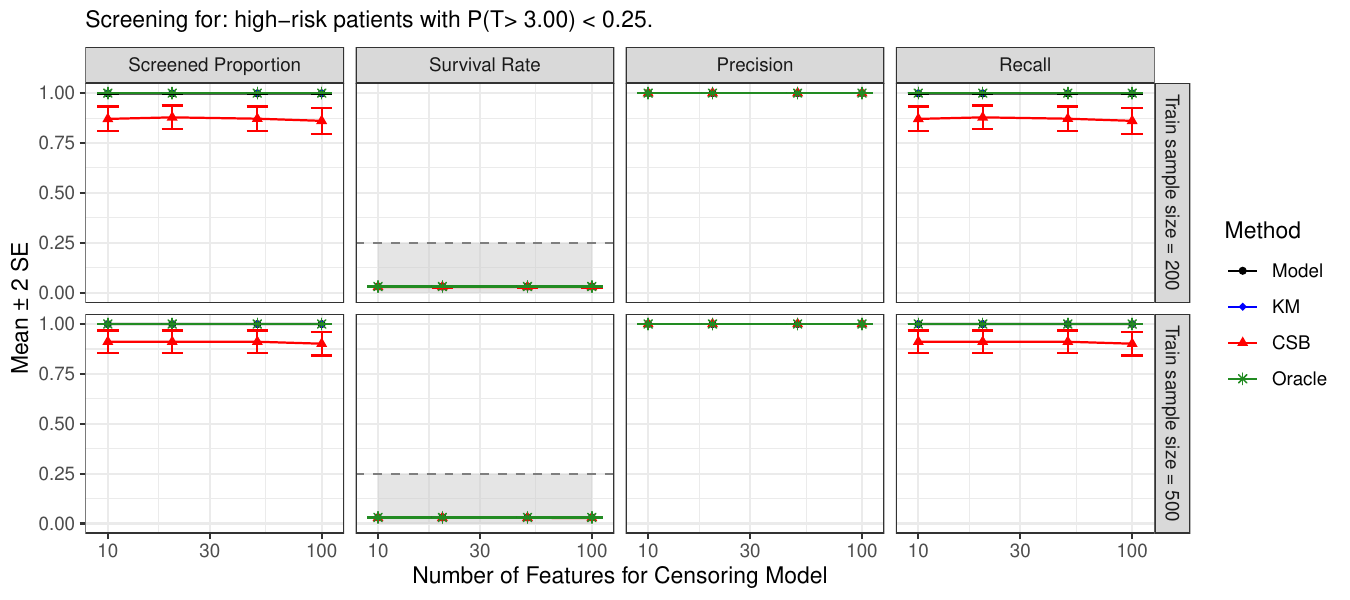}
\caption{
Impact of the number of features used in fitting the censoring model on patient screening performance with conformal survival bands (CSB) in a challenging synthetic data scenario (Setting 1 from Table~\ref{tab:distributions-synthetic}).
The total number of features is 100, but only a subset—including the relevant ones—is used to fit the censoring model.
Top: Low-risk screening. Screening performance is more sensitive to the number of features when the training sample size is small (200); using too many features degrades performance due to the difficulty of accurately estimating the censoring model. When the sample size is larger (500), performance is less sensitive to the number of features.
Bottom: High-risk screening (training sample sizes = 200 and 500). In this case, screening performance shows even less sensitivity to the number of features used to fit the censoring model.
}
    \label{fig:exp-setup6-setting1}
\end{figure}

\begin{figure}[H]
    \centering
    \includegraphics[width=\textwidth]{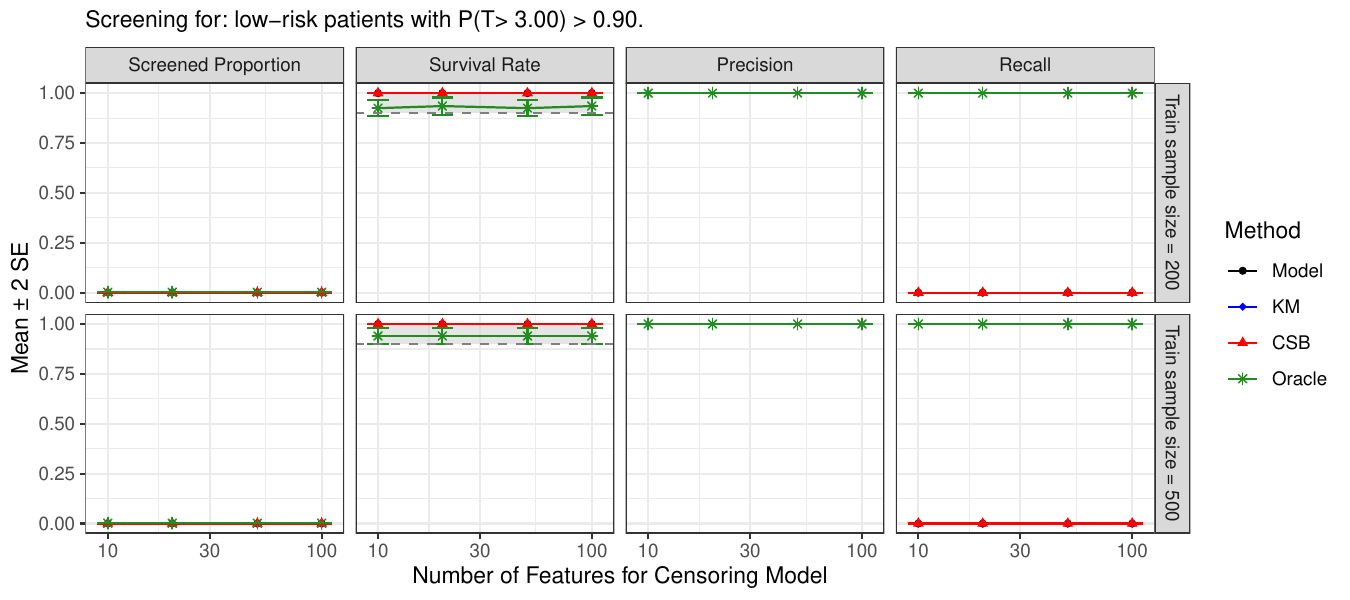}
    \includegraphics[width=\textwidth]{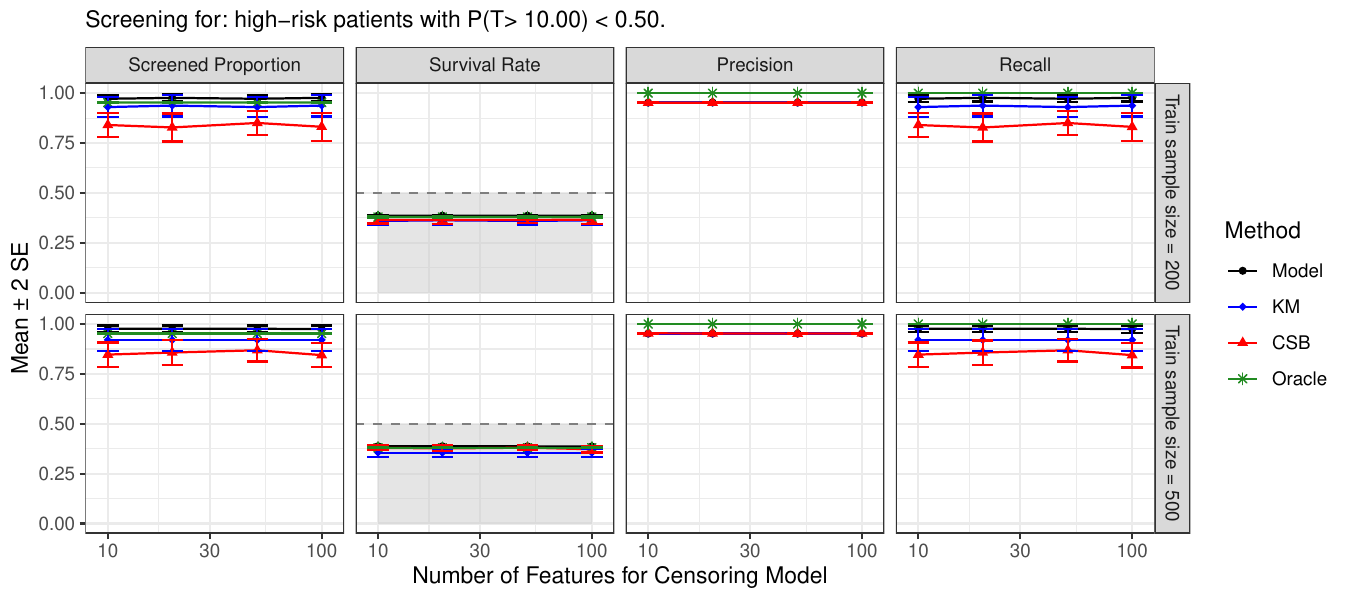}
\caption{
Impact of the number of features used in fitting the censoring model on patient screening performance with conformal survival bands (CSB) in an easier synthetic data scenario (Setting 3 from Table~\ref{tab:distributions-synthetic}).
Other details are as in Figure~\ref{fig:exp-setup6-setting1}.}
    \label{fig:exp-setup6-setting3}
\end{figure}

\FloatBarrier
\subsection{Effect of Different Censoring and Survival Models}

\begin{table}[H]
\centering
\caption{Performance of different methods for screening low-risk patients with $P(T > 6) > 0.80$ in a challenging synthetic scenario (Setting 1 from Table~\ref{tab:distributions-synthetic}), using various censoring and survival models.
The training sample size is 1000. Other details are as in Figure~\ref{fig:exp-setup1-setting1}.}
\label{tab:setup4-setting1-lr}

\begin{tabular}{lccccc}
\toprule
Survival Model & Method & Screened & Survival & Precision & Recall\\
\midrule
\addlinespace[0.3em]
\multicolumn{6}{l}{\textbf{Censoring Model: grf}}\\
\hspace{1em} & Model & 0.946 ± 0.008 & \textcolor{red}{0.570 ± 0.006} & 0.541 ± 0.006 & 0.999 ± 0.001\\

\hspace{1em} & KM & 1.000 ± 0.000 & \textcolor{red}{0.539 ± 0.003} & 0.511 ± 0.003 & 1.000 ± \vphantom{7} 0.000\\

\hspace{1em} & CSB & 0.221 ± 0.071 & \textcolor{black}{0.902 ± 0.033} & 0.685 ± 0.034 & 0.266 ± 0.079\\

\hspace{1em}\multirow[t]{-4}{*}{\raggedright\arraybackslash grf} & Oracle & 0.511 ± 0.003 & \textcolor{black}{0.974 ± 0.001} & 1.000 ± 0.000 & 1.000 ± \vphantom{1} 0.000\\

\hspace{1em} & Model & 0.995 ± 0.001 & \textcolor{red}{0.541 ± 0.003} & 0.514 ± 0.003 & 1.000 ± \vphantom{1} 0.000\\

\hspace{1em} & KM & 1.000 ± 0.000 & \textcolor{red}{0.539 ± 0.003} & 0.511 ± 0.003 & 1.000 ± \vphantom{6} 0.000\\

\hspace{1em} & CSB & 0.049 ± 0.043 & \textcolor{black}{0.977 ± 0.020} & 0.506 ± 0.003 & 0.050 ± 0.044\\

\hspace{1em}\multirow[t]{-4}{*}{\raggedright\arraybackslash survreg} & Oracle & 0.511 ± 0.003 & \textcolor{black}{0.974 ± 0.001} & 1.000 ± 0.000 & 1.000 ± \vphantom{1} 0.000\\

\hspace{1em} & Model & 0.814 ± 0.007 & \textcolor{red}{0.659 ± 0.006} & 0.626 ± 0.006 & 0.994 ± \vphantom{1} 0.001\\

\hspace{1em} & KM & 1.000 ± 0.000 & \textcolor{red}{0.539 ± 0.003} & 0.511 ± 0.003 & 1.000 ± \vphantom{5} 0.000\\

\hspace{1em} & CSB & 0.308 ± 0.061 & \textcolor{black}{0.857 ± 0.028} & 0.722 ± 0.019 & 0.413 ± 0.078\\

\hspace{1em}\multirow[t]{-4}{*}{\raggedright\arraybackslash rf} & Oracle & 0.511 ± 0.003 & \textcolor{black}{0.974 ± 0.001} & 1.000 ± 0.000 & 1.000 ± \vphantom{1} 0.000\\

\hspace{1em} & Model & 0.764 ± 0.005 & \textcolor{red}{0.652 ± 0.004} & 0.619 ± 0.004 & 0.925 ± \vphantom{1} 0.004\\

\hspace{1em} & KM & 1.000 ± 0.000 & \textcolor{red}{0.539 ± 0.003} & 0.511 ± 0.003 & 1.000 ± \vphantom{4} 0.000\\

\hspace{1em} & CSB & 0.303 ± 0.047 & \textcolor{black}{0.834 ± 0.022} & 0.794 ± 0.022 & 0.433 ± 0.057\\

\hspace{1em}\multirow[t]{-4}{*}{\raggedright\arraybackslash Cox} & Oracle & 0.511 ± 0.003 & \textcolor{black}{0.974 ± 0.001} & 1.000 ± 0.000 & 1.000 ± \vphantom{1} 0.000\\

\addlinespace[0.3em]
\multicolumn{6}{l}{\textbf{Censoring Model: Cox}}\\
\hspace{1em} & Model & 0.945 ± 0.008 & \textcolor{red}{0.571 ± 0.007} & 0.542 ± 0.007 & 0.999 ± 0.001\\

\hspace{1em} & KM & 1.000 ± 0.000 & \textcolor{red}{0.539 ± 0.003} & 0.511 ± 0.003 & 1.000 ± \vphantom{3} 0.000\\

\hspace{1em} & CSB & 0.688 ± 0.075 & \textcolor{red}{0.692 ± 0.036} & 0.619 ± 0.030 & 0.763 ± 0.071\\

\hspace{1em}\multirow[t]{-4}{*}{\raggedright\arraybackslash grf} & Oracle & 0.511 ± 0.003 & \textcolor{black}{0.974 ± 0.001} & 1.000 ± 0.000 & 1.000 ± 0.000\\

\hspace{1em} & Model & 0.995 ± 0.001 & \textcolor{red}{0.541 ± 0.003} & 0.514 ± 0.003 & 1.000 ± 0.000\\

\hspace{1em} & KM & 1.000 ± 0.000 & \textcolor{red}{0.539 ± 0.003} & 0.511 ± 0.003 & 1.000 ± \vphantom{2} 0.000\\

\hspace{1em} & CSB & 0.583 ± 0.098 & \textcolor{red}{0.733 ± 0.045} & 0.519 ± 0.003 & 0.590 ± 0.099\\

\hspace{1em}\multirow[t]{-4}{*}{\raggedright\arraybackslash survreg} & Oracle & 0.511 ± 0.003 & \textcolor{black}{0.974 ± 0.001} & 1.000 ± 0.000 & 1.000 ± 0.000\\

\hspace{1em} & Model & 0.814 ± 0.007 & \textcolor{red}{0.659 ± 0.006} & 0.626 ± 0.006 & 0.994 ± 0.001\\

\hspace{1em} & KM & 1.000 ± 0.000 & \textcolor{red}{0.539 ± 0.003} & 0.511 ± 0.003 & 1.000 ± \vphantom{1} 0.000\\

\hspace{1em} & CSB & 0.700 ± 0.045 & \textcolor{red}{0.692 ± 0.018} & 0.648 ± 0.014 & 0.864 ± 0.047\\

\hspace{1em}\multirow[t]{-4}{*}{\raggedright\arraybackslash rf} & Oracle & 0.511 ± 0.003 & \textcolor{black}{0.974 ± 0.001} & 1.000 ± 0.000 & 1.000 ± 0.000\\

\hspace{1em} & Model & 0.764 ± 0.005 & \textcolor{red}{0.652 ± 0.004} & 0.619 ± 0.004 & 0.925 ± 0.004\\

\hspace{1em} & KM & 1.000 ± 0.000 & \textcolor{red}{0.539 ± 0.003} & 0.511 ± 0.003 & 1.000 ± 0.000\\

\hspace{1em} & CSB & 0.661 ± 0.036 & \textcolor{red}{0.683 ± 0.013} & 0.654 ± 0.014 & 0.822 ± 0.035\\

\hspace{1em}\multirow[t]{-4}{*}{\raggedright\arraybackslash Cox} & Oracle & 0.511 ± 0.003 & \textcolor{black}{0.974 ± 0.001} & 1.000 ± 0.000 & 1.000 ± 0.000\\
\bottomrule
\end{tabular}

\end{table}

\begin{table}[H]
\centering
\caption{Performance of different methods for screening high-risk patients with $P(T > 12) < 0.80$ in a challenging synthetic data scenario (Setting 1 from Table~\ref{tab:distributions-synthetic}), using various censoring and survival models.
The training sample size is fixed at 1000. Other details are as in Figure~\ref{fig:exp-setup1-setting1}.}
\label{tab:setup4-setting1-hr}

\begin{tabular}{lccccc}
\toprule
Survival Model & Method & Screened & Survival & Precision & Recall\\
\midrule
\addlinespace[0.3em]
\multicolumn{6}{l}{\textbf{Censoring Model: grf}}\\
\hspace{1em} & Model & 0.755 ± 0.005 & \textcolor{black}{0.035 ± 0.002} & 0.965 ± 0.002 & 0.955 ± \vphantom{1} 0.006\\

\hspace{1em} & KM & 1.000 ± 0.000 & \textcolor{black}{0.238 ± 0.003} & 0.763 ± 0.003 & 1.000 ± \vphantom{7} 0.000\\

\hspace{1em} & CSB & 0.743 ± 0.022 & \textcolor{black}{0.040 ± 0.002} & 0.959 ± 0.002 & 0.934 ± 0.028\\

\hspace{1em}\multirow[t]{-4}{*}{\raggedright\arraybackslash grf} & Oracle & 0.763 ± 0.003 & \textcolor{black}{0.001 ± 0.000} & 1.000 ± 0.000 & 1.000 ± \vphantom{1} 0.000\\

\hspace{1em} & Model & 0.109 ± 0.004 & \textcolor{black}{0.000 ± 0.000} & 1.000 ± 0.000 & 0.142 ± \vphantom{1} 0.006\\

\hspace{1em} & KM & 1.000 ± 0.000 & \textcolor{black}{0.238 ± 0.003} & 0.763 ± 0.003 & 1.000 ± \vphantom{6} 0.000\\

\hspace{1em} & CSB & 0.137 ± 0.006 & \textcolor{black}{0.000 ± 0.000} & 1.000 ± 0.000 & 0.179 ± 0.008\\

\hspace{1em}\multirow[t]{-4}{*}{\raggedright\arraybackslash survreg} & Oracle & 0.763 ± 0.003 & \textcolor{black}{0.001 ± 0.000} & 1.000 ± 0.000 & 1.000 ± \vphantom{1} 0.000\\

\hspace{1em} & Model & 0.819 ± 0.004 & \textcolor{black}{0.080 ± 0.003} & 0.921 ± 0.003 & 0.988 ± \vphantom{1} 0.001\\

\hspace{1em} & KM & 1.000 ± 0.000 & \textcolor{black}{0.238 ± 0.003} & 0.763 ± 0.003 & 1.000 ± \vphantom{5} 0.000\\

\hspace{1em} & CSB & 0.817 ± 0.024 & \textcolor{black}{0.094 ± 0.005} & 0.904 ± 0.005 & 0.969 ± 0.028\\

\hspace{1em}\multirow[t]{-4}{*}{\raggedright\arraybackslash rf} & Oracle & 0.763 ± 0.003 & \textcolor{black}{0.001 ± 0.000} & 1.000 ± 0.000 & 1.000 ± \vphantom{1} 0.000\\

\hspace{1em} & Model & 0.713 ± 0.005 & \textcolor{black}{0.071 ± 0.003} & 0.929 ± 0.003 & 0.868 ± \vphantom{1} 0.005\\

\hspace{1em} & KM & 1.000 ± 0.000 & \textcolor{black}{0.238 ± 0.003} & 0.763 ± 0.003 & 1.000 ± \vphantom{4} 0.000\\

\hspace{1em} & CSB & 0.680 ± 0.015 & \textcolor{black}{0.073 ± 0.003} & 0.928 ± 0.003 & 0.826 ± 0.018\\

\hspace{1em}\multirow[t]{-4}{*}{\raggedright\arraybackslash Cox} & Oracle & 0.763 ± 0.003 & \textcolor{black}{0.001 ± 0.000} & 1.000 ± 0.000 & 1.000 ± \vphantom{1} 0.000\\

\addlinespace[0.3em]
\multicolumn{6}{l}{\textbf{Censoring Model: Cox}}\\
\hspace{1em} & Model & 0.755 ± 0.005 & \textcolor{black}{0.035 ± 0.002} & 0.965 ± 0.002 & 0.955 ± 0.006\\

\hspace{1em} & KM & 1.000 ± 0.000 & \textcolor{black}{0.238 ± 0.003} & 0.763 ± 0.003 & 1.000 ± \vphantom{3} 0.000\\

\hspace{1em} & CSB & 0.718 ± 0.034 & \textcolor{black}{0.039 ± 0.003} & 0.959 ± 0.002 & 0.903 ± 0.043\\

\hspace{1em}\multirow[t]{-4}{*}{\raggedright\arraybackslash grf} & Oracle & 0.763 ± 0.003 & \textcolor{black}{0.001 ± 0.000} & 1.000 ± 0.000 & 1.000 ± 0.000\\

\hspace{1em} & Model & 0.109 ± 0.004 & \textcolor{black}{0.000 ± 0.000} & 1.000 ± 0.000 & 0.142 ± 0.006\\

\hspace{1em} & KM & 1.000 ± 0.000 & \textcolor{black}{0.238 ± 0.003} & 0.763 ± 0.003 & 1.000 ± \vphantom{2} 0.000\\

\hspace{1em} & CSB & 0.134 ± 0.007 & \textcolor{black}{0.000 ± 0.000} & 1.000 ± 0.000 & 0.175 ± 0.009\\

\hspace{1em}\multirow[t]{-4}{*}{\raggedright\arraybackslash survreg} & Oracle & 0.763 ± 0.003 & \textcolor{black}{0.001 ± 0.000} & 1.000 ± 0.000 & 1.000 ± 0.000\\

\hspace{1em} & Model & 0.819 ± 0.004 & \textcolor{black}{0.080 ± 0.003} & 0.921 ± 0.003 & 0.988 ± 0.001\\

\hspace{1em} & KM & 1.000 ± 0.000 & \textcolor{black}{0.238 ± 0.003} & 0.763 ± 0.003 & 1.000 ± \vphantom{1} 0.000\\

\hspace{1em} & CSB & 0.790 ± 0.036 & \textcolor{black}{0.089 ± 0.007} & 0.909 ± 0.006 & 0.938 ± 0.042\\

\hspace{1em}\multirow[t]{-4}{*}{\raggedright\arraybackslash rf} & Oracle & 0.763 ± 0.003 & \textcolor{black}{0.001 ± 0.000} & 1.000 ± 0.000 & 1.000 ± 0.000\\

\hspace{1em} & Model & 0.713 ± 0.005 & \textcolor{black}{0.071 ± 0.003} & 0.929 ± 0.003 & 0.868 ± 0.005\\

\hspace{1em} & KM & 1.000 ± 0.000 & \textcolor{black}{0.238 ± 0.003} & 0.763 ± 0.003 & 1.000 ± 0.000\\

\hspace{1em} & CSB & 0.660 ± 0.026 & \textcolor{black}{0.070 ± 0.004} & 0.929 ± 0.004 & 0.802 ± 0.031\\

\hspace{1em}\multirow[t]{-4}{*}{\raggedright\arraybackslash Cox} & Oracle & 0.763 ± 0.003 & \textcolor{black}{0.001 ± 0.000} & 1.000 ± 0.000 & 1.000 ± 0.000\\
\bottomrule
\end{tabular}

\end{table}

\begin{table}[H]
\centering
\caption{Performance of different methods for screening low-risk patients with $P(T > 2) > 0.80$ in a relatively challenging data scenario (Setting 2 from Table~\ref{tab:distributions-synthetic}), using various censoring and survival models.
The training sample size is fixed at 1000. Other details are as in Figure~\ref{fig:exp-setup1-setting1}.}
\label{tab:setup4-setting2-lr}

\begin{tabular}{lccccc}
\toprule
Survival Model & Method & Screened & Survival & Precision & Recall\\
\midrule
\addlinespace[0.3em]
\multicolumn{6}{l}{\textbf{Censoring Model: grf}}\\
\hspace{1em} & Model & 0.968 ± 0.005 & \textcolor{red}{0.765 ± 0.005} & 0.658 ± 0.004 & 1.000 ± 0.000\\

\hspace{1em} & KM & 1.000 ± 0.000 & \textcolor{red}{0.741 ± 0.003} & 0.636 ± 0.002 & 1.000 ± \vphantom{7} 0.000\\

\hspace{1em} & CSB & 0.362 ± 0.082 & \textcolor{black}{0.932 ± 0.017} & 0.762 ± 0.018 & 0.422 ± 0.094\\

\hspace{1em}\multirow[t]{-4}{*}{\raggedright\arraybackslash grf} & Oracle & 0.636 ± 0.002 & \textcolor{black}{0.960 ± 0.001} & 1.000 ± 0.000 & 1.000 ± \vphantom{1} 0.000\\

\hspace{1em} & Model & 0.951 ± 0.003 & \textcolor{red}{0.776 ± 0.004} & 0.669 ± 0.003 & 1.000 ± \vphantom{1} 0.000\\

\hspace{1em} & KM & 1.000 ± 0.000 & \textcolor{red}{0.741 ± 0.003} & 0.636 ± 0.002 & 1.000 ± \vphantom{6} 0.000\\

\hspace{1em} & CSB & 0.055 ± 0.044 & \textcolor{black}{0.987 ± 0.010} & 0.682 ± 0.004 & 0.060 ± 0.048\\

\hspace{1em}\multirow[t]{-4}{*}{\raggedright\arraybackslash survreg} & Oracle & 0.636 ± 0.002 & \textcolor{black}{0.960 ± 0.001} & 1.000 ± 0.000 & 1.000 ± \vphantom{1} 0.000\\

\hspace{1em} & Model & 0.880 ± 0.005 & \textcolor{black}{0.834 ± 0.004} & 0.723 ± 0.005 & 1.000 ± \vphantom{1} 0.000\\

\hspace{1em} & KM & 1.000 ± 0.000 & \textcolor{red}{0.741 ± 0.003} & 0.636 ± 0.002 & 1.000 ± \vphantom{5} 0.000\\

\hspace{1em} & CSB & 0.556 ± 0.066 & \textcolor{black}{0.908 ± 0.011} & 0.822 ± 0.017 & 0.693 ± 0.077\\

\hspace{1em}\multirow[t]{-4}{*}{\raggedright\arraybackslash rf} & Oracle & 0.636 ± 0.002 & \textcolor{black}{0.960 ± 0.001} & 1.000 ± 0.000 & 1.000 ± \vphantom{1} 0.000\\

\hspace{1em} & Model & 0.596 ± 0.011 & \textcolor{black}{0.961 ± 0.003} & 0.967 ± 0.006 & 0.904 ± \vphantom{1} 0.011\\

\hspace{1em} & KM & 1.000 ± 0.000 & \textcolor{red}{0.741 ± 0.003} & 0.636 ± 0.002 & 1.000 ± \vphantom{4} 0.000\\

\hspace{1em} & CSB & 0.496 ± 0.030 & \textcolor{black}{0.969 ± 0.006} & 0.969 ± 0.010 & 0.747 ± 0.039\\

\hspace{1em}\multirow[t]{-4}{*}{\raggedright\arraybackslash Cox} & Oracle & 0.636 ± 0.002 & \textcolor{black}{0.960 ± 0.001} & 1.000 ± 0.000 & 1.000 ± \vphantom{1} 0.000\\

\addlinespace[0.3em]
\multicolumn{6}{l}{\textbf{Censoring Model: Cox}}\\
\hspace{1em} & Model & 0.969 ± 0.006 & \textcolor{red}{0.765 ± 0.005} & 0.657 ± 0.004 & 1.000 ± 0.000\\

\hspace{1em} & KM & 1.000 ± 0.000 & \textcolor{red}{0.741 ± 0.003} & 0.636 ± 0.002 & 1.000 ± \vphantom{3} 0.000\\

\hspace{1em} & CSB & 0.599 ± 0.072 & \textcolor{black}{0.889 ± 0.015} & 0.766 ± 0.015 & 0.704 ± 0.081\\

\hspace{1em}\multirow[t]{-4}{*}{\raggedright\arraybackslash grf} & Oracle & 0.636 ± 0.002 & \textcolor{black}{0.960 ± 0.001} & 1.000 ± 0.000 & 1.000 ± 0.000\\

\hspace{1em} & Model & 0.951 ± 0.003 & \textcolor{red}{0.776 ± 0.004} & 0.669 ± 0.003 & 1.000 ± 0.000\\

\hspace{1em} & KM & 1.000 ± 0.000 & \textcolor{red}{0.741 ± 0.003} & 0.636 ± 0.002 & 1.000 ± \vphantom{2} 0.000\\

\hspace{1em} & CSB & 0.000 ± 0.000 & NA & NA & 0.000 ± 0.000\\

\hspace{1em}\multirow[t]{-4}{*}{\raggedright\arraybackslash survreg} & Oracle & 0.636 ± 0.002 & \textcolor{black}{0.960 ± 0.001} & 1.000 ± 0.000 & 1.000 ± 0.000\\

\hspace{1em} & Model & 0.880 ± 0.005 & \textcolor{black}{0.834 ± 0.004} & 0.723 ± 0.005 & 1.000 ± 0.000\\

\hspace{1em} & KM & 1.000 ± 0.000 & \textcolor{red}{0.741 ± 0.003} & 0.636 ± 0.002 & 1.000 ± \vphantom{1} 0.000\\

\hspace{1em} & CSB & 0.776 ± 0.037 & \textcolor{black}{0.870 ± 0.009} & 0.773 ± 0.013 & 0.929 ± 0.040\\

\hspace{1em}\multirow[t]{-4}{*}{\raggedright\arraybackslash rf} & Oracle & 0.636 ± 0.002 & \textcolor{black}{0.960 ± 0.001} & 1.000 ± 0.000 & 1.000 ± 0.000\\

\hspace{1em} & Model & 0.596 ± 0.011 & \textcolor{black}{0.961 ± 0.003} & 0.967 ± 0.006 & 0.904 ± 0.011\\

\hspace{1em} & KM & 1.000 ± 0.000 & \textcolor{red}{0.741 ± 0.003} & 0.636 ± 0.002 & 1.000 ± 0.000\\

\hspace{1em} & CSB & 0.569 ± 0.021 & \textcolor{black}{0.959 ± 0.005} & 0.955 ± 0.011 & 0.846 ± 0.023\\

\hspace{1em}\multirow[t]{-4}{*}{\raggedright\arraybackslash Cox} & Oracle & 0.636 ± 0.002 & \textcolor{black}{0.960 ± 0.001} & 1.000 ± 0.000 & 1.000 ± 0.000\\
\bottomrule
\end{tabular}

\end{table}

\begin{table}[H]
\centering
\caption{Performance of different methods for screening high-risk patients with $P(T > 3) < 0.25$ in a relatively challenging data scenario (Setting 2 from Table~\ref{tab:distributions-synthetic}), using various censoring and survival models.
The training sample size is fixed at 1000. Other details are as in Figure~\ref{fig:exp-setup1-setting1}.}
\label{tab:setup4-setting2-hr}

\begin{tabular}{lccccc}
\toprule
Survival Model & Method & Screened & Survival & Precision & Recall\\
\midrule
\addlinespace[0.3em]
\multicolumn{6}{l}{\textbf{Censoring Model: grf}}\\
\hspace{1em} & Model & 0.999 ± 0.000 & \textcolor{black}{0.030 ± 0.001} & 1.000 ± 0.000 & 0.999 ± \vphantom{1} 0.000\\

\hspace{1em} & KM & 0.990 ± 0.020 & \textcolor{black}{0.030 ± 0.001} & 1.000 ± 0.000 & 0.990 ± \vphantom{7} 0.020\\

\hspace{1em} & CSB & 0.979 ± 0.028 & \textcolor{black}{0.029 ± 0.001} & 1.000 ± 0.000 & 0.979 ± 0.028\\

\hspace{1em}\multirow[t]{-4}{*}{\raggedright\arraybackslash grf} & Oracle & 1.000 ± 0.000 & \textcolor{black}{0.030 ± 0.001} & 1.000 ± 0.000 & 1.000 ± \vphantom{1} 0.000\\

\hspace{1em} & Model & 0.726 ± 0.006 & \textcolor{black}{0.008 ± 0.001} & 1.000 ± 0.000 & 0.726 ± \vphantom{1} 0.006\\

\hspace{1em} & KM & 0.990 ± 0.020 & \textcolor{black}{0.030 ± 0.001} & 1.000 ± 0.000 & 0.990 ± \vphantom{6} 0.020\\

\hspace{1em} & CSB & 0.738 ± 0.022 & \textcolor{black}{0.010 ± 0.001} & 1.000 ± 0.000 & 0.738 ± 0.022\\

\hspace{1em}\multirow[t]{-4}{*}{\raggedright\arraybackslash survreg} & Oracle & 1.000 ± 0.000 & \textcolor{black}{0.030 ± 0.001} & 1.000 ± 0.000 & 1.000 ± \vphantom{1} 0.000\\

\hspace{1em} & Model & 0.907 ± 0.014 & \textcolor{black}{0.033 ± 0.001} & 1.000 ± 0.000 & 0.907 ± \vphantom{1} 0.014\\

\hspace{1em} & KM & 0.990 ± 0.020 & \textcolor{black}{0.030 ± 0.001} & 1.000 ± 0.000 & 0.990 ± \vphantom{5} 0.020\\

\hspace{1em} & CSB & 0.898 ± 0.030 & \textcolor{black}{0.031 ± 0.002} & 1.000 ± 0.000 & 0.898 ± 0.030\\

\hspace{1em}\multirow[t]{-4}{*}{\raggedright\arraybackslash rf} & Oracle & 1.000 ± 0.000 & \textcolor{black}{0.030 ± 0.001} & 1.000 ± 0.000 & 1.000 ± \vphantom{1} 0.000\\

\hspace{1em} & Model & 0.964 ± 0.002 & \textcolor{black}{0.026 ± 0.001} & 1.000 ± 0.000 & 0.964 ± \vphantom{1} 0.002\\

\hspace{1em} & KM & 0.990 ± 0.020 & \textcolor{black}{0.030 ± 0.001} & 1.000 ± 0.000 & 0.990 ± \vphantom{4} 0.020\\

\hspace{1em} & CSB & 0.951 ± 0.019 & \textcolor{black}{0.025 ± 0.001} & 1.000 ± 0.000 & 0.951 ± 0.019\\

\hspace{1em}\multirow[t]{-4}{*}{\raggedright\arraybackslash Cox} & Oracle & 1.000 ± 0.000 & \textcolor{black}{0.030 ± 0.001} & 1.000 ± 0.000 & 1.000 ± \vphantom{1} 0.000\\

\addlinespace[0.3em]
\multicolumn{6}{l}{\textbf{Censoring Model: Cox}}\\
\hspace{1em} & Model & 0.999 ± 0.000 & \textcolor{black}{0.030 ± 0.001} & 1.000 ± 0.000 & 0.999 ± 0.000\\

\hspace{1em} & KM & 0.990 ± 0.020 & \textcolor{black}{0.030 ± 0.001} & 1.000 ± 0.000 & 0.990 ± \vphantom{3} 0.020\\

\hspace{1em} & CSB & 0.999 ± 0.000 & \textcolor{black}{0.030 ± 0.001} & 1.000 ± 0.000 & 0.999 ± 0.000\\

\hspace{1em}\multirow[t]{-4}{*}{\raggedright\arraybackslash grf} & Oracle & 1.000 ± 0.000 & \textcolor{black}{0.030 ± 0.001} & 1.000 ± 0.000 & 1.000 ± 0.000\\

\hspace{1em} & Model & 0.726 ± 0.006 & \textcolor{black}{0.008 ± 0.001} & 1.000 ± 0.000 & 0.726 ± 0.006\\

\hspace{1em} & KM & 0.990 ± 0.020 & \textcolor{black}{0.030 ± 0.001} & 1.000 ± 0.000 & 0.990 ± \vphantom{2} 0.020\\

\hspace{1em} & CSB & 0.753 ± 0.006 & \textcolor{black}{0.010 ± 0.001} & 1.000 ± 0.000 & 0.753 ± 0.006\\

\hspace{1em}\multirow[t]{-4}{*}{\raggedright\arraybackslash survreg} & Oracle & 1.000 ± 0.000 & \textcolor{black}{0.030 ± 0.001} & 1.000 ± 0.000 & 1.000 ± 0.000\\

\hspace{1em} & Model & 0.907 ± 0.014 & \textcolor{black}{0.033 ± 0.001} & 1.000 ± 0.000 & 0.907 ± 0.014\\

\hspace{1em} & KM & 0.990 ± 0.020 & \textcolor{black}{0.030 ± 0.001} & 1.000 ± 0.000 & 0.990 ± \vphantom{1} 0.020\\

\hspace{1em} & CSB & 0.918 ± 0.014 & \textcolor{black}{0.032 ± 0.001} & 1.000 ± 0.000 & 0.918 ± 0.014\\

\hspace{1em}\multirow[t]{-4}{*}{\raggedright\arraybackslash rf} & Oracle & 1.000 ± 0.000 & \textcolor{black}{0.030 ± 0.001} & 1.000 ± 0.000 & 1.000 ± 0.000\\

\hspace{1em} & Model & 0.964 ± 0.002 & \textcolor{black}{0.026 ± 0.001} & 1.000 ± 0.000 & 0.964 ± 0.002\\

\hspace{1em} & KM & 0.990 ± 0.020 & \textcolor{black}{0.030 ± 0.001} & 1.000 ± 0.000 & 0.990 ± 0.020\\

\hspace{1em} & CSB & 0.962 ± 0.002 & \textcolor{black}{0.026 ± 0.001} & 1.000 ± 0.000 & 0.962 ± 0.002\\

\hspace{1em}\multirow[t]{-4}{*}{\raggedright\arraybackslash Cox} & Oracle & 1.000 ± 0.000 & \textcolor{black}{0.030 ± 0.001} & 1.000 ± 0.000 & 1.000 ± 0.000\\
\bottomrule
\end{tabular}

\end{table}

\begin{table}[H]
\centering
\caption{Performance of different methods for screening low-risk patients with $P(T > 3) > 0.90$ in an easier synthetic data scenario (Setting 3 from Table~\ref{tab:distributions-synthetic}), using various censoring and survival models.
The training sample size is fixed at 1000. Other details are as in Figure~\ref{fig:exp-setup1-setting1}.}
\label{tab:setup4-setting3-lr}

\begin{tabular}{lccccc}
\toprule
Survival Model & Method & Screened & Survival & Precision & Recall\\
\midrule
\addlinespace[0.3em]
\multicolumn{6}{l}{\textbf{Censoring Model: grf}}\\
\hspace{1em} & Model & 0.000 ± 0.000 & NA & NA & 0.000 ± \vphantom{3} 0.000\\

\hspace{1em} & KM & 0.000 ± 0.000 & NA & NA & 0.000 ± \vphantom{7} 0.000\\

\hspace{1em} & CSB & 0.000 ± 0.000 & NA & NA & 0.000 ± \vphantom{7} 0.000\\

\hspace{1em}\multirow[t]{-4}{*}{\raggedright\arraybackslash grf} & Oracle & 0.002 ± 0.000 & \textcolor{black}{0.920 ± 0.039} & 1.000 ± 0.000 & 1.000 ± \vphantom{1} 0.000\\

\hspace{1em} & Model & 1.000 ± 0.000 & \textcolor{red}{0.701 ± 0.002} & 0.002 ± 0.000 & 1.000 ± \vphantom{1} 0.000\\

\hspace{1em} & KM & 0.000 ± 0.000 & NA & NA & 0.000 ± \vphantom{6} 0.000\\

\hspace{1em} & CSB & 0.000 ± 0.000 & NA & NA & 0.000 ± \vphantom{6} 0.000\\

\hspace{1em}\multirow[t]{-4}{*}{\raggedright\arraybackslash survreg} & Oracle & 0.002 ± 0.000 & \textcolor{black}{0.920 ± 0.039} & 1.000 ± 0.000 & 1.000 ± \vphantom{1} 0.000\\

\hspace{1em} & Model & 0.000 ± 0.000 & NA & NA & 0.000 ± \vphantom{2} 0.000\\

\hspace{1em} & KM & 0.000 ± 0.000 & NA & NA & 0.000 ± \vphantom{5} 0.000\\

\hspace{1em} & CSB & 0.000 ± 0.000 & NA & NA & 0.000 ± \vphantom{5} 0.000\\

\hspace{1em}\multirow[t]{-4}{*}{\raggedright\arraybackslash rf} & Oracle & 0.002 ± 0.000 & \textcolor{black}{0.920 ± 0.039} & 1.000 ± 0.000 & 1.000 ± \vphantom{1} 0.000\\

\hspace{1em} & Model & 0.046 ± 0.004 & \textcolor{red}{0.702 ± 0.016} & 0.003 ± 0.003 & 0.060 ± \vphantom{1} 0.038\\

\hspace{1em} & KM & 0.000 ± 0.000 & NA & NA & 0.000 ± \vphantom{4} 0.000\\

\hspace{1em} & CSB & 0.000 ± 0.000 & NA & NA & 0.000 ± \vphantom{4} 0.000\\

\hspace{1em}\multirow[t]{-4}{*}{\raggedright\arraybackslash Cox} & Oracle & 0.002 ± 0.000 & \textcolor{black}{0.920 ± 0.039} & 1.000 ± 0.000 & 1.000 ± \vphantom{1} 0.000\\

\addlinespace[0.3em]
\multicolumn{6}{l}{\textbf{Censoring Model: Cox}}\\
\hspace{1em} & Model & 0.000 ± 0.000 & NA & NA & 0.000 ± \vphantom{1} 0.000\\

\hspace{1em} & KM & 0.000 ± 0.000 & NA & NA & 0.000 ± \vphantom{3} 0.000\\

\hspace{1em} & CSB & 0.000 ± 0.000 & NA & NA & 0.000 ± \vphantom{3} 0.000\\

\hspace{1em}\multirow[t]{-4}{*}{\raggedright\arraybackslash grf} & Oracle & 0.002 ± 0.000 & \textcolor{black}{0.920 ± 0.039} & 1.000 ± 0.000 & 1.000 ± 0.000\\

\hspace{1em} & Model & 1.000 ± 0.000 & \textcolor{red}{0.701 ± 0.002} & 0.002 ± 0.000 & 1.000 ± 0.000\\

\hspace{1em} & KM & 0.000 ± 0.000 & NA & NA & 0.000 ± \vphantom{2} 0.000\\

\hspace{1em} & CSB & 0.000 ± 0.000 & NA & NA & 0.000 ± \vphantom{2} 0.000\\

\hspace{1em}\multirow[t]{-4}{*}{\raggedright\arraybackslash survreg} & Oracle & 0.002 ± 0.000 & \textcolor{black}{0.920 ± 0.039} & 1.000 ± 0.000 & 1.000 ± 0.000\\

\hspace{1em} & Model & 0.000 ± 0.000 & NA & NA & 0.000 ± 0.000\\

\hspace{1em} & KM & 0.000 ± 0.000 & NA & NA & 0.000 ± \vphantom{1} 0.000\\

\hspace{1em} & CSB & 0.000 ± 0.000 & NA & NA & 0.000 ± \vphantom{1} 0.000\\

\hspace{1em}\multirow[t]{-4}{*}{\raggedright\arraybackslash rf} & Oracle & 0.002 ± 0.000 & \textcolor{black}{0.920 ± 0.039} & 1.000 ± 0.000 & 1.000 ± 0.000\\

\hspace{1em} & Model & 0.046 ± 0.004 & \textcolor{red}{0.702 ± 0.016} & 0.003 ± 0.003 & 0.060 ± 0.038\\

\hspace{1em} & KM & 0.000 ± 0.000 & NA & NA & 0.000 ± 0.000\\

\hspace{1em} & CSB & 0.000 ± 0.000 & NA & NA & 0.000 ± 0.000\\

\hspace{1em}\multirow[t]{-4}{*}{\raggedright\arraybackslash Cox} & Oracle & 0.002 ± 0.000 & \textcolor{black}{0.920 ± 0.039} & 1.000 ± 0.000 & 1.000 ± 0.000\\
\bottomrule
\end{tabular}

\end{table}

\begin{table}[H]
\centering
\caption{Performance of different methods for screening high-risk patients with $P(T > 10) < 0.50$ in an easier synthetic data scenario (Setting 3 from Table~\ref{tab:distributions-synthetic}), using various censoring and survival models.
The training sample size is fixed at 1000. Other details are as in Figure~\ref{fig:exp-setup1-setting1}.}
\label{tab:setup4-setting3-hr}

\begin{tabular}{lccccc}
\toprule
Survival Model & Method & Screened & Survival & Precision & Recall\\
\midrule
\addlinespace[0.3em]
\multicolumn{6}{l}{\textbf{Censoring Model: grf}}\\
\hspace{1em} & Model & 0.988 ± 0.010 & \textcolor{black}{0.386 ± 0.003} & 0.953 ± 0.001 & 0.987 ± \vphantom{1} 0.010\\

\hspace{1em} & KM & 0.910 ± 0.058 & \textcolor{black}{0.352 ± 0.022} & 0.953 ± 0.001 & 0.910 ± \vphantom{7} 0.058\\

\hspace{1em} & CSB & 0.910 ± 0.051 & \textcolor{black}{0.370 ± 0.016} & 0.953 ± 0.001 & 0.910 ± 0.051\\

\hspace{1em}\multirow[t]{-4}{*}{\raggedright\arraybackslash grf} & Oracle & 0.953 ± 0.001 & \textcolor{black}{0.378 ± 0.003} & 1.000 ± 0.000 & 1.000 ± \vphantom{1} 0.000\\

\hspace{1em} & Model & 0.000 ± 0.000 & \textcolor{black}{0.000 ± 0.000} & 0.500 ± 0.141 & 0.000 ± \vphantom{1} 0.000\\

\hspace{1em} & KM & 0.910 ± 0.058 & \textcolor{black}{0.352 ± 0.022} & 0.953 ± 0.001 & 0.910 ± \vphantom{6} 0.058\\

\hspace{1em} & CSB & 0.000 ± 0.000 & \textcolor{black}{0.029 ± 0.031} & 0.972 ± 0.019 & 0.000 ± \vphantom{1} 0.000\\

\hspace{1em}\multirow[t]{-4}{*}{\raggedright\arraybackslash survreg} & Oracle & 0.953 ± 0.001 & \textcolor{black}{0.378 ± 0.003} & 1.000 ± 0.000 & 1.000 ± \vphantom{1} 0.000\\

\hspace{1em} & Model & 0.791 ± 0.025 & \textcolor{black}{0.386 ± 0.003} & 0.953 ± 0.001 & 0.791 ± \vphantom{1} 0.025\\

\hspace{1em} & KM & 0.910 ± 0.058 & \textcolor{black}{0.352 ± 0.022} & 0.953 ± 0.001 & 0.910 ± \vphantom{5} 0.058\\

\hspace{1em} & CSB & 0.771 ± 0.039 & \textcolor{black}{0.390 ± 0.005} & 0.951 ± 0.003 & 0.771 ± 0.039\\

\hspace{1em}\multirow[t]{-4}{*}{\raggedright\arraybackslash rf} & Oracle & 0.953 ± 0.001 & \textcolor{black}{0.378 ± 0.003} & 1.000 ± 0.000 & 1.000 ± \vphantom{1} 0.000\\

\hspace{1em} & Model & 0.688 ± 0.017 & \textcolor{black}{0.386 ± 0.003} & 0.953 ± 0.002 & 0.688 ± \vphantom{1} 0.017\\

\hspace{1em} & KM & 0.910 ± 0.058 & \textcolor{black}{0.352 ± 0.022} & 0.953 ± 0.001 & 0.910 ± \vphantom{4} 0.058\\

\hspace{1em} & CSB & 0.605 ± 0.034 & \textcolor{black}{0.388 ± 0.008} & 0.952 ± 0.004 & 0.605 ± 0.034\\

\hspace{1em}\multirow[t]{-4}{*}{\raggedright\arraybackslash Cox} & Oracle & 0.953 ± 0.001 & \textcolor{black}{0.378 ± 0.003} & 1.000 ± 0.000 & 1.000 ± \vphantom{1} 0.000\\

\addlinespace[0.3em]
\multicolumn{6}{l}{\textbf{Censoring Model: Cox}}\\
\hspace{1em} & Model & 0.988 ± 0.010 & \textcolor{black}{0.386 ± 0.003} & 0.953 ± 0.001 & 0.987 ± 0.010\\

\hspace{1em} & KM & 0.910 ± 0.058 & \textcolor{black}{0.352 ± 0.022} & 0.953 ± 0.001 & 0.910 ± \vphantom{3} 0.058\\

\hspace{1em} & CSB & 0.916 ± 0.049 & \textcolor{black}{0.370 ± 0.016} & 0.953 ± 0.001 & 0.916 ± 0.049\\

\hspace{1em}\multirow[t]{-4}{*}{\raggedright\arraybackslash grf} & Oracle & 0.953 ± 0.001 & \textcolor{black}{0.378 ± 0.003} & 1.000 ± 0.000 & 1.000 ± 0.000\\

\hspace{1em} & Model & 0.000 ± 0.000 & \textcolor{black}{0.000 ± 0.000} & 0.500 ± 0.141 & 0.000 ± 0.000\\

\hspace{1em} & KM & 0.910 ± 0.058 & \textcolor{black}{0.352 ± 0.022} & 0.953 ± 0.001 & 0.910 ± \vphantom{2} 0.058\\

\hspace{1em} & CSB & 0.000 ± 0.000 & \textcolor{black}{0.029 ± 0.031} & 0.972 ± 0.019 & 0.000 ± 0.000\\

\hspace{1em}\multirow[t]{-4}{*}{\raggedright\arraybackslash survreg} & Oracle & 0.953 ± 0.001 & \textcolor{black}{0.378 ± 0.003} & 1.000 ± 0.000 & 1.000 ± 0.000\\

\hspace{1em} & Model & 0.791 ± 0.025 & \textcolor{black}{0.386 ± 0.003} & 0.953 ± 0.001 & 0.791 ± 0.025\\

\hspace{1em} & KM & 0.910 ± 0.058 & \textcolor{black}{0.352 ± 0.022} & 0.953 ± 0.001 & 0.910 ± \vphantom{1} 0.058\\

\hspace{1em} & CSB & 0.762 ± 0.043 & \textcolor{black}{0.385 ± 0.009} & 0.953 ± 0.002 & 0.761 ± 0.042\\

\hspace{1em}\multirow[t]{-4}{*}{\raggedright\arraybackslash rf} & Oracle & 0.953 ± 0.001 & \textcolor{black}{0.378 ± 0.003} & 1.000 ± 0.000 & 1.000 ± 0.000\\

\hspace{1em} & Model & 0.688 ± 0.017 & \textcolor{black}{0.386 ± 0.003} & 0.953 ± 0.002 & 0.688 ± 0.017\\

\hspace{1em} & KM & 0.910 ± 0.058 & \textcolor{black}{0.352 ± 0.022} & 0.953 ± 0.001 & 0.910 ± 0.058\\

\hspace{1em} & CSB & 0.625 ± 0.029 & \textcolor{black}{0.388 ± 0.008} & 0.952 ± 0.005 & 0.625 ± 0.029\\

\hspace{1em}\multirow[t]{-4}{*}{\raggedright\arraybackslash Cox} & Oracle & 0.953 ± 0.001 & \textcolor{black}{0.378 ± 0.003} & 1.000 ± 0.000 & 1.000 ± 0.000\\
\bottomrule
\end{tabular}

\end{table}

\FloatBarrier
\subsection{Effect of Distribution Shift}

\begin{table}[H]
\centering
\caption{
Performance of different methods for screening high-risk patients with $P(T > 3) < 0.50$ under distribution shift (Setting 4 from Table~\ref{tab:distributions-synthetic}).
These results correspond to the same experiments reported in Table~\ref{tab:setup2-lr}, but evaluate high-risk screening instead of low-risk.
In this case, a misspecified \textit{grf} model primarily reduces the power of the screening rule by selecting fewer patients than desired.
}
\label{tab:setup2-hr}

\begin{tabular}{lcccc}
\toprule
Method & Screened & Survival & Precision & Recall\\
\midrule
\addlinespace[0.3em]
\multicolumn{5}{l}{\textbf{Low-Quality Model}}\\
\hspace{1em}Model & 0.000 ± 0.000 & NA & NA & 0.000 ± 0.000\\

\hspace{1em}KM & 0.030 ± 0.034 & \textcolor{black}{0.016 ± 0.019} & 0.495 ± 0.002 & 0.030 ± 0.034\\

\hspace{1em}CSB & 0.000 ± 0.000 & NA & NA & 0.000 ± 0.000\\

\hspace{1em}Oracle & 0.501 ± 0.003 & \textcolor{black}{0.077 ± 0.002} & 1.000 ± 0.000 & 1.000 ± 0.000\\

\addlinespace[0.3em]
\multicolumn{5}{l}{\textbf{High-Quality Model}}\\
\hspace{1em}Model & 0.503 ± 0.004 & \textcolor{black}{0.080 ± 0.003} & 0.998 ± 0.000 & 1.000 ± 0.000\\

\hspace{1em}KM & 0.026 ± 0.036 & \textcolor{black}{0.014 ± 0.020} & 0.502 ± 0.001 & 0.026 ± 0.036\\

\hspace{1em}CSB & 0.503 ± 0.004 & \textcolor{black}{0.080 ± 0.003} & 0.998 ± 0.000 & 0.999 ± 0.002\\

\hspace{1em}Oracle & 0.503 ± 0.004 & \textcolor{black}{0.078 ± 0.003} & 1.000 ± 0.000 & 1.000 ± 0.000\\
\bottomrule
\end{tabular}

\end{table}

\begin{figure}[H]
    \centering
    \includegraphics[width=\textwidth]{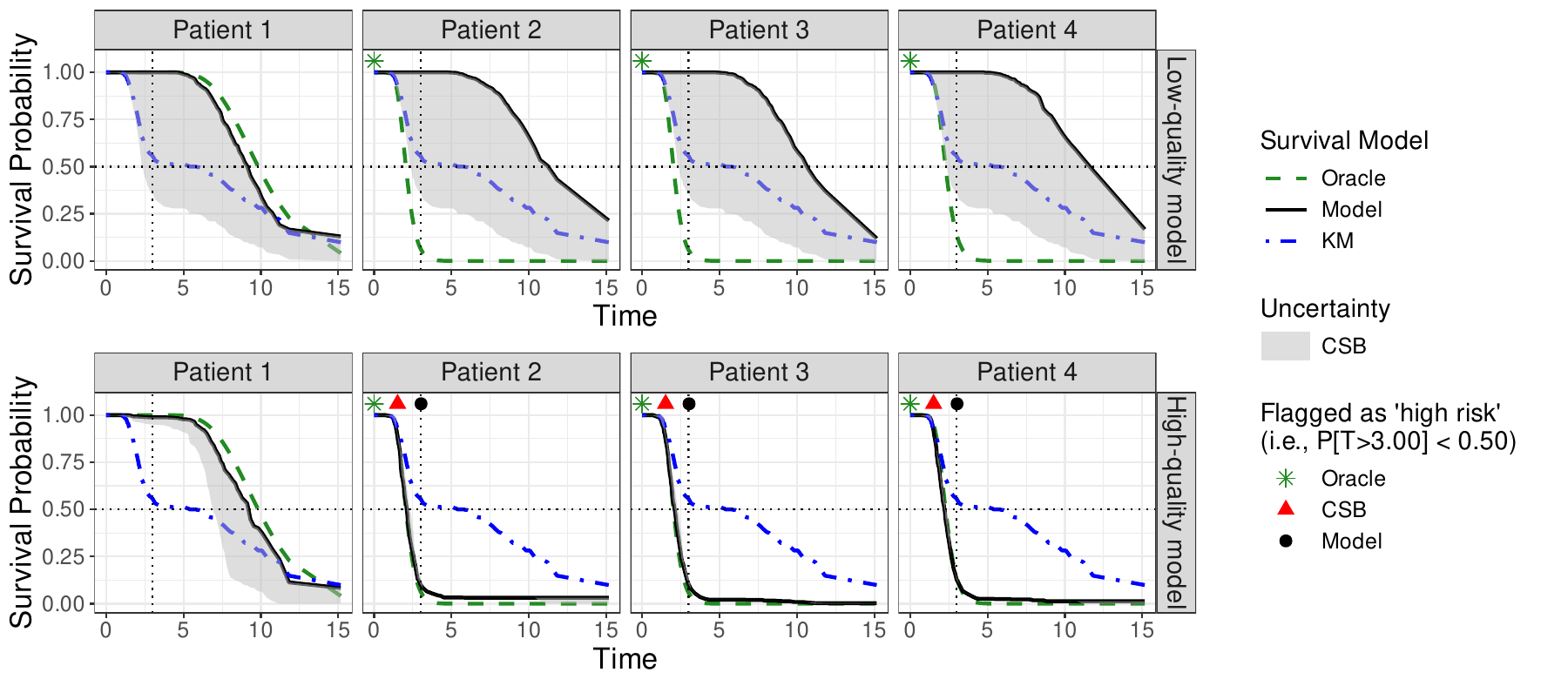}
    \caption{
Illustration of the use of conformal survival bands (shaded regions) for screening test patients in a simulated censored dataset under distribution shift, corresponding to the high-risk screening experiments discussed in Table~\ref{tab:setup2-hr}.
Solid black curves show survival estimates from either an inaccurate (top) or accurate (bottom) survival forest model, while dashed green curves represent the true survival probabilities.
The goal is to identify high-risk patients—those with less than 50\% probability (horizontal dotted line) of surviving beyond 3 years (vertical dotted line).
Other details are as in Figure~\ref{fig:calibration_band_example}.
}
    \label{fig:calibration_band_example-hr}
\end{figure}

\FloatBarrier
\section{Additional Details on Real Data Applications} \label{app:data}

We apply our method to seven datasets previously utilized by \citet{sesia2024doubly}: the Colon Cancer Chemotherapy (COLON) dataset; the German Breast Cancer Study Group (GBSG) dataset; the Stanford Heart Transplant Study (HEART); the Molecular Taxonomy of Breast Cancer International Consortium (METABRIC) dataset; the Primary Biliary Cirrhosis (PBC) dataset; the Diabetic Retinopathy Study (RETINOPATHY); and the Veterans' Administration Lung Cancer Trial (VALCT).
Table~\ref{tab:datasets_summary} provides details on the number of observations, covariates, and data sources.

The datasets were obtained from various publicly available sources. COLON, HEART, PBC, RETINOPATHY, and VALCT are included in the \texttt{survival} R package.
GBSG was sourced from GitHub: \url{https://github.com/jaredleekatzman/DeepSurv/}.
METABRIC was accessed via \url{https://www.cbioportal.org/study/summary?id=brca_metabric}.

Each dataset underwent a pre-processing pipeline to ensure consistency and prepare the data for analysis, as in \citet{sesia2024doubly}. Survival times equal to zero were replaced with half the smallest observed non-zero time. Missing values were imputed using the median for numeric variables and the mode for categorical variables. Factor variables were processed to merge rare levels (frequency below 2\%) into an ``Other'' category, while binary factors with one rare level were removed entirely. Dummy variables were created for all factors, and redundant features were identified and removed using an alias check. Additionally, highly correlated features (correlation above 0.75) were iteratively filtered.

\begin{table}[ht]
\caption{Summary of the publicly available survival analysis datasets used in Section~\ref{sec:application}. }
\label{tab:datasets_summary}
\centering
\begin{tabular}{@{}lcccc@{}}
\toprule
\textbf{Dataset} & \textbf{Obs.} & \textbf{Vars.} & \textbf{Source} & \textbf{Citation} \\ \midrule
COLON                  & 1858                        & 11                       & \texttt{survival}    & \cite{moertel1990levamisole} \\
GBSG                   & 2232                        & 6                        & github.com   & \cite{katzman2018deepsurv} \\
HEART                  & 172                         & 4                        & \texttt{survival}    & \cite{crowley1977covariance} \\
METABRIC               & 1981                        & 41                       & cbioportal.org        & \cite{curtis2012genomic} \\
PBC                    & 418                         & 17                       & \texttt{survival}    & \cite{therneau2000cox} \\
RETINOPATHY            & 394                         & 5                        & \texttt{survival}    & \cite{blair19805} \\ 
VALCT                  & 137                         & 6                        & \texttt{survival}    & \cite{kalbfleisch2002statistical} \\\bottomrule
\end{tabular}
\end{table}

\begin{table}[H]
\centering
\caption{Detailed screening results for low-risk selection using the \textit{grf} model, with threshold rule $P(T > t_1) > 0.80$ and $t_1$ set to the 0.1 quantile of observed times in each dataset. Shown are the screened proportion and bounds on the survival rate among selected patients, aggregated over 100 repetitions.}
\label{tab:data-detailed-lr}

\begin{tabular}{lccc}
\toprule
Method & Screened & Survival (lower bound) & Survival (upper bound)\\
\midrule
\addlinespace[0.3em]
\multicolumn{4}{l}{\textbf{COLON}}\\
\hspace{1em}Model & 0.921 ± 0.006 & \textcolor{darkgreen}{0.917 ± 0.004} & \textcolor{darkgreen}{0.920 ± 0.004}\\

\hspace{1em}KM & 1.000 ± 0.000 & \textcolor{darkgreen}{0.902 ± 0.004} & \textcolor{darkgreen}{0.905 ± 0.004}\\

\hspace{1em}CSB & 0.647 ± 0.121 & \textcolor{darkgreen}{0.939 ± 0.012} & \textcolor{darkgreen}{0.941 ± 0.012}\\

\addlinespace[0.3em]
\multicolumn{4}{l}{\textbf{GBSG}}\\
\hspace{1em}Model & 0.897 ± 0.006 & \textcolor{darkgreen}{0.921 ± 0.003} & \textcolor{darkgreen}{0.931 ± 0.003}\\

\hspace{1em}KM & 1.000 ± 0.000 & \textcolor{darkgreen}{0.901 ± 0.003} & \textcolor{darkgreen}{0.912 ± 0.003}\\

\hspace{1em}CSB & 0.905 ± 0.007 & \textcolor{darkgreen}{0.921 ± 0.003} & \textcolor{darkgreen}{0.930 ± 0.003}\\

\addlinespace[0.3em]
\multicolumn{4}{l}{\textbf{HEART}}\\
\hspace{1em}Model & 1.000 ± 0.000 & \textcolor{darkgreen}{0.930 ± 0.011} & \textcolor{darkgreen}{0.965 ± 0.009}\\

\hspace{1em}KM & 1.000 ± 0.000 & \textcolor{darkgreen}{0.930 ± 0.011} & \textcolor{darkgreen}{0.965 ± 0.009}\\

\hspace{1em}CSB & 0.919 ± 0.077 & \textcolor{darkgreen}{0.934 ± 0.012} & \textcolor{darkgreen}{0.965 ± 0.009}\\

\addlinespace[0.3em]
\multicolumn{4}{l}{\textbf{METABRIC}}\\
\hspace{1em}Model & 0.964 ± 0.004 & \textcolor{darkgreen}{0.911 ± 0.003} & \textcolor{darkgreen}{0.927 ± 0.004}\\

\hspace{1em}KM & 1.000 ± 0.000 & \textcolor{darkgreen}{0.905 ± 0.003} & \textcolor{darkgreen}{0.921 ± 0.003}\\

\hspace{1em}CSB & 0.970 ± 0.005 & \textcolor{darkgreen}{0.910 ± 0.004} & \textcolor{darkgreen}{0.926 ± 0.004}\\

\addlinespace[0.3em]
\multicolumn{4}{l}{\textbf{PBC}}\\
\hspace{1em}Model & 0.857 ± 0.012 & \textcolor{darkgreen}{0.961 ± 0.007} & \textcolor{darkgreen}{0.961 ± 0.007}\\

\hspace{1em}KM & 1.000 ± 0.000 & \textcolor{darkgreen}{0.904 ± 0.010} & \textcolor{darkgreen}{0.904 ± 0.010}\\

\hspace{1em}CSB & 0.809 ± 0.061 & \textcolor{darkgreen}{0.956 ± 0.009} & \textcolor{darkgreen}{0.956 ± 0.009}\\

\addlinespace[0.3em]
\multicolumn{4}{l}{\textbf{RETINOPATHY}}\\
\hspace{1em}Model & 0.982 ± 0.010 & \textcolor{darkgreen}{0.884 ± 0.008} & \textcolor{darkgreen}{0.903 ± 0.008}\\

\hspace{1em}KM & 1.000 ± 0.000 & \textcolor{darkgreen}{0.884 ± 0.008} & \textcolor{darkgreen}{0.904 ± 0.008}\\

\hspace{1em}CSB & 0.515 ± 0.137 & \textcolor{darkgreen}{0.934 ± 0.019} & \textcolor{darkgreen}{0.945 ± 0.016}\\

\addlinespace[0.3em]
\multicolumn{4}{l}{\textbf{VALCT}}\\
\hspace{1em}Model & 0.851 ± 0.028 & \textcolor{darkgreen}{0.914 ± 0.018} & \textcolor{darkgreen}{0.914 ± 0.018}\\

\hspace{1em}KM & 0.980 ± 0.040 & \textcolor{darkgreen}{0.905 ± 0.015} & \textcolor{darkgreen}{0.905 ± 0.015}\\

\hspace{1em}CSB & 0.805 ± 0.068 & \textcolor{darkgreen}{0.928 ± 0.019} & \textcolor{darkgreen}{0.928 ± 0.019}\\
\bottomrule
\end{tabular}

\end{table}

\begin{table}[H]
\centering
\caption{Detailed screening results for high-risk selection using the \textit{Cox} model, with threshold rule $P(T > t_1) < 0.80$ and $t_1$ set to the 0.1 quantile of observed times in each dataset. Shown are the screened proportion and bounds on the survival rate among selected patients, aggregated over 100 repetitions. In some cases, no patients are selected. }
\label{tab:data-detailed-hr}

\begin{tabular}{lccc}
\toprule
Method & Screened & Survival (lower bound) & Survival (upper bound)\\
\midrule
\addlinespace[0.3em]
\multicolumn{4}{l}{\textbf{COLON}}\\
\hspace{1em}Model & 0.069 ± 0.005 & \textcolor{darkgreen}{0.747 ± 0.030} & \textcolor{darkgreen}{0.747 ± 0.030}\\

\hspace{1em}KM & 0.000 ± 0.000 & NA & \vphantom{5} NA\\

\hspace{1em}CSB & 0.014 ± 0.008 & \textcolor{darkgreen}{0.166 ± 0.090} & \textcolor{darkgreen}{0.166 ± 0.090}\\

\addlinespace[0.3em]
\multicolumn{4}{l}{\textbf{GBSG}}\\
\hspace{1em}Model & 0.028 ± 0.002 & \textcolor{red}{0.849 ± 0.027} & \textcolor{red}{0.861 ± 0.024}\\

\hspace{1em}KM & 0.000 ± 0.000 & NA & \vphantom{4} NA\\
\hspace{1em}CSB & 0.000 ± 0.000 & NA & \vphantom{2} NA\\

\addlinespace[0.3em]
\multicolumn{4}{l}{\textbf{HEART}}\\
\hspace{1em}Model & 0.000 ± 0.000 & NA & NA\\

\hspace{1em}KM & 0.000 ± 0.000 & NA & \vphantom{3} NA\\
\hspace{1em}CSB & 0.000 ± 0.000 & NA & \vphantom{1} NA\\
\addlinespace[0.3em]
\multicolumn{4}{l}{\textbf{METABRIC}}\\
\hspace{1em}Model & 0.036 ± 0.003 & \textcolor{darkgreen}{0.749 ± 0.031} & \textcolor{darkgreen}{0.752 ± 0.031}\\

\hspace{1em}KM & 0.000 ± 0.000 & NA & \vphantom{2} NA\\
\hspace{1em}CSB & 0.000 ± 0.000 & NA & NA\\
\addlinespace[0.3em]
\multicolumn{4}{l}{\textbf{PBC}}\\
\hspace{1em}Model & 0.125 ± 0.013 & \textcolor{darkgreen}{0.597 ± 0.054} & \textcolor{darkgreen}{0.597 ± 0.054}\\

\hspace{1em}KM & 0.000 ± 0.000 & NA & \vphantom{1} NA\\
\hspace{1em}CSB & 0.023 ± 0.015 & \textcolor{darkgreen}{0.140 ± 0.085} & \textcolor{darkgreen}{0.140 ± 0.085}\\

\addlinespace[0.3em]
\multicolumn{4}{l}{\textbf{RETINOPATHY}}\\
\hspace{1em}Model & 0.012 ± 0.006 & \textcolor{darkgreen}{0.256 ± 0.115} & \textcolor{darkgreen}{0.266 ± 0.119}\\

\hspace{1em}KM & 0.000 ± 0.000 & NA & NA\\
\hspace{1em}CSB & 0.002 ± 0.002 & \textcolor{darkgreen}{0.075 ± 0.073} & \textcolor{darkgreen}{0.075 ± 0.073}\\

\addlinespace[0.3em]
\multicolumn{4}{l}{\textbf{VALCT}}\\
\hspace{1em}Model & 0.139 ± 0.018 & \textcolor{darkgreen}{0.698 ± 0.077} & \textcolor{darkgreen}{0.698 ± 0.077}\\

\hspace{1em}KM & 0.020 ± 0.040 & \textcolor{darkgreen}{0.018 ± 0.036} & \textcolor{darkgreen}{0.018 ± 0.036}\\

\hspace{1em}CSB & 0.001 ± 0.003 & \textcolor{darkgreen}{0.010 ± 0.020} & \textcolor{darkgreen}{0.010 ± 0.020}\\
\bottomrule
\end{tabular}

\end{table}

\end{document}